\DeclareMathOperator{\sech}{sech}
\newtheorem{prop}{Proposition}
\renewcommand\section{\@startsection {section}{1}{\z@}%
  {-4.6ex \@plus -0.4ex \@minus -0.2ex}{2.4ex \@plus0.2ex}%
  {\normalfont\bfseries}}
\renewcommand\subsection{\@startsection{subsection}{2}{\z@}%
  {-3.2ex\@plus -0.2ex \@minus -0.2ex}{1.8ex \@plus0.2ex}%
  {\normalsize\it}}
\renewcommand\paragraph{\@startsection{paragraph}{4}{\z@}%
  {2.0ex \@plus 0.2ex \@minus 0.2ex}{-1em}%
  {\normalfont\normalsize\bfseries}}
\def\@maketitle{\null\vskip2.4em%
  \begin{center} \let\footnote\thanks \vskip3.4em%
  {\Large\@title \par}\vskip2.4em%
  {\lineskip 0.5em\begin{tabular}[t]{c}\@author\end{tabular}\par}\vskip1em%
  {\@date}%
  \end{center}%
  \par\vskip3.4em}
\long\def\@makecaption#1#2{\vskip\abovecaptionskip
  \sbox\@tempboxa{\small #1: #2\par}%
  \ifdim \wd\@tempboxa >\hsize \small #1: #2\par
  \else \global \@minipagefalse \hb@xt@\hsize{\hfil\box\@tempboxa\hfil}%
  \fi
  \vskip\belowcaptionskip}
\def\Im{\mathop{\rm Im}\nolimits}
\def\Real{\mathbb{R}}
\def\Complex{\mathbb{C}}
\def\sech{\mathop{\rm sech}\nolimits}
\def\diag{\mathop{\rm diag}\nolimits}
\def\qed{\nopagebreak~{\small$\square$}\medbreak}
\let\le=\leqslant
\let\@=\mathbf
\let\eref=\eqref
\let\eqref=\undefined
\let\trueint=\int
\let\trueiint=\iint
\let\trueiiint=\iiint
\let\trueoint=\oint
\let\truesum=\sum
\let\trueprod=\prod
\def\@int#1{\mathchoice
{\@@int\displaystyle\textstyle{#1}}%
  {\@@int\textstyle\scriptstyle{#1}}%
  {\@@int\scriptstyle\scriptscriptstyle{#1}}%
  {\@@int\scriptscriptstyle\scriptscriptstyle{#1}}\!\int}
  \def\@@int#1#2#3{{\setbox0=\hbox{$#1{#2#3}{\textstyle\trueint}$}
  \kern0.4\wd0\vcenter{\hbox{$#2#3$}}\kern-0.666\wd0}}
\def\ddashint{\@int=}
\def\dashint{\@int-}
\def\int{\mathop{\textstyle\trueint}\limits}
\def\iint{\mathop{\textstyle\trueiint}\limits}
\def\iiint{\mathop{\textstyle\trueiiint}\limits}
\def\oint{\mathop{\textstyle\trueoint}\limits}
\def\oiint{\mathop{\circ\kern-1em\textstyle\trueint\kern-0.6em\trueint}\limits}
\def\sum{\mathop{\textstyle\truesum}\limits}
\def\prod{\mathop{\textstyle\trueprod}\limits}
\def\intinfty{\kern-0.2em\mathop{\textstyle\trueint}\limits_{\!-\infty\,}^{\,\,\infty\!}\kern-0.25em}
\def\txtintinfty{\kern-0.2em\mathop{\textstyle\trueint}\nolimits_{\!-\infty\,}^{\,\,\infty\!}\kern-0.25em}
\def\iintinfty{\mathop{\textstyle\trueiint}\limits_{\!\!-\infty\,\,}^{\,\,\infty\!}\kern-0.10em}
\def\iiiintinfty{\mathop{\textstyle\trueiiint}\limits_{\!\!-\infty\,\,}^{\,\,\infty\!}\kern-0.20em}
\newtheorem{theorem}{Theorem}[section]
\newtheorem{lemma}[theorem]{Lemma}
\newtheorem{proposition}[theorem]{Proposition}
\newtheorem{corollary}[theorem]{Corollary}
\newtheorem{remark}[theorem]{Remark}
\def\[{\begin{equation}}
\def\]{\end{equation}}
\def\be{\begin{equation}}
\def\ee{\end{equation}}
\def\bse{\begin{subequations}}
\def\ese{\end{subequations}}
\def\bea{\begin{eqnarray}}
\def\eea{\end{eqnarray}}
\newdimen\figwdlt
\newdimen\figwdrt
\title{Soliton interactions and Yang-Baxter maps for the complex coupled short-pulse equation}
\author[1]{Vincent Caudrelier}
\author[2]{Aikaterini Gkogkou}
\author[2]{Barbara Prinari\footnote{Corresponding author: bprinari@buffalo.edu}}
\affil[1]{School of Mathematics, University of Leeds, Leeds, LS2 9JT, UK}
\affil[2]{Department of Mathematics, University at Buffalo, Buffalo, NY 14260, USA}
\date{ }
\begin{document}

\maketitle

\begin{abstract}

The complex coupled short pulse equation (ccSPE) describes the propagation of ultra-short optical pulses in nonlinear birefringent fibers. The system admits a variety of vector soliton solutions: fundamental solitons, fundamental breathers, composite breathers (generic or non-generic), as well as so-called self-symmetric composite solitons.
In this work, we use the dressing method and the Darboux matrices corresponding to the
various types of solitons to investigate soliton interactions in the focusing ccSPE. The study combines
refactorization problems on generators of certain rational loop groups, and
long-time asymptotics of these generators, as well as the main refactorization theorem for the dressing factors which leads to the
Yang-Baxter property for the refactorization map and the vector soliton interactions. Among the results obtained in this paper, we derive explicit formulas for the polarization shift of fundamental solitons which are the analog of the well-known formulas for the interaction of vector solitons in the Manakov system. Our study also reveals that upon interacting with a fundamental breather, a fundamental soliton becomes a fundamental breather and, conversely, that the interaction of two fundamental breathers generically yields two fundamental breathers with a polarization shifts, but may also result into a fundamental soliton and a fundamental breather. Explicit formulas for the coefficients that characterize the fundamental breathers, as well as for their polarization vectors are obtained. The interactions of other types of solitons are also derived and discussed in detail and illustrated with plots. New Yang-Baxter maps are obtained  in the process.
\end{abstract}

\section{Introduction}

Mathematical models of nonlinear wave propagation can often be reduced to a class of nonlinear partial differential equations known
as integrable systems. One of the most widely studied integrable systems is the nonlinear Schr\"odinger (NLS)
equation, which in the last 50 years has been shown to be a universal model for weakly dispersive nonlinear wave trains, with physical applications
ranging from deep water waves, plasma physics and nonlinear optics, to magneto-static spin waves, low temperature physics and Bose-Einstein condensation.
On the other hand, the propagation of ultra-short optical pulses (width$\sim 10^{-15} s$ and much smaller than the carrier frequency) in nonlinear media is better described by the so-called ``complex short-pulse'' equation (cSPE):
\begin{equation}\label{e:1.1}
u_{xt}=u+ \frac{\sigma}{2}(|u|^2u_x)_x\,, \qquad \sigma=\pm 1
\end{equation}
where $u=u(x,t)$ is a complex function representing the electric field associated to the propagating optical pulse. The cSPE was introduced relatively recently in \cite{Baofeng}, and like NLS, the sign of $\sigma$ distinguishes the two dispersion regimes ($\sigma=1$ corresponding to the anomalous dispersion regime, or focusing cSPE, and $\sigma=-1$ to normal dispersion, or defocusing cSPE). If one restricts $u(x,t)$ to be a real function (representing, in this case, the magnitude of the electric field), the above equation reduces to the (real) short-pulse equation (SPE), which was originally introduced in the context of differential geometry \cite{surfaces}, and was later derived as a model for the propagation of ultra-short pulses in nonlinear silica optics \cite{Schaefer}. Equations of short-pulse type:
$$
Q_{x\tau}=4iQ-2i(RQQ_x)_x, \quad R_{x\tau}=4iR-2i(QRR_x)_x
$$
were obtained in the earlier works \cite{ZQ0,ZQ01} through the negative Wadati-Konno-Ichikawa flow \cite{WKI,ZQ1,ZQ2,ZQ3}. These equations reduce to \eref{e:1.1} for $R=-\sigma Q^*$ but with a complex time $t=4i \tau$.

A key feature of the SPE and the cSPE is that, in addition to standard smooth solitons, both admit loop soliton solutions, which are not single-valued, and ``cuspons'', and also solutions that oscillate between single- and multi-valued states. For applications to birefringent fibers, two orthogonally polarized modes have to be considered, and in analogy to the Manakov system \cite{Manakov}, which is the extension of the NLS equation to 2-components, several generalizations of the SPE were proposed in the literature for the propagation of polarized ultra-short pulse in anisotropic media. While there is a sizeable amount of literature on the SPE, on its two- and multi-component generalizations and discretization (see \cite{ZQ0,ZQ01,ZQ1,ZQ2,ZQ3,Sak1,Sak2,Mat1,
Mat2,KKV,Kuetche,Parkes2008,Gambino,Pelin2,
Coclite,Pelin,Brunelli1,Brunelli2,BdM,Pietr,
Mat3,Feng2012,Kofane14,Qiao17,Hassan,
FMOmultiSP}), the study of the cSPE and of its vector version, the complex coupled SPE (ccSPE) also introduced in \cite{Baofeng}, namely:
\begin{equation}
\label{e:ccSP}
\mathbf{u}_{xt}=\mathbf{u}+\frac{\sigma }{2}(||\mathbf{u}||^2\mathbf{u}_x)_x\,, \qquad \mathbf{u}=(u_1,u_2)^T\,, \qquad \sigma=\pm 1\,,
\end{equation}
where $\mathbf{u}(x,t)$ is a two-component complex vector function and $\sigma$ again distinguishes between the focusing and defocusing equations, is obviously much more recent and less extensive.
Like NLS and the Manakov system, the defocusing cSPE and ccSPE only admit dark solitons, i.e., solitons on a non-zero background. Soliton solutions for the focusing cSPE equation have been constructed in \cite{Baofeng,FengShen_ComplexSPE,Xujian,LFZPhysD,Mee,Qilao}, and dark soliton solutions of the defocusing cSPE have been obtained in \cite{FLZ2021,FenglingzhuPRE}. The inverse scattering transform (IST) to solve the initial-value problem for the focusing cSPE equation was developed in \cite{PTF}, and the long-time asymptotic behavior was analyzed in \cite{FanJDE}. As to the focusing ccSPE, several types of solutions were presented in \cite{Zhu18,Guo18,Optik,FL22}, and the IST was developed in \cite{ABBA}.

The main goal of this work is to study interactions of vector solitons of the focusing ccSPE. It is known \cite{APT2,CZ} that the interactions between solitons in the Manakov model, or more generally vector NLS, give rise to maps on their polarization vectors which provide solutions of the set-theoretical Yang-Baxter equation \cite{Drin}. In this context and also in the context of discrete integrable systems, such maps are known as Yang-Baxter maps \cite{Ves}. They arise in a much larger variety of contexts, and we refer the interested reader to \cite{DS} for an overview of various key areas where the set-theoretical Yang-Baxter equation (and its companion, the set-theoretical reflection equation \cite{CZ,CCZ}) can arise. From the point of view of soliton dynamics, such maps ensure that multicomponent soliton interactions are elastic and that the scattering of a multisoliton solution factorizes {\it consistently} into a succession of two-soliton interactions. This is a well-known key feature of scalar solitons, but it is more intricate to derive in the multicomponent case. Nevertheless, the interplay between multicomponent integrable equations and Yang-Baxter maps is well documented and finds its roots in the refactorization properties appearing in the underlying dressing method \cite{ZS}. This was used extensively e.g. in \cite{CZ}.
In this paper, we investigate Yang-Baxter maps for the focusing ccSPE, and use them to unravel the nature of the corresponding soliton interactions.
An essential new feature compared to the vector NLS case is the variety of possible one-soliton solution that the model admits: fundamental solitons, fundamental breathers, composite breathers (generic or non-generic), as well as so-called self-symmetric solitons. In a first instance, by considering the interaction of two fundamental solitons,  we derive a formula analogous to Manakov's result for the polarization shift of interacting vector NLS solitons. This gives a first example of the Yang-Baxter maps involved in ccSPE. To get the full picture, we take advantage of the ideas illustrated above, and classify the possible dressing factors creating those various types of solitons. We then derive the ``master'' Yang-Baxter map arising from the refactorization of the most general elementary dressing factors. Combining this with a long-time asymptotic analysis of all the possible two-soliton solutions yields the various maps on the polarizations of the solitons. All of them enjoy the Yang-Baxter property, being derived from the ``master'' Yang-Baxter map, but take on different explicit forms.

For the rest of this work, we restrict our attention to the focusing ccSPE (so we assume $\sigma =1$, and simply refer to Eq.~\eref{e:ccSP} with $\sigma=1$ as the ccSPE), and consider solutions that are rapidly decaying as $|x|\rightarrow \infty$.
The structure of the paper is as follows. In Sec.~2 we give a brief overview of the IST for the ccSPE as developed in \cite{ABBA}, and of its one-soliton solutions, which include fundamental solitons, fundamental breathers, and composite breathers, depending on the rank and structure of the norming constant associated to the soliton. We also discuss in detail the case of self-symmetric discrete eigenvalues, and derive the explicit expression of a self-symmetric soliton. In Sec.~3 we discuss the reductions of the ccSPE to the case of real solutions. In Sec.~4 we provide the explicit expressions of the (matrix) transmission coefficients corresponding to a 1-fundamental soliton solution, a 1-fundamental breather solution, and a 1-self-symmetric soliton solution. In Sec.~5 we use Manakov's method \cite{Manakov} to investigate the pairwise interactions of two fundamental solitons, and also the interaction of self-symmetric solitons. Sec.~6 reviews the main idea of the dressing method and the notion of dressing factors (or Darboux-B\"acklund matrices), as well as the main refactorization theorem for such dressing factors which leads to the Yang-Baxter property for the refactorization map. It also contains the classification of the elementary dressing factors necessary to build the three types of solitons in the ccSPE, as well as their various degenerations. Finally, the long-time analysis of various two-soliton solutions leads to the derivation of the various Yang-Baxter maps on the polarization vectors of the solitons. Our study reveals that upon interacting with a fundamental breather, a fundamental soliton becomes a fundamental breather and, conversely, that the interaction of two fundamental breathers generically yields two fundamental breathers with a polarization shifts, but may also result into a fundamental soliton and a fundamental breather. Explicit formulas for the coefficients that characterize the fundamental breathers, as well as for their polarization vectors are obtained. The interactions of other types of solitons are also discussed in detail and illustrated with plots.
Finally, Appendices A-E provide more technical details regarding the derivation of the explicit expression of the analytic scattering coefficients in various cases, hodograph transformation and exact two-soliton solutions.

\section{Overview of the IST and one-soliton solutions}

Below, we give a succinct overview of the IST for the ccSPE as developed in \cite{ABBA}. The ccSPE \eref{e:ccSP} with $\sigma=1$ possess the following Lax pair:
\begin{subequations}\label{e:1.3b}
\begin{gather}
\Phi_x=X \Phi=\begin{pmatrix} -ikI_2 & k U_x \\ -k V_x & ikI_2\end{pmatrix}\Phi \,, \\
\Phi_t=T \Phi=\begin{pmatrix} \frac{i}{4k}I_2-\frac{i}{2}kUV  & -\frac{i}{2}U+\frac{1}{2}k UVU_x \\  -\frac{i}{2}V-\frac{1}{2}k VUV_x &
-\frac{i}{4k}I_2+\frac{i}{2}k VU \end{pmatrix}\Phi \,,
\end{gather}
\end{subequations}
where the matrices $U$ and $V$ are given by
\begin{gather}
\label{e:U}
U=\begin{pmatrix} -iu_1 & -iu_2 \\ iu_2^* &-iu_1^* \end{pmatrix}\,, \qquad V=U^\dagger\,,
\end{gather}
and $I_n$ denotes the $n\times n$ identity matrix. In \cite{ABBA}, the gauge transformation
\begin{gather}
\label{e:gauge1}
\hat{\Phi}(x,t,k) = P^\dagger (x,t) \Phi(x,t,k),
\end{gather}
with $P$ chosen so that it diagonalizes the matrix $iX/k$, namely
\begin{subequations}
\begin{gather}
P = p \begin{pmatrix}
I_2 & -\alpha\\
\alpha^\dagger & I_2
\end{pmatrix}, \quad p^2 = \frac{1+q}{2q}, \quad q=\sqrt{1 + ||u_x||^2}, \quad \alpha = \frac{i U_x}{1+q}\,,
\end{gather}
\end{subequations}
was used to control the behavior of the eigenfunctions at $k=0$ and $k=\infty$.
Indeed, with such a choice for $P$, the gauge transformation \eref{e:gauge1} reduces the Lax pair \eref{e:1.3b} to
\begin{gather}
\label{e:4new}
\hat{\Phi}_x + Q_x \hat{\Phi} = \hat{X} \hat{\Phi}, \quad \hat{\Phi}_t + Q_t \hat{\Phi} = \hat{T} \hat{\Phi},
\end{gather}
where
\begin{subequations}
\begin{gather}
Q_x=ikq\Sigma_3 \,, \qquad Q_t=\left(\frac{1}{4ik}+\frac{i}{2}kq ||\mathbf{u}||^2\right)\Sigma_3\,, \qquad \Sigma_3=\diag(I_2, -I_2)\,,
\label{e:QxQt}\\
\hat{X} =\begin{pmatrix}
\frac{1}{2(1+q)} q_xI_2-\frac{1}{2q(1+q)}U_xV_{xx} & \frac{i}{2q}U_{xx}-\frac{i q_x}{2q(1+q)}U_x \\
\frac{i}{2q}V_{xx}-\frac{i q_x}{2q(1+q)}V_x  & -\frac{1}{2(1+q)} q_x I_2+\frac{1}{2q(1+q)}V_{xx} U_x
\end{pmatrix}\,, \\
\hat{T}=\frac{i}{4kq}\begin{pmatrix} (1-q)I_2 & -iU_x \\ iV_x & -(1-q)I_2\end{pmatrix}+ \\
+\frac{p^2}{2} \begin{pmatrix} (\alpha_t\alpha^\dagger-\alpha\alpha^\dagger_t) -i(U\alpha^\dagger+\alpha V) & 2\alpha_t -iU+i\alpha V\alpha
\\ -2\alpha_t^\dagger-iV +i\alpha^\dagger U\alpha^\dagger & (\alpha_t^\dagger \alpha -\alpha^\dagger \alpha_t) +i(\alpha^\dagger U+V\alpha) \end{pmatrix}\,.
\notag
\end{gather}
\end{subequations}
Eqs.~\eref{e:QxQt} can be integrated explicitly, giving $Q(x,t)=i\theta(x,t,k)\Sigma_3$ where
\begin{gather}
\label{e:theta,xi}
\theta (x,t,k) = k \xi(x,t) - t/4k\,, \qquad \xi (x,t) = x - \bigint_{x}^{\infty} \left( \sqrt{1+||u_y||^2}-1\right) dy\,.
\end{gather}
Then, under the assumption $U(x,t) \to 0$ sufficiently rapidly as $x \to \pm \infty$, one can show that $\hat{X},\hat{T}\to 0$ in this limits and hence
define the Jost eigenfunctions
\begin{gather}
\hat{\Phi}_\pm(x,t,k) = \left( \hat{\Phi}_{\pm,1}(x,t,k), \hat{\Phi}_{\pm,2}(x,t,k) \right) \sim I_4 e^{-i \theta(x,t,k) \Sigma_3}, \quad x \to \pm \infty,
\end{gather}
as simultaneous solutions of the Lax pair \eref{e:4new}. It is convenient to consider modified eigenfunctions with constant asymptotic behavior
\begin{gather}
\label{e:defM}
M_{\pm}(x,t,k) = \left( M_{\pm,1}(x,t,k), M_{\pm,2}(x,t,k)\right) = \hat{\Phi}_{\pm}(x,t,k) e^{i \theta(x,t,k) \Sigma_3} \sim I_4, \quad x \to \pm \infty,
\end{gather}
and one can prove that the $4 \times 2$ columns $M_{-,1}, M_{+,2}$ are analytic for $k\in \mathbb{C}^+$ and continuous for $k \in \mathbb{R}$, and the columns $M_{+,1}, M_{-,2}$ are analytic for $k\in \mathbb{C}^-$ and continuous for $k \in \mathbb{R}$. Since $\hat{\Phi}_{+}$ and $\hat{\Phi}_{-}$ are two fundamental solutions of the Lax pair for any $k \in \mathbb{R}$, one can define a $4 \times 4$ matrix $S(k)$ (independent of $x,t$) such that
\begin{gather}\label{e:8new}
\hat{\Phi}_{-}(x,t,k) = \hat{\Phi}_{+}(x,t,k) S(k)\,, \quad S(k) = \begin{pmatrix}
a(k) & \bar{b}(k)\\
b(k) & \bar{a}(k)
\end{pmatrix}\,, \quad k \in \mathbb{R},
\end{gather}
whose $2 \times 2$ blocks are such that $a(k)$ (respectively, $\bar{a}(k)$) is analytic in $\mathbb{C}^+$ (respectively, in $\mathbb{C}^-$) and continuous for $k \in \mathbb{R}$, while $b(k), \bar{b}(k)$ are in general only defined for $k \in \mathbb{R}$.
Equation \eref{e:8new} for $k \in \mathbb{R}$ can be written as
\begin{subequations}
\label{e:28}
\begin{align}
M_{-,1}(x,t,k)a^{-1}(k) = M_{+,1}(x,t,k) + M_{+,2}(x,t,k) e^{2i\theta(x,t,k)}\rho(k)\,,\label{e:28a}\\
M_{-,2}(x,t,k)\bar{a}^{-1}(k)= M_{+,2}(x,t,k) + M_{+,1}(x,t,k) e^{-2i\theta(x,t,k)} \bar{\rho}(k)\,,\label{e:28b}
\end{align}
\end{subequations}
where the functions $M_{-,1}(x,t,k)a^{-1}(k)$ and $M_{-,2}(x,t,k)\bar{a}^{-1}(k)$ are meromorphic in the upper/lower half $k$-plane respectively, and
\begin{gather}
\label{e:reflection}
\rho(k) = b(k)a^{-1}(k)\,, \qquad \bar{\rho}(k) = \bar{b}(k) \bar{a}^{-1}(k)\, \qquad k \in \mathbb{R},
\end{gather}
are the (matrix) reflection coefficients.

For future reference, we note that one can also express the columns of $\hat{\Phi}_{+}$ in terms of the columns of $\hat{\Phi}_{-}$ as
\begin{gather}
\label{e:linearindependence3}
\hat{\Phi}_{+}(x,t,k) = \hat{\Phi}_{-}(x,t,k)S^{-1}(k)\,, \quad
S^{-1}(k)=\begin{pmatrix}
\bar{c}(k) & d(k)\\
\bar{d}(k) & c(k)
\end{pmatrix}\,, \quad
k \in \mathbb{R},
\end{gather}
where $c, d, \bar{c}, \bar{d}$ are $2 \times 2$ matrix functions of $k$, and \eref{e:linearindependence3} can be written in terms of the analytic groups of columns of the modified eigenfunctions as
\begin{subequations}\label{e:7new}
\begin{gather}
M_{+,1}(x,t,k)\bar{c}^{-1}(k) = M_{-,1}(x,t,k) + M_{-,2}(x,t,k) e^{2i\theta(x,t,k)}\bar{r}(k)\,,\label{e:7newa}\\
M_{+,2}(x,t,k)c^{-1}(k)= M_{-,2}(x,t,k) + M_{-,1}(x,t,k) e^{-2i\theta(x,t,k)}r(k)\,,\label{e:7newb}
\end{gather}
\end{subequations}
where $r(k) = d(k)c^{-1}(k)$ and $\bar{r}(k) = \bar{d}(k)\bar{c}^{-1}(k)$ are the (matrix) reflection coefficients from the right defined for system \eref{e:linearindependence3}.
For future convenience, we refer to $a(k),\bar{a}(k)$ as the (inverses) of the ``left'' transmission coefficients, and to $c(k),\bar{c}(k)$ as the (inverses) of the ``right'' transmission coefficients.

The Lax pair \eref{e:1.3b} admits two symmetries, $k \to k^*$ and $k \to -k^*$, which induce corresponding symmetries in the scattering data.
Specifically, the first symmetry implies
\begin{subequations}
\label{e:symm_data}
\begin{gather}
\bar{\rho}(k)=-\rho^\dagger(k)\,, \quad k\in \Real\,, \qquad \det \bar{a}(k)=\det a^\dagger(k^*)\,, \quad k\in \Complex^-\,,
\end{gather}
and the second symmetry gives
\begin{gather}
a^*(-k^*)=\sigma_2 a(k)\sigma_2\,, \quad k\in \Complex^+\,, \qquad \bar{a}^*(-k^*)=\sigma_2\bar{a}(k)\sigma_2\,, \quad k\in \Complex^-\,,\\
\rho^*(-k)=\sigma_2\rho(k)\sigma_2\,, \qquad \bar{\rho}^*(-k)=\sigma_2\bar{\rho}(k)\sigma_2\,, \qquad k\in \Real\,,
\end{gather}
\end{subequations}
where $\sigma_2 = \begin{pmatrix}
0 & -i\\
i & 0
\end{pmatrix}$ is the second Pauli matrix
(see \cite{ABBA} for details). The discrete spectrum consists of the values of $k \in \mathbb{C}/\mathbb{R}$, for which the scattering problem admits eigenfunctions in $L^2(\mathbb{R})$, and discrete eigenvalues appear in symmetric quartets:
\begin{gather}
Z = \{ k_n, -k_n^*, -k_n, k_n^*\}_{n=1}^N,
\end{gather}
where, for each $n$, $k_n, -k_n^*$ are the zeros of $\det a(k)$ and coincide with the values of $k\in \mathbb{C}^+$ where $\hat{\Phi}_{-,1}$ and $\hat{\Phi}_{+,2}$ become linearly dependent, and $-k_n, k_n^*$ are the zeros of $\det \bar{a}(k)$, which coincide with the values of $k\in \mathbb{C}^-$ where $\hat{\Phi}_{+,1}$ and $\hat{\Phi}_{-,2}$ become linearly dependent. Moreover:
\begin{itemize}
\item[1.] if $\mathrm{rank}\,\,a(k_n)\equiv \mathrm{rank}\,\,a(-k_n^*)=1$ and $\mathrm{rank}\,\,\bar{a}(-k_n)\equiv \mathrm{rank}\,\,\bar{a}(k_n^*)=1$, then the zeros of $\det a(k)$ in $\mathbb{C}^{+}$ and the zeros of $\det \bar{a}(k)$ in $\mathbb{C}^{-}$ are simple;
\item[2.] if $a(k_n)\equiv a(-k_n^*)=0_{2 \times 2}$ and $\bar{a}(-k_n)\equiv \bar{a}(k_n^*)=0_{2 \times 2}$, then the zeros of $\det a(k)$ in $\mathbb{C}^{+}$ and the zeros of $\det \bar{a}(k)$ in $\mathbb{C}^{-} $ are double.
\end{itemize}
In both cases, it is shown in \cite {ABBA} that the points $k_n,-k_n^*$ (resp., $k_n^*,-k_n$) are simple poles for the function $M_{-,1} a^{-1}$ (resp., $M_{-,2} \bar{a}^{-1}$) in $\mathbb{C}^{+}$ (resp., $\mathbb{C}^{-}$), and one can define the corresponding residues as follows
\begin{subequations}
\label{e:res}
\begin{gather}
\mathrm{Res}_{k=k_n}[M_{-,1}(x,t,k) a^{-1}(k)] = e^{2i(k_n \xi-t/4k_n)}M_{+,2}(x,t,k_n)C_n,\\
\mathrm{Res}_{k=-k_n^*}[M_{-,1}(x,t,k) a^{-1}(k)] = e^{2i(-k_n^* \xi+t/4k_n^*)}M_{+,2}(x,t,-k_n^*)\tilde{C}_n,\\
\mathrm{Res}_{k=k_n^*}[M_{-,2}(x,t,k) \bar{a}^{-1}(k)] = e^{-2i(k_n^* \xi-t/4k_n^*)}M_{+,1}(x,t,k_n^*)\bar{C}_n,\\
\mathrm{Res}_{k=-k_n}[M_{-,2}(x,t,k) \bar{a}^{-1}(k)] = e^{-2i(-k_n \xi+t/4k_n)}M_{+,1}(x,t,-k_n)\bar{\tilde{C}}_n,
\end{gather}
\end{subequations}
where $C_n$ is the $2\times2$ norming constant associated to the discrete eigenvalue $k_n$, and
\begin{gather}
\label{e:eq3}
\bar{C}_n = - C_n^\dagger, \quad \tilde{C}_n = - \sigma_2 C_n^* \sigma_2, \quad \bar{\tilde{C}}_n = - \sigma_2 \bar{C}_n^* \sigma_2\,.
\end{gather}
In the first case, i.e., when $a(k),\bar{a}(k)$ evaluated at the discrete eigenvalues are rank-1 matrices, the norming constants are rank-one matrices themselves; in the second case, the norming constants can be either full-rank or rank-one matrices. The above results were established in \cite{ABBA}, under the implicit assumption that the matrices $a$ and $\bar{a}$ have equal ranks. In Appendix~\ref{appendix:ranks}, we prove that $a(k)$ and $\bar{a}(k)$ necessarily have the same rank at each of the eigenvalues of a given quartet, namely $\mathrm{rank}\,\, a(k_n)=\mathrm{rank}\,\, a(-k_n^*)=\mathrm{rank}\,\, \bar{a}(k_n^*)=\mathrm{rank}\,\, \bar{a}(-k_n)$ for each $n$.

The starting point of the formulation of the inverse problem is equation \eref{e:28}, regarded as the jump condition across the real $k$-axis between the eigenfunctions that are meromorphic in $\mathbb{C}^+$, and those that are meromorphic in $\mathbb{C}^-$. Specifically, one introduces the sectionally meromorphic matrix function:
\[\mu_{\pm}(x(\xi,t),t,k)= \begin{cases}\label{e:jumpcondition}
      \big( M_{-,1}(x(\xi,t),t,k)a^{-1}(k)\,\,\, M_{+,2}(x(\xi,t),t,k)\big) \qquad k \in \Complex^{+}\\
     \big( M_{+,1}(x(\xi,t),t,k)\,\,\, M_{-,2}(x(\xi,t),t,k) \bar{a}^{-1}(k)\big) \qquad k \in \Complex^{-}
   \end{cases}
\]
and then defines
\begin{gather}\label{e:sectionfunction}
\hat{\mu}_{\pm}(\xi,t,k) = \mu_{\pm}\left( x(\xi,t), t,k\right), \qquad
\breve{\mu}_{\pm}(\xi,t,k)  = \hat{\mu}^{-1}_{\infty}(\xi,t)\hat{\mu}_{\pm}(\xi,t,k), \quad
\end{gather}
where $\hat{\ }$ denotes functions in which the $x$-dependence has been replaced by a $\xi$-dependence, and $\hat{\mu}_\infty(\xi,t)=\lim_{|k|\rightarrow \infty}\hat{\mu}_\pm(\xi,t,k)$ (see \cite{ABBA} for details). Using Eqs. \eref{e:sectionfunction}, \eref{e:jumpcondition}, one can write \eref{e:28} as
\begin{gather}\label{e:14new}
\breve{\mu}_{+} = \breve{\mu}_{-} [I_4 - \hat{J}], \quad \hat{J} = e^{-i \hat{\theta} \Sigma_3} J(k) e^{i \hat{\theta} \Sigma_3}, \quad J(k) = \begin{pmatrix}
\bar{\rho}(k) \rho(k) & \bar{\rho}(k)\\
-\rho(k) & 0_{2 \times 2}
\end{pmatrix},
\end{gather}
where $\hat{\theta} = k \xi - t/4k$, for all $k \in \mathbb{R}$, which, supplemented with the normalization condition
\begin{gather}\label{e:normalizationcondition}
\breve{\mu}_{\pm} = I_4, \quad k \to \infty,
\end{gather}
defines a Riemann Hilbert problem (RHP) with poles across the real $k$-axis. The formal solution of the RHP is then given by the system
\begin{align}
\label{e:RHP5}
\breve{\mu}(\xi,t,k) = I_4 + \mathlarger{\mathlarger{\mathlarger{\sum}}}_{n=1}^{\mathrm{\it{N}}} \dfrac{\mathrm{Res}_{k=k_n}\breve{\mu}_{+}(\xi,t,k)}{k-k_n}\notag\\
+ \mathlarger{\mathlarger{\mathlarger{\sum}}}_{n=1}^{\mathrm{\it{N}}} \dfrac{\mathrm{Res}_{k=-k_n^*}\breve{\mu}_{+}(\xi,t,k)}{k+k_n^*} + \mathlarger{\mathlarger{\mathlarger{\sum}}}_{n=1}^{\mathrm{\it{N}}} \dfrac{\mathrm{Res}_{k=k_n^*}\breve{\mu}_{-}(\xi,t,k)}{k-k_n^*}\notag\\+ \mathlarger{\mathlarger{\mathlarger{\sum}}}_{n=1}^{\mathrm{\it{N}}} \dfrac{\mathrm{Res}_{k=-k_n}\breve{\mu}_{-}(\xi,t,k)}{k+k_n}
-\frac{1}{2\pi i} \bigints_{\mathbb{R}}\dfrac{\breve{\mu}_{-}(\xi,t,\zeta)\hat{J}(\xi,t,\zeta)}{\zeta-(k \pm i0)}d\zeta,
\end{align}
which is closed using the residue conditions
\begin{subequations}\label{e:92}
\begin{gather}
\mathrm{Res}_{k=k_n} \breve{\mu}_{+,1}(\xi,t,k) = e^{2i \hat{\theta}(\xi,t,k_n)} \breve{\mu}_{+,2}(\xi,t,k_n) C_n,\,\label{e:92a}\\
\mathrm{Res}_{k=-k_n^*} \breve{\mu}_{+,1}(\xi,t,k) = e^{2i \hat{\theta}(\xi,t,-k_n^*)} \breve{\mu}_{+,2}(\xi,t,-k_n^*) \tilde{C}_n,\,\label{e:92b}\\
\mathrm{Res}_{k=k_n^*} \breve{\mu}_{-,2}(\xi,t,k) = e^{-2i \hat{\theta}(\xi,t,k_n^*)} \breve{\mu}_{-,1}(\xi,t,k_n^*) \bar{C}_n,\,\label{e:92c}\\
\mathrm{Res}_{k=-k_n} \breve{\mu}_{-,2}(\xi,t,k) = e^{-2i \hat{\theta}(\xi,t,-k_n)} \breve{\mu}_{-,1}(\xi,t,-k_n) \bar{\tilde{C}}_n\,,\label{e:92d}
\end{gather}
\end{subequations}
for each $n=1,2,...,\mathrm{\it{N}}$.

The last step of the inverse problem amounts to reconstructing the solution of the ccSPE from the solution $\breve{\mu}_{\pm}$ of the RHP. Specifically, as shown in \cite{ABBA}, the reconstruction formula is given by:
\begin{subequations}
\label{e:reconstruction}
\begin{gather}
\label{e:u}
\mathbf{u}(\xi,t) = \lim_{k\to 0}\frac{i}{k} \left(\left(\breve{\mu}^{-1}(\xi,t,0)\breve{\mu}(\xi,t,k)\right)_{1,3}, \left( \breve{\mu}^{-1}(\xi,t,0)\breve{\mu}(\xi,t,k)\right)_{1,4} \right)^{T},\\
x-\xi = \lim_{k\to 0}\frac{i}{k}\left[\left( \breve{\mu}^{-1}(\xi,t,0)\breve{\mu}(\xi,t,k)\right)_{1,1}-1\right]\,,\label{e:x}
\end{gather}
\end{subequations}
where \eref{e:x} expresses the original variable $x$ in terms of the travel-time parameter $\xi$. Whenever $x=x(\xi)$ in \eref{e:x} is monotonic, so that for each $x$ there is a unique $\xi$ such that $\xi=\xi(x)$, then \eref{e:u} and \eref{e:x} can be used to obtain the solution of the ccSPE in the original physical variables, namely $\mathbf{u}(x,t)$.

Pure solitons can be obtained by setting $\rho(k) = \bar{\rho}(k) = 0_{2 \times 2}$, for all $k \in \mathbb{R}$ in Eq.~\eref{e:RHP5}. In this case, the system reduces to the following linear algebraic system for the eigenfunctions
\begin{gather}
\label{eq28}
\breve{\mu}(\xi,t,k) = I_4 + \mathlarger{\mathlarger{\mathlarger{\sum}}}_{n=1}^{\mathrm{\it{N}}} \dfrac{\mathrm{Res}_{k=k_n}\breve{\mu}_{+}(\xi,t,k)}{k-k_n}\notag\\
+ \mathlarger{\mathlarger{\mathlarger{\sum}}}_{n=1}^{\mathrm{\it{N}}} \dfrac{\mathrm{Res}_{k=-k_n^*}\breve{\mu}_{+}(\xi,t,k)}{k+k_n^*} + \mathlarger{\mathlarger{\mathlarger{\sum}}}_{n=1}^{\mathrm{\it{N}}} \dfrac{\mathrm{Res}_{k=k_n^*}\breve{\mu}_{-}(\xi,t,k)}{k-k_n^*}+ \mathlarger{\mathlarger{\mathlarger{\sum}}}_{n=1}^{\mathrm{\it{N}}} \dfrac{\mathrm{Res}_{k=-k_n}\breve{\mu}_{-}(\xi,t,k)}{k+k_n},
\end{gather}
with residue conditions given by \eref{e:res}, which can be solved analytically.

One-soliton solutions are obtained by setting $\mathrm{\it{N}=1}$ in the last relation, and solving the corresponding system for the upper and lower blocks of the eigenfunctions. The different types of one-soliton solutions that ccSPE admits depend on the choice of the norming constant $C_1$. Specifically, if $k_1=\eta_1+i\nu_1\in \Complex^+$ and $C_1=(\pmb{\gamma}\,,\mathbf{0})$ with $\pmb{\gamma} = (\alpha_1,\beta_1)^{T}$, then the corresponding solution is a fundamental soliton, which is the natural vector generalization of scalar one-soliton solutions of the complex short-pulse equation. In this case, the vector solution of the ccSPE is given by:
\begin{subequations}\label{e:21}
\begin{gather}
\mathbf{u}(\xi,t) = \frac{i \nu_1}{|k_1|^2} e^{-i(\phi_1(\xi,t) - 2 \mathrm{arg}\, k_1)}\mathrm{sech}[\zeta_1(\xi,t) - x_0] \frac{\pmb{\gamma}^*}{||\pmb{\gamma}||},\label{e:21a} \\
x = \xi + \frac{2 \nu_1}{|k_1|^2} \frac{1}{1 + e^{2 (\zeta_1 - x_0)}}, \label{e:21b}
\end{gather}
\end{subequations}
where
\begin{gather*}
\zeta_1(\xi,t) = 2\nu_1(\xi + t/4|k_1|^2),\quad \phi_1(\xi,t) = 2\eta_1\xi - \eta_1t/2|k_1|^2, \quad x_0 = \log\frac{||\pmb{\gamma}||}{2\nu_1}.
\end{gather*}
If $C_1$ is a $2 \times 2$ rank-1 matrix with its columns being proportional to each other, say, $C_1=(\mu \pmb{\gamma}, \kappa \pmb{\gamma})$ for some multiplicative constants $\kappa, \mu \in \Complex$, the corresponding solution is a fundamental breather, which is a superposition of two orthogonally polarized fundamental solitons, with the same amplitude and velocity but different carrier frequencies. In this case, the vector solution is given by:
\begin{subequations}
\begin{gather}\label{e:fundbreather}
\mathbf{u}(\xi,t) = \frac{i \nu_1}{|k_1|^2 \sqrt{|\mu|^2 + |\kappa|^2}}\mathrm{sech}[\zeta_1(\xi,t) - x_0] \times \notag \\ \times \left[
e^{-i(\phi_1(\xi,t) - 2 \mathrm{arg} k_1)} \mu^* \frac{\pmb{\gamma}^*}{||\pmb{\gamma}||} + e^{i(\phi_1(\xi,t) - 2 \mathrm{arg} k_1)} \kappa \frac{(\pmb{\gamma}^*)^\perp}{||\pmb{\gamma}||} \right],\label{e:fundbreathera}\\
x = \xi + \frac{2 \nu_1}{|k_1|^2} \frac{1}{1 + e^{2(\zeta_1 - x_0)}}, \quad x_0 = \log\left[\frac{||\pmb{\gamma}|| \sqrt{|\mu|^2 + |\kappa|^2}}{2\nu_1}\right],\label{e:fundbreatherb}
\end{gather}
\end{subequations}
$\pmb{\gamma}^\perp=(-\beta_1^*,\alpha_1^*)^T$ is such that $\pmb{\gamma}^\dagger \pmb{\gamma}^\perp=0$, and the quantities $\zeta_1$ and $\phi_1$ are the same as in the fundamental soliton case. Notice that equation \eref{e:fundbreathera} implies that a generic fundamental breather solution of the ccSPE is governed by the vector ${\pmb{\gamma}}$, which controls the polarization vectors of the two orthogonally polarized solitons, and by the multiplicative constants $\mu$ and $\kappa$. The fundamental breather in Eq.~\eref{e:fundbreathera} reduces to a fundamental soliton by setting either $\kappa=0$ and $\mu=1$, or $\kappa=1$ and $\mu=0$. We point out that one between the two constants $\kappa$ and $\mu$ can always be scaled out, and in \cite{ABBA} $\mu=1$ was chosen without loss of generality. However, for the purpose of investigating soliton interactions it is convenient to keep both constants in.

Lastly, one can consider $C_1$ to be a $2 \times 2$ full-rank matrix, and in this case the corresponding solution is a composite breather, which still corresponds to a minimal set of discrete eigenvalues, but is a more complicated superposition of fundamental solitons. The explicit expression of a composite breather is more easily derived using Darboux transformations, and is given in Sec.~6 (see also Sec.~2.1 for the reduction to self-symmetric (composite) solitons).

Finally, one can show that whether the fundamental 1-soliton and fundamental 1-breather solution are smooth solitons or not depends on the location of the discrete eigenvalue $k_1=\eta_1+i\nu_1\in \Complex^+$. Specifically, like in the scalar case:
\begin{itemize}
    \item[1.] if $\nu_1 < |\eta_1|$, then the corresponding fundamental soliton and fundamental breather solution has a smooth envelope as a function of $x,t$ in both components;
    \item[2.] if $\nu_1 > |\eta_1|$, then this leads to a loop in the envelope of each component of the solution (fundamental loop solitons/breathers);
    \item[3.] if $\nu_1 = |\eta_1|$, then this leads to a cusp in the envelope of each component of the solution (fundamental cuspon solitons/breathers).
\end{itemize}
Note that in the case of a generic composite breather, it was shown in \cite{ABBA} that the condition $\nu_1<|\eta_1|$ is not sufficient to guarantee smoothness of the solution in terms of $x,t$, and to the best of our knowledge, a regularity condition, which will necessarily have to involve the norming constant, is presently not known.

\subsection{Self-symmetric discrete eigenvalues and associated one-soliton solutions}

If an eigenvalue $k_j=i\nu_j$ is purely imaginary, then $-k_j^*=k_j$ and we refer to these eigenvalues as self-symmetric eigenvalues. In this case, the quartet of eigenvalues reduces to a pair $k_j,k_j^*$. The coalescence of the discrete eigenvalues $k_j$ and $-k_j^*$ and $k_j^*$ and $-k_j$ induces a coalescence of the corresponding norming constants,
namely, one needs to have:
\begin{gather}
\label{e:eq1}
C_j = \tilde{C}_j, \quad \bar{C}_j = \bar{\tilde{C}}_j.
\end{gather}
In addition, the symmetry relations \eref{e:eq3} among the norming constants still hold in the case of self-symmetric discrete eigenvalues, and combining them with \eref{e:eq1} yields a constraint on the entries of the norming constants associated to each self-symmetric eigenvalue, namely:
\begin{equation}
\label{e:self-symm}
C_j=-\sigma_2C_j^*\sigma_2\,.
\end{equation}
One can easily verify that if $C_j$ is a rank-1 matrix, then \eref{e:self-symm} implies  $C_j=0$ and hence in the case of a self-symmetric eigenvalue the only nontrivial solutions are associated to full-rank norming constants. Let
$C_j = \begin{pmatrix}
\alpha_j & \gamma_j\\
\beta_j & \delta_j
\end{pmatrix}$,
where $\alpha_j, \beta_j, \gamma_j, \delta_j \in \mathbb{C}$ and $\det C_j = \alpha_j \delta_j - \beta_j \gamma_j \neq 0$. In this case, imposing the constraint \eref{e:self-symm}, we obtain $\gamma_j = \beta_j^*$ and $\delta_j = - \alpha_j^*$, which then gives
\begin{gather}
\label{e:selfsymmC}
C_j = \begin{pmatrix}
\alpha_j & \beta_j^* \\
\beta_j & -\alpha_j^*
\end{pmatrix}, \quad \bar{C}_j = \begin{pmatrix}
-\alpha_j^* & -\beta_j^*\\
-\beta_j & \alpha_j
\end{pmatrix}.
\end{gather}
In the case of self-symmetric eigenvalues, instead of subtracting two residues in each half plane, we now subtract one residue from both sides of the jump condition \eref{e:sectionfunction} that the matrix function $\breve{\mu}$ satisfies, and therefore this has an immediate consequence in the formal solution of the RHP. For instance, for a pure self-symmetric 1-soliton solution corresponding to a pair of eigenvalues $k_1,k_1^*$ on the imaginary axis and associated norming constants $C_1,\bar{C}_1$ as in \eref{e:selfsymmC}, one can replace the definitions of the residues \eref{e:res} and derive the equations which hold for the upper and lower blocks of $\breve{\mu}_{\pm}$ in the analog of Eqs.~(28) evaluated at $k_1$ and $k_1^*$ respectively, i.e.,
\begin{subequations}\label{e:79}
\begin{gather}
\breve{\mu}_{-,1}^{\mathrm{up}}(k_1^*) = I_2 + \frac{e^{2i \hat{\theta}_1}}{k_1^* - k_1} \breve{\mu}_{+,2}^{\mathrm{up}}(k_1)C_1, \quad \breve{\mu}_{-,1}^{\mathrm{dn}}(k_1^*) = \frac{e^{2i \hat{\theta}_1}}{k_1^* - k_1} \breve{\mu}_{+,2}^{\mathrm{dn}}(k_1)C_1\,,\label{e:79a}\\
\breve{\mu}_{+,2}^{\mathrm{up}}(k_1) = \frac{e^{2i \hat{\theta}_1}}{k_1 - k_1^*} \breve{\mu}_{-,1}^{\mathrm{up}}(k_1^*)\bar{C}_1, \quad
\breve{\mu}_{+,2}^{\mathrm{dn}}(k_1) = I_2 + \frac{e^{2i \hat{\theta}_1}}{k_1 - k_1^*} \breve{\mu}_{-,1}^{\mathrm{dn}}(k_1^*)\bar{C}_1\,\label{e:79b},
\end{gather}
\end{subequations}
where $\hat{\theta}_1=k_1\xi-t/4k_1\equiv i(\nu_1\xi+t/4\nu_1)$.
Replacing the first half of \eref{e:79b} into the first half of \eref{e:79a}, and the second half of \eref{e:79a} into the second half of \eref{e:79b}, we obtain the following expressions for the upper/lower blocks of the eigenfunctions
\begin{subequations}\label{e:80}
\begin{gather}
\breve{\mu}_{-,1}^{\mathrm{up}}(k_1^*) = \left( I_2 + \frac{e^{4i \hat{\theta}_1}}{(k_1^* - k_1)^2} \bar{C}_1 C_1\right)^{-1}\,,\label{e:80a}\\
\breve{\mu}_{+,2}^{\mathrm{up}}(k_1) = \frac{e^{2i \hat{\theta}_1}}{k_1 - k_1^*} \left( I_2 + \frac{e^{4i \hat{\theta}_1}}{(k_1^* - k_1)^2} \bar{C}_1 C_1\right)^{-1} \bar{C}_1\,,\label{e:80b}\\
\breve{\mu}_{+,2}^{\mathrm{dn}}(k_1) = \left( I_2 + \frac{e^{4i \hat{\theta}_1}}{(k_1^* - k_1)^2} C_1 \bar{C}_1 \right)^{-1}\,,\label{e:80c}\\
\breve{\mu}_{-,1}^{\mathrm{dn}}(k_1^*) = \frac{e^{2i \hat{\theta}_1}}{k_1^* - k_1} \left( I_2 + \frac{e^{4i \hat{\theta}_1}}{(k_1^* - k_1)^2} C_1 \bar{C}_1 \right)^{-1} C_1\,.\label{e:80d}
\end{gather}
\end{subequations}
Using the above eigenfunctions in the reconstruction formula \eref{e:reconstruction}, we obtain the expression of the self-symmetric soliton:
\begin{gather}
\mathbf{u}(\xi,t) = -\frac{1}{k_1} \sech[\zeta_1(\xi,t)- x_0]\frac{ \pmb{\gamma^*}}{||
\pmb{\gamma}||}\,, \qquad \zeta_1(\xi,t)=2\nu_1(\xi+t/4\nu_1^2)\,,\\
x = \xi + \frac{2}{\nu_1^2}\frac{1}{1 + e^{2(\zeta_1 - x_0)}}, \quad \pmb{\gamma} = \begin{pmatrix}
\alpha_j\\
\beta_j
\end{pmatrix}, \quad x_0 = \log\frac{||
\pmb{\gamma}||}{2 \nu_1}\,. \notag
\end{gather}
The special cases $\beta=0$ (a diagonal, self-symmetric norming constant) and $\alpha=0$ (an off-diagonal, self-symmetric norming constant) provide scalar solutions with
$u_1\equiv 0$ and $u_2\equiv 0$, respectively.

\section{Real solutions of the ccSPE}

The ccSP equation admits real solutions, and in this case the matrix potential $U$ in \eref{e:U} satisfies the symmetry
\begin{gather}
U^* = - U, \qquad V=-U^T\,.
\end{gather}
This induces an additional symmetry on the Lax operators
\begin{gather}
X^*(x,t,-k^*) = X(x,t,k), \quad T^*(x,t,-k^*) = T(x,t,k),
\end{gather}
and assuming uniqueness of solution of the equations of the Lax pair with prescribed boundary conditions as $x \to \pm \infty$, then the last symmetry implies
\begin{gather}
\Phi_{\pm}^*(x,t,-k^*) = \Phi_{\pm}(x,t,k).
\end{gather}
Recall that $\Phi_{\pm}$ and $\hat{\Phi}_{\pm}$ are related through the gauge transformation \eref{e:gauge1}, and since $P(x,t)$ is unitary, one also has
\begin{gather}
\label{e:symeigen}
\hat{\Phi}_{\pm}^*(x,t,-k^*) =\hat{\Phi}_{\pm}(x,t,k).
\end{gather}
In turn, combining this last symmetry with the definition of the scattering matrix \eref{e:8new}, we obtain
\begin{gather}
S^*(-k^*) = S(k),
\end{gather}
which implies additional symmetries on the scattering data and the reflection coefficients
\begin{subequations} \label{e:symscdata}
\begin{gather}
a^*(-k^*) = a(k), \quad \bar{a}^*(-k^*) = \bar{a}(k),\label{e:symscdata1}\\
b^*(-k^*) = b(k), \quad \bar{b}^*(-k^*) = \bar{b}(k),\label{e:symscdata2}\\
\rho^*(-k^*) = \rho(k), \quad \bar{\rho}^*(-k^*) = \bar{\rho}(k).\label{e:symscdata3}
\end{gather}
\end{subequations}
From \eref{e:symm_data}, it then follows that for real solutions of the ccSPE
\begin{gather}
\label{e:symmrealdeta}
\det a(k) = \det a^*(-k^*),
\end{gather}
but the reduction to real solutions does not induce an additional symmetry for the zeros of $\det a(k)$ and $\det \bar{a}(k)$. Hence, the discrete spectrum consists of either the quartet of points $\{k_n, -k_n^*, k_n^*, -k_n\}$, where $k_n, -k_n^*$ are the zeros of $\det a(k)$ on the UHP of $\mathbb{C}$, and $k_n^*, -k_n$ are the zeros of $\det \bar{a}(k)$ on the LHP of $\mathbb{C}$, or the pair of points $\{k_n, k_n^*\}$, when the discrete eigenvalues are self-symmetric, i.e., purely imaginary.
On the other hand, the reduction to real solutions imposes additional symmetries on the norming constants. Let us assume that $k_n, -k_n^*$ are distinct simple zeros of $\det a(k)$ in $\mathbb{C}^+$, and therefore $-k_n, k_n^*$ are distinct simple zeros of $\det \bar{a}(k)$ in $\mathbb{C}^-$. Since the discrete eigenvalues in $\Complex^+$ coincide with the values of $k$ where the columns $\hat{\Phi}_{-,1}$ and $\hat{\Phi}_{+,2}$ become linearly dependent, we then have
\begin{gather}\label{e:prcon}
\hat{\Phi}_{-,1}(k_n) \alpha(k_n) = \hat{\Phi}_{+,2}(k_n) c_n, \quad \hat{\Phi}_{-,1}(-k_n^*) \alpha(-k_n^*) = \hat{\Phi}_{+,2}(-k_n^*) \tilde{c}_n,
\end{gather}
where $c_n, \tilde{c}_n$ are the corresponding proportionality constants, and $\alpha(k)$ is the cofactor matrix of $a(k)$. Then, combining relations \eref{e:symeigen}, \eref{e:symscdata1} and \eref{e:prcon}, we get
\begin{gather}\label{e:15norming}
\tilde{c}_n = c_n.
\end{gather}
Now, the function $M_{-,1} a^{-1}$ is meromorphic for $k\in\mathbb{C}^+$, with simple poles at the points $k_n, -k_n^*$, and one can define its residues as in \cite{ABBA}, and the norming constants are given by
\begin{gather}\label{e:16normcon}
C_n = \frac{c_n}{\left( \det a \right)'(k_n)}, \quad \tilde{C}_n = \frac{\tilde{c}_n}{\left( \det a \right)'(-k_n^*)}.
\end{gather}
It is easy to check that
\begin{gather}\label{e:17normcon}
\tilde{C}_n = - C_n^*,
\end{gather}
since \eref{e:15norming} holds, and \eref{e:symmrealdeta} implies $\left( \det a \right) ' (-k_n^*) = - \left[ \left( \det a \right) '(k_n)\right]^*$, where prime denotes differentiation with respect to $k$. Similarly, considering $M_{-,2} \bar{a}^{-1}$, which has simple poles at the points $k_n^*, -k_n\in \mathbb{C}^-$, one can show that the associated norming constants satisfy:
\begin{gather}
\label{e:18normcon}
\bar{\tilde{C}}_n = - \bar{C}_n^*.
\end{gather}
When the zeros of $\det a(k)$ and $\det\bar{a}(k)$ are simple, it has been shown in \cite{ABBA} that the norming constants are necessarily rank-one matrices, which implies that $C_n$ has either one column identically zero, or the two columns proportional to each other. Combining \eref{e:17normcon} with the second of \eref{e:eq3},
one can conclude that no non-trivial norming constant exists when one of the columns of $C_n$ is identically zero. This implies that no real fundamental soliton solutions exist.
Now, if the norming constant $C_n$ has proportional columns, i.e., $C_n = ( \pmb{\gamma}, \kappa \pmb{\gamma})$, where $\pmb{\gamma} = (\alpha_n\,, \beta_n)^{T}$ with $\alpha_n, \beta_n, \kappa \in \mathbb{C}$,
then one can prove that Eqs. \eref{e:17normcon} and \eref{e:18normcon} combined give a constraint for the multiplicative constant $\kappa$, i.e., $C_n$ is non-trivial iff $\kappa=\pm i$. The fundamental breather solution is then given by \eref{e:fundbreather} with $\mu=1$ and $\kappa=\pm i$.

Let us now assume that $k_n, -k_n^*$ are distinct double zeros of $\det a(k)$ in $\mathbb{C}^+$, and therefore $k_n^*, -k_n$ are distinct double zeros of $\det \bar{a}(k)$ in $\mathbb{C}^-$. It has been shown that they can still be simple poles for the functions $M_{-,1} a^{-1}$ and $M_{-,2} \bar{a}^{-1}$, and one can define the corresponding residues and norming constants, using similar relations as in \eref{e:16normcon}. The symmetries \eref{e:eq3} and \eref{e:17normcon} hold as well, and in this case the norming constants are either rank-one or full-rank matrices. We already analyzed the cases when $C_n$ is a  rank-one matrix, and therefore let us assume
\begin{gather}\label{e:21normcon}
C_n = \begin{pmatrix}
\alpha_n & \gamma_n\\
\beta_n & \delta_n
\end{pmatrix}, \quad \alpha_n, \beta_n, \gamma_n, \delta_n \in \mathbb{C},
\end{gather}
with $\det C_n \neq 0$. Eqs. \eref{e:17normcon} and \eref{e:18normcon} combined give the following constraint for the entries of $C_n$
\begin{gather}
\delta_n = \alpha_n, \quad \gamma_n = - \beta_n,
\end{gather}
and therefore $C_n$ takes the form
\begin{gather}
C_n = \begin{pmatrix}
\alpha_n & -\beta_n\\
\beta_n & \alpha_n
\end{pmatrix}, \quad \alpha_n, \beta_n \in \mathbb{C}.
\end{gather}

Finally, let us now consider the case of self-symmetric eigenvalues, i.e., an eigenvalue pair $\{ k_n=i\nu_n, k_n^*=-i\nu_n\}$. In this case, there are three symmetries for the norming constants which need to be satisfied simultaneously
\begin{gather}\label{e:25normcon}
\tilde{C}_n = -\sigma_2 C_n^* \sigma_2, \quad \tilde{C}_n = -C_n^*, \quad \tilde{C}_n = C_n\,.
\end{gather}
One can show that no non-trivial norming constants exists when $C_n$ is a rank-one matrix. When $C_n$ is full-rank, as in \eref{e:21normcon}, imposing the symmetries \eref{e:25normcon} requires all entries of $C_n$ to be purely imaginary, i.e., $C_n$ takes the form $C_n = ( \pmb{\gamma},-\pmb{\gamma}^\perp)$, where $\pmb{\gamma}=(i \alpha_n, i \beta_n)$, with $\alpha_n, \beta_n \in \mathbb{R}$.

\section{Pure soliton transmission coefficients}
In order to use Manakov's method to investigate the pairwise interactions of two vector
solitons \cite{Manakov}, one needs the explicit expressions of the (matrix) transmission coefficients corresponding to pure one-soliton solutions, namely
the inverses of the matrices $a(k),\bar{a}(k)$, which correspond to the transmission coefficients from the left, and the inverses of the matrices $c(k), \bar{c}(k)$, which correspond to the transmission coefficients from the right.
Both left and right transmission coefficients corresponding to 1-fundamental soliton, 1-fundamental breather and
1-self-symmetric soliton can be computed as limits of suitable blocks of analytic eigenfunctions as $\xi\rightarrow +\infty$.
We give below the results, the calculations are somewhat involved, and the details are provided in App.~C.
\paragraph{Transmission coefficients of a 1-fundamental soliton.} In the case of one fundamental solitons, the (inverses of the) left and right transmission coefficients are given by
\begin{subequations}\label{e:scdata}
\begin{gather}
a_j(k) = \mathrm{diag}\left( \frac{k_j^*}{k_j} \frac{k-k_j}{k-k_j^*}\,, \frac{k_j}{k_j^*}\frac{k+k_j^*}{k+k_j}\right), \label{e:scdataa}\\
\bar{c}_j(k) = \mathrm{diag}\left( \frac{k_j}{k_j^*} \frac{k-k_j^*}{k-k_j}\,, \frac{k_j^*}{k_j}\frac{k+k_j}{k+k_j^*}\right), \label{e:scdatab}\\
\bar{a}_j(k) = I_2 + \frac{k}{k_j^*}\frac{k_j^* - k_j}{k - k_j} \frac{C_j \bar{C}_j}{||\pmb{\gamma}_j||^2} + \frac{k}{k_j}\frac{k_j - k_j^*}{k + k_j^*} \frac{\tilde{C}_j \bar{\tilde{C}}_j}{||\pmb{\gamma}_j||^2} , \label{e:scdatac}\\
c_j(k) = I_2 + \frac{k}{k_j}\frac{k_j - k_j^*}{k - k_j^*} \frac{C_j \bar{C}_j}{||\pmb{\gamma}_j||^2} + \frac{k}{k_j^*}\frac{k_j^* - k_j}{k + k_j}\frac{\tilde{C}_j \bar{\tilde{C}}_j}{||\pmb{\gamma}_j||^2},\label{e:scdatad}
\end{gather}
\end{subequations}
where the norming constants correspond to a fundamental soliton, and hence $C_j$ has the form $C_j=(\pmb{\gamma}_j\,, \mathbf{0})$, and the other ones are obtained from \eref{e:eq3}. Introducing the unit-norm polarization vectors
\begin{gather}
\label{e:69new}
\mathbf{p}_j = \frac{{\pmb{\gamma}}^*_j}{||{\pmb{\gamma}}_j||}, \quad \hat{\mathbf{p}}_j = \frac{({\pmb{\gamma}}_j^*)^\perp}{||{\pmb{\gamma}}_j||},
\end{gather}
we can rewrite Eqs. \eref{e:scdatac}-\eref{e:scdatad} as follows
\begin{subequations}
\begin{gather}
\bar{a}_j(k, \mathbf{p}_j) = I_2 - \frac{k}{k_j^*}\frac{k_j^* - k_j}{k - k_j} \mathbf{p}_j^* \mathbf{p}_j^T - \frac{k}{k_j}\frac{k_j - k_j^*}{k + k_j^*} \hat{\mathbf{p}}_j^* \hat{\mathbf{p}}_j^T,\\
c_j(k, \mathbf{p}_j) = I_2 - \frac{k}{k_j}\frac{k_j - k_j^*}{k - k_j^*} \mathbf{p}_j^* \mathbf{p}_j^T - \frac{k}{k_j^*}\frac{k_j^* - k_j}{k + k_j} \hat{\mathbf{p}}_j^* \hat{\mathbf{p}}_j^T,
\end{gather}
\end{subequations}
which, using the property $\mathbf{p}_j^* \mathbf{p}_j^T + \hat{\mathbf{p}}_j^* \hat{\mathbf{p}}_j^T = I_2$, can be simplified to
\begin{subequations}\label{e:74}
\begin{gather}
\bar{a}_j(k,\mathbf{p}_j) = \frac{k_j^*}{k_j}\frac{k + k_j}{k + k_j^*} \left[ I_2 + \frac{k^2\left(k_j^2 - (k_j^*)^2\right)}{(k_j^*)^2 \left( k^2 - k_j^2\right)} \mathbf{p}_j^* \mathbf{p}_j^T \right] , \label{e:74a}\\
c_j(k,\mathbf{p}_j) = \frac{k_j}{k_j^*}\frac{k + k_j^*}{k + k_j}\left[ I_2 + \frac{k^2\left((k_j^*)^2 - k_j^2\right)}{k_j^2 \left( k^2 - (k_j^*)^2\right)}
\mathbf{p}_j^* \mathbf{p}_j^T \right]. \label{e:74b}
\end{gather}
\end{subequations}
\paragraph{Transmission coefficients of a 1-fundamental breather.} In the case of one fundamental breather, the (inverses of the) left and right transmission coefficients are given by Eqs.~\eref{e:trans_fb}, where the norming constant corresponds to a fundamental breather, i.e., $C_j=(\mu_j \pmb{\gamma}_j\,, \kappa_j\, \pmb{\gamma}_j)$, for $\mu_j, \kappa_j \in \mathbb{C}$. Note that the products of the norming constants in Eqs.~\eref{e:trans_fb} can be written as follows
\begin{subequations}\label{e:149new}
\begin{gather}
\bar{C}_j C_j = - ||\pmb{\gamma}_j||^2 \pmb{\delta}_j \pmb{\delta}_j^\dagger, \quad \bar{\tilde{C}}_j \tilde{C}_j = - ||\pmb{\gamma}_j||^2  \pmb{\hat{\delta}}_j\pmb{\hat{\delta}}_j^\dagger, \quad \pmb{\delta}_j = \begin{pmatrix}
\mu_j^*\\
\kappa_j^*
\end{pmatrix}, \quad \pmb{\hat{\delta}}_j = \pmb{\delta}_j^\perp,\label{e:149newa}\\
C_j \bar{C}_j = - ||\pmb{\delta}_j||^2\pmb{\gamma}_j \pmb{\gamma}_j^\dagger, \quad \tilde{C}_j \bar{\tilde{C}}_j = -  ||\pmb{\delta}_j||^2 \hat{\pmb{\gamma}_j}^\dagger \hat{\pmb{\gamma}_j}^\dagger, \quad  \hat{\pmb{\gamma}}_j =\pmb{\gamma}_j^\perp,\label{e:149newb}
\end{gather}
\end{subequations}
and therefore the transmission coefficients can be simplified into
\begin{subequations}\label{e:fbreatherdata}
\begin{gather}
a_j(k,\pmb{\delta}_j) = I_2 - \frac{k}{k_j}\frac{k_j - k_j^*}{k-k_j^*}\frac{\pmb{\delta}_j \pmb{\delta}_j^\dagger}{||\pmb{\delta}_j||^2} + \frac{k}{k_j^*}\frac{k_j-k_j^*}{k+k_j}\frac{\pmb{\hat{\delta}}_j \pmb{\hat{\delta}}_j^\dagger}{||\pmb{\delta}_j||^2},\label{e:fbreatherdataa}\\
\bar{c}_j(k,\pmb{\delta}_j) = I_2 - \frac{k}{k_j^*}\frac{k_j^*-k_j}{k-k_j} \frac{\pmb{\delta}_j \pmb{\delta}_j^\dagger}{ ||\pmb{\delta}_j||^2} + \frac{k}{k_j}\frac{k_j^* - k_j}{k+k_j^*}\frac{\pmb{\hat{\delta}}_j \pmb{\hat{\delta}}_j^\dagger}{ ||\pmb{\delta}_j||^2},\label{e:fbreatherdatab}\\
c_j(k,\pmb{\gamma}_j) = I_2 - \frac{k}{k_j}\frac{k_j-k_j^*}{k-k_j^*}\frac{\pmb{\gamma}_j \pmb{\gamma}_j^\dagger}{||\pmb{\gamma}_j||^2} + \frac{k}{k_j^*}\frac{k_j-k_j^*}{k+k_j}\frac{\hat{\pmb{\gamma}_j} \hat{\pmb{\gamma}_j}^\dagger}{||\pmb{\gamma}_j||^2},\label{e:fbreatherdatac}\\
\bar{a}_j(k,\pmb{\gamma}_j) = I_2 - \frac{k}{k_j^*}\frac{k_j^*-k_j}{k-k_j}\frac{\pmb{\gamma}_j \pmb{\gamma}_j^\dagger}{||\pmb{\gamma}_j||^2} + \frac{k}{k_j}\frac{k_j^* - k_j}{k+k_j^*}\frac{\hat{\pmb{\gamma}_j} \hat{\pmb{\gamma}_j}^\dagger}{||\pmb{\gamma}_j||^2}.\label{e:fbreatherdatad}
\end{gather}
\end{subequations}

\noindent Eqs. \eref{e:fbreatherdata} can be further simplified to
\begin{subequations}\label{e:162new}
\begin{gather}
a_j(k,\mathbf{q}_j) = \frac{k_j}{k_j^*}\frac{k+k_j^*}{k+k_j}\left[ I_2 + \frac{k^2 \left( (k_j^*)^2-k_j^2\right)}{k_j^2\left( k^2 - (k_j^*)^2 \right)} \mathbf{q}_j^* \mathbf{q}_j^T\right],\label{e:162newa}\\
\bar{c}_j(k,\mathbf{q}_j) = \frac{k_j^*}{k_j}\frac{k+k_j}{k+k_j^*}\left[ I_2 + \frac{k^2 \left(k_j^2-(k_j^*)^2\right)}{k_j^2\left( k^2 - k_j^2 \right)} \mathbf{q}_j^* \mathbf{q}_j^T \right],\label{e:162newb}\\
c_j(k, \mathbf{p}_j) = \frac{k_j(k+k_j^*)}{k_j^*(k+k_j)}\left[ I_2 + \frac{k^2\left( (k_j^*)^2 - k_j^2 \right)}{k_j^2\left( k^2 - (k_j^*)^2\right)} \mathbf{p}_j^* \mathbf{p}_j^T \right],\label{e:162newc}\\
\bar{a}_j(k, \mathbf{p}_j) = \frac{k_j^*(k+k_j)}{k_j(k+k_j^*)}\left[ I_2 + \frac{k^2\left( k_j^2 - (k_j^*)^2 \right)}{(k_j^*)^2\left( k^2 - k_j^2\right)} \mathbf{p}_j^* \mathbf{p}_j^T\right],\label{e:162newd}
\end{gather}
\end{subequations}
where $\mathbf{p}_j=\pmb{\gamma}_j^*/||\pmb{\gamma}_j||$ and $\mathbf{q}_j=\pmb{\delta}_j^*/||\pmb{\delta}_j||$, and we used once again the properties $\mathbf{p}_j^* \mathbf{p}_j^T + \hat{\mathbf{p}}_j^* \hat{\mathbf{p}}_j^T = I_2$ and $\mathbf{q}_j^* \mathbf{q}_j^T + \hat{\mathbf{q}}_j^* \hat{\mathbf{q}}_j^T = I_2$, to eliminate $\hat{\pmb{\gamma}}_j$ and $\hat{\pmb{\delta}}_j$, respectively. The above equations show that the expressions for $\bar{a}_j$ and $c_j$ are exactly the same both for a fundamental soliton and a fundamental breather solution. Also, note that if we set $\kappa_j=0$, then the expressions of the transmission coefficients $a_j$ and $\bar{c}_j$ reduce to those which hold for a fundamental soliton. Moreover, one can easily compute the determinants of $a_j$ and $\bar{a}_j$, for either a fundamental soliton or a fundamental breather, and obtain the following
\begin{gather}\label{e:determ}
\det a_j(k) = \frac{(k-k_j)(k+k_j^*)}{(k+k_j)(k-k_j^*)}, \quad \det \bar{a}_j(k) = \frac{(k-k_j^*)(k+k_j)}{(k+k_j^*)(k-k_j)},
\end{gather}
as expected.
Also, using \eref{e:determ} is easy to verify the symmetry
\begin{gather}
\det \bar{a}_j(k) = \left( \det a_j(k^*) \right)^*, \quad k \in \mathbb{C}^{-} \cup \mathbb{R},
\end{gather}
for both a fundamental soliton and a fundamental breather.
\noindent
\paragraph{Transmission coefficients of a self-symmetric soliton.} In the case of one self-symmetric composite breather, i.e., when the discrete spectrum consists of a purely imaginary discrete eigenvalue and the associated norming constant $C_j$ is of the form $C_j = (\pmb{\gamma}_j\,, -\pmb{\gamma}_j^\perp)$, with $\pmb{\gamma}_j = (\alpha_j\,,\beta_j)^T$, and $\alpha_j, \beta_j \in \mathbb{C}$, the (inverse of the) left and right transmission coefficients are given by:
\begin{gather}\label{e:ss_transmission}
a_j(k) =\frac{k_j - k}{k_j + k} I_2, \quad \bar{a}_j(k) = \frac{k_j^* - k}{k_j^* + k} I_2, \quad c_j(k) \equiv a_j(k), \quad \bar{c}_j(k) \equiv \bar{a}_j(k),
\end{gather}
which are all diagonal (in fact, proportional to the identity matrix) and independent of the norming constant. Note also that their determinants have double zeros at the discrete eigenvalues, as expected. Also, $a_j(k_j)= c_j(k_j) = 0$, and similarly for $\bar{a}, \bar{c}$ at $k_j^*$.

\section{Fundamental soliton interactions (Manakov's method)}

Let us consider a 2-soliton solution of the ccSPE which corresponds to a pair of discrete eigenvalues $k_j = \eta_j + i \nu_j$ and associated norming constants $C_j$ for $j=1,2$. The problem of interacting solutions can be investigated by looking at the asymptotic states of the solution as $t \to \pm \infty$. We assume that if the individual solitons travel at different velocities, in the backward, i.e., as $t \to - \infty$, and forward, i.e., as $t \to + \infty$, long-time limits, a 2-solution breaks up into two individual solitons, i.e.
\begin{gather}
\label{e:2}
\mathbf{u}(\xi,t)\sim \mathbf{u}^{\pm}(\xi,t)=\mathbf{u}_1^{\pm}(\xi,t) + \mathbf{u}_2^{\pm}(\xi,t), \quad t \to \pm \infty,
\end{gather}
where $\mathbf{u}$ is the $2$-soliton solution of the ccSPE, and $\mathbf{u}_j^{\pm}$ are $1$-soliton solutions. If the two solitons are both fundamental solitons, we expect
$\mathbf{u}_j^{\pm}(\xi,t)=\mathbf{p}_j^\pm u_j^\pm(\xi,t)$ where $u_j^\pm(\xi,t)$ are 1-soliton solutions of the scalar cSPE, and the (constant) unit-norm vectors $\mathbf{p}_j^\pm$ are the asymptotic polarizations of the solitons (cf. \eref{e:21}).

Without loss of generality, we choose the discrete eigenvalues $k_j$ such that soliton-1 moves faster than soliton-2, i.e., we assume  $|\mathrm{v}_1| > |\mathrm{v}_2|$  (recall that the soliton velocity $\mathrm{v}_j \equiv -1/4|k_j|^2$, for $j=1,2$, and so solitons always move to the left). For each $j=1,2$, let $S_j^{-}$ be a $2\times 2$ complex matrix that, together with the discrete eigenvalue $k_j$ determines $\mathbf{u}_j^{-}$ as $t \to - \infty$, and let us denote the corresponding matrices as $t \to + \infty$ by $S_j^{+}$.  In other words, $S_j^\pm$ play the role of the norming constant $C_j$ in determining in the 2-soliton solution the form of soliton $j$ in the limits $t\rightarrow \pm \infty$, according to Eqs.~\eref{e:21} and \eref{e:fundbreather}. To investigate the result of the interaction, we trace the passage of the eigenfunctions through the asymptotic states, following the method developed by Manakov in \cite{Manakov} (see also \cite{APT}). Assuming $|t|$ is large enough so that the solitons are well-separated,  we denote the centers of the two solitons by $x_j$. For the given choice of the soliton velocities, if $t \to -\infty$ then $x_2 \ll x_1$, and the order is reversed as $t \to + \infty$.
In the following, we will assume that both solitons are fundamental solitons. We then have the following: as $t \to - \infty$ and starting from $x \ll x_2$, the eigenfunction $\hat{\Phi}_{-,1}(k_j)$ has the form
\begin{gather}
\label{e:64phi}
\hat{\Phi}_{-,1}(k_j) \sim e^{-i k_j \xi} \begin{pmatrix}
I_2 \\
0
\end{pmatrix}, \quad x \ll x_2,
\end{gather}
for $j=1,2$, where $\xi=\xi(x,t)$ is defined in \eref{e:theta,xi}.
After passing through soliton-2 (the leftmost soliton as $t\to -\infty$), since the corresponding state is a bound state for $k=k_2$ but not for $k=k_1$, then $\hat{\Phi}_{-,1}$ evaluated at $k=k_1$, is given by \eref{e:64phi} multiplied by the corresponding soliton transmission coefficient, while $\hat{\Phi}_{-,1}$ evaluated at $k=k_2$ behaves like $\hat{\Phi}_{-,1}(k_2)$ multiplied by $S_2^-$, because $k_2$ is exactly the value where $\hat{\Phi}_{-,1}$ and $\hat{\Phi}_{+,2}$ become linearly dependent, cf. \eref{e:res}. We therefore obtain
\begin{subequations} \label{e:125}
\begin{gather}
\hat{\Phi}_{-,1}(k_1) \sim e^{-i k_1 \xi} \begin{pmatrix}
I_2 \\ \label{e:125a}
0
\end{pmatrix} a_2(k_1), \quad x_2 \ll x \ll x_1,\\[4pt]
\hat{\Phi}_{-,1}(k_2) \sim e^{i k_2 \xi} \begin{pmatrix}
0 \\
I_2
\end{pmatrix} S_2^{-}, \quad x_2 \ll x \ll x_1, \label{e:125b}
\end{gather}
\end{subequations}
where $a_2(k)$ is the scattering coefficient relative to soliton-2, and we have taken into account that for a fundamental soliton the coefficient only depends on the discrete eigenvalue (cf \eref{e:scdataa}). Upon passing through soliton-1, from \eref{e:125a} we find
\begin{gather}
\label{e:126}
\hat{\Phi}_{-,1}(k_1) \sim e^{i k_1 \xi} \begin{pmatrix}
0\\
I_2
\end{pmatrix} S_1^{-} a_2(k_1), \quad x_1 \ll x,
\end{gather}
since the corresponding state is a bound state for $k=k_1$. Now, starting from $x_1 \ll x$ as $t \to -\infty$ and proceeding in a similar way, we find for the eigenfunction $\hat{\Phi}_{+,2}(k_j)$ the following asymptotic behaviors
\begin{gather}
\label{e:127}
\hat{\Phi}_{+,2}(k_j) \sim e^{i k_j \xi} \begin{pmatrix}
0 \\
I_2
\end{pmatrix}, \quad x_1 \ll x,
\end{gather}
for $j=1,2$, and, upon passing through soliton-1, we get
\begin{gather}
\label{e:128}
\hat{\Phi}_{+,2}(k_2) \sim e^{i k_2 \xi} \begin{pmatrix}
0 \\
I_2
\end{pmatrix}c_1(k_2,S_1^{-}), \quad x_2 \ll x \ll x_1,
\end{gather}
where $c_1$ is the scattering coefficient relative to soliton-1, and we have taken into account that according to \eref{e:scdatad} this coefficient depends not just on the discrete eigenvalue $k_1$, but also on the asymptotic polarization of soliton-1. Now recall from \cite{ABBA} that the eigenfunctions $\hat{\Phi}_{-,1}$ and $\hat{\Phi}_{+,2}$ evaluated at a discrete eigenvalue $k_j$ are related as follows
\begin{gather}
\label{e:129}
\hat{\Phi}_{-,1}(k_j) \alpha(k_j) = \hat{\Phi}_{+,2}(k_j) C_j \left( \det a \right)'(k_j),
\end{gather}
where $a(k)$ is the 2-soliton scattering coefficient, and $\alpha(k_j)$ is its cofactor matrix. In general, the explicit expression of $a(k)$ and hence of its cofactor $\alpha(k)$ for a 2-soliton solution can be expected to depend in a nontrivial way
on the scattering coefficients of the individual solitons. If we restrict ourselves to 2 fundamental solitons, however, Eq.~\eref{e:scdataa} shows that the left transmission coefficient is independent of the norming constants/asymptotic polarizations, and moreover it is diagonal. Consequently, one has
$$
a(k)=\prod_{j=1}^{2}\mathrm{diag}\left( \frac{k_j^*}{k_j} \frac{k-k_j}{k-k_j^*}\,, \frac{k_j}{k_j^*}\frac{k+k_j^*}{k+k_j}\right)\,,
$$
and therefore
\begin{gather}
\label{e:91new}
\alpha(k)=\prod_{j=1}^{2}\mathrm{diag}\left(  \frac{k_j}{k_j^*}\frac{k+k_j^*}{k+k_j}\,, \frac{k_j^*}{k_j} \frac{k-k_j}{k-k_j^*}\right)\,.
\end{gather}
Now, comparing relations \eref{e:126} and \eref{e:127}, and using Eq.~\eref{e:129} for $j=1$, we find
\begin{gather}
\label{e:133}
S_1^{-} a_2(k_1) \alpha(k_1) = C_1 \left( \det a \right)'(k_1),
\end{gather}
while comparing relations \eref{e:125b} and \eref{e:128}, and using Eq.~\eref{e:129} for $j=2$, we obtain
\begin{gather}
\label{e:132}
S_2^{-} \alpha(k_2) = c_1(k_2,S_1^{-}) C_2 \left( \det a \right)'(k_2).
\end{gather}
We now follow a similar process for $t \to + \infty$, with the order of the soliton centers being reversed. Therefore, as $t \to + \infty$ and starting from $x \ll x_1$  we have
\begin{subequations}
\begin{gather}
\hat{\Phi}_{-,1}(k_j) \sim e^{-i k_j \xi} \begin{pmatrix}
I_2 \\
0
\end{pmatrix}, \quad x \ll x_1,
\end{gather}
\end{subequations}
for $j=1,2$, and, after passing through soliton-1, we get
\begin{gather}\label{e:135}
\hat{\Phi}_{-,1}(k_1) \sim e^{i k_1 \xi} \begin{pmatrix}
0\\
I_2
\end{pmatrix} S_1^{+}, \quad x_1 \ll x \ll x_2,
\end{gather}
while, upon passing soliton-2, we obtain
\begin{gather}
\label{e:136}
\hat{\Phi}_{-,1}(k_2) \sim e^{i k_2 \xi} \begin{pmatrix}
0 \\
I_2
\end{pmatrix} S_2^{+} a_1(k_2), \quad x_2 \ll x.
\end{gather}
On the other hand, starting from $x_2 \ll x$, we get
\begin{gather}
\label{e:137}
\hat{\Phi}_{+,2}(k_j) \sim e^{i k_j \xi} \begin{pmatrix}
0 \\
I_2
\end{pmatrix}, \quad x_2 \ll x,
\end{gather}
for $j=1,2$, and, after passing through soliton-2, we find
\begin{gather}
\label{e:138}
\hat{\Phi}_{+,2}(k_1) \sim e^{i k_1 \xi} \begin{pmatrix}
0 \\
I_2
\end{pmatrix} c_2(k_1, S_2^{+}), \quad x_1 \ll x \ll x_2.
\end{gather}
Comparing relations \eref{e:135} and \eref{e:138} and using Eq.~\eref{e:129} for $j=1$, we find
\begin{gather}
\label{e:141}
S_1^{+} \alpha(k_1) = c_2(k_1,S_2^{+}) C_1 \left( \det a \right)' (k_1),
\end{gather}
while comparing relations \eref{e:136} and \eref{e:137} and using \eref{e:129} for $j=2$, we obtain
\begin{gather}
\label{e:140}
S_2^{+} a_1(k_2) \alpha(k_2) = C_2 \left( \det a \right)' (k_2).
\end{gather}
One can solve Eqs.~\eref{e:133}, \eref{e:132}, \eref{e:141} and \eref{e:140} with respect to the norming constants $C_j$, and derive the following expressions from the analysis as $t \to -\infty$:
\begin{subequations}\label{e:142}
\begin{align}
C_1 & = \frac{1}{\left( \det a\right)'(k_1)}S_1^{-} a_2(k_1)\alpha(k_1),\label{e:142a}\\
C_2 & = \frac{1}{\left( \det a \right)'(k_2)} c_1^{-1}(k_2,S_1^{-})S_2^{-} \alpha(k_2),\label{e:142b}
\end{align}
\end{subequations}
while the asymptotics as $t \to + \infty$ yields:
\begin{subequations}\label{e:new143}
\begin{align}
C_1 & = \frac{1}{\left( \det a \right)'(k_1)} c_2^{-1}(k_1, S_2^{+})S_1^{+} \alpha(k_1),\label{e:new143a}\\
C_2 & = \frac{1}{\left( \det a \right)'(k_2)} S_2^{+} a_1(k_2)\alpha(k_2).\label{e:new143b}
\end{align}
\end{subequations}
Since the norming constants are time-independent, one must have
\begin{subequations}\label{e:143}
\begin{gather}
S_1^{-} a_2(k_1) \alpha(k_1) = c_2^{-1}(k_1,S_2^{+}) S_1^{+} \alpha(k_1)\,,\label{e:143a}\\
S_2^{+}a_1(k_2) \alpha(k_2) = c_1^{-1}(k_2, S_1^{-}) S_2^{-} \alpha(k_2)\,,\label{e:143b}
\end{gather}
\end{subequations}
or equivalently
\begin{subequations}
\label{e:35}
\begin{gather}
\left( S_1^{-} a_2(k_1) - c_2^{-1}(k_1,S_2^{+}) S_1^{+} \right)\alpha(k_1) = 0, \label{e:35a}\\
\left( S_2^{+}a_1(k_2) - c_1^{-1}(k_2, S_1^{-}) S_2^{-} \right) \alpha(k_2) = 0 \label{e:35b}.
\end{gather}
\end{subequations}
Recall that we are considering the interaction of two fundamental solitons, whose norming constants $C_j$ are rank-one matrices with one column identically zero, i.e., $C_j$ has the form $C_j=(\pmb{\gamma}_j\,, \mathbf{0})$, where $\pmb{\gamma}_j = (\alpha_j\,, \beta_j)^{T}$, with $\alpha_j, \beta_j \in \Complex$. Moreover, the matrices $S_1^{-}, S_2^{+}$ coincide with the norming constants, i.e., $S_1^{-} \equiv C_1$ and $S_2^{+} \equiv C_2$, which implies that ${\bf{s}}_1^{-} \equiv \pmb{\gamma}_1$, because soliton-1 is the ``fast'' soliton, and ${\bf{s}}_2^{+} \equiv \pmb{\gamma}_2$, because soliton-2 is the ``slow'' soliton.\footnote{Recall that the soliton velocities are negative, so ``slow'' and ``fast'' here are meant in absolute value.}  Here, and for the rest of the paper, ${\bf{s}}_2^{\pm}$ denotes the first column vector of the matrix $S_2^{\pm}$, and ${\bf {s}}_1^{\pm}$ denotes the first column vector of the matrix $S_1^{\pm}$. From \eref{e:91new} it follows that
\begin{gather}
\alpha(k_j)=\prod_{j=1,2}\mathrm{diag}\left( \frac{k_j+k_j^*}{2k_j^*}\,, 0\right),
\end{gather}
which means that only the first column of each of the matrix identities \eref{e:35} is specified. In the fundamental soliton case, the second column of the matrices $S_1^{-}, S_2^{+}$ is identically zero, and assuming this remains valid for $S_1^+, S_2^{-}$, then \eref{e:35} reduce to
\begin{gather}
\label{e:94system}
S_1^{+} = c_2(k_1,S_2^{+})S_1^{-} a_2(k_1), \qquad
S_2^{-} = c_1(k_2, S_1^{-})S_2^{+}a_1(k_2).
\end{gather}
Specifically, from \eref{e:scdataa} it follows that
\begin{gather}\label{e:29}
S_2^{-} = \frac{k_1^*}{k_1}\frac{k_2-k_1}{k_2-k_1^*} c_1(k_2, S_1^{-}) S_2^{+}.
\end{gather}
On the other hand, the matrix $c_1(k_2, S_1^{-})$ is a $2 \times 2$ full-rank matrix (cf. \eref{e:scdatac}), and therefore its product with $S_2^{+}$ from the right results in a $2 \times 2$ matrix whose second column is identically zero. This is consistent with the assumption that soliton-2 remains a fundamental soliton after the interaction, and Eq.~\eref{e:29} can be written in vector form as
\begin{gather}\label{e:30}
{\bf{s}}_2^{-} = \frac{k_1^*}{k_1}\frac{k_2-k_1}{k_2-k_1^*} c_1(k_2, S_1^{-}){\bf{s}}_2^{+}.
\end{gather}
Similarly, under the assumption that $S_1^+$ has the same structure as $S_1^-$ (i.e., the one that pertains to a fundamental soliton), \eref{e:35a} is equivalent to the first of \eref{e:94system}, and the latter can be written in vector form as
\begin{gather}\label{e:31}
{\bf{s}}_1^{+} = \frac{k_2^*}{k_2}\frac{k_1-k_2}{k_1-k_2^*} c_2(k_1,S_2^{+}){\bf{s}}_1^{-},
\end{gather}
where the scalar factor is the $(1,1)$ entry of the matrix $a_2(k_1)$.
Considering the explicit expression for the transmission coefficient $c_j$, one can solve \eref{e:30} for ${\bf{s}}_2^{-}$. Let us introduce the unit-norm vectors
\begin{gather}
{\bf p_j^{\pm}} = \frac{ {\bf (s_j^{\pm})^*}}{|| {\bf {s_j^{\pm}}} ||}, \quad j=1,2,
\end{gather}
where ${\bf{p}}_1^{-} =\pmb \gamma_1^*/||\pmb \gamma_1|| $ and ${\bf{p}}_2^{+} = {\pmb \gamma}_2^*/||{\pmb \gamma}_2||$, and recall the expression \eref{e:74b} for $c_j$.
It is convenient to introduce the quantity
\begin{gather}
\chi^2 = \frac{||{\bf{s}}_2^{-}||^2}{||\pmb{\gamma}_2||^2} = \frac{1}{||\pmb{\gamma}_2||^2} \left( {\bf{s}}_2^{-}\right)^\dagger {\bf{s}}_2^{-},
\end{gather}
which according to \eref{e:30} becomes
\begin{gather}
\chi^2 = \left| \frac{k_1 - k_2}{k_1 - k_2^*} \right|^2 \left( {\bf{p}}_2^{+} \right)^T c_1^\dagger(k_2,{\bf{p}}_1^{-}) c_1(k_2,{\bf{p}}_1^{-})\left({\bf{p}}_2^{+}\right)^*.
\end{gather}
Inserting equation \eref{e:74b} into the last one, and observing that
\begin{gather}
\label{e:53}
\left( {\bf{p}}_2^{+}\right)^T \left({\bf{p}}_1^{-}\right)^* \left( {\bf{p}}_1^{-}\right)^T \left({\bf{p}}_2^{+}\right)^* = \left| {\bf{p}}_1^{-} \cdot \left({\bf{p}}_2^{+}\right)^* \right|^2,
\end{gather}
and the fact that ${\bf{p}}_j^{\pm}$ are unit-norm vectors, we obtain that $\chi^2$ is explicitly given by
\begin{gather}
\label{e:38}
\chi^2 = \left| \frac{(k_1 - k_2)(k_1 + k_2^*)}{(k_1 - k_2^*)(k_1 + k_2)}\right|^2 \left\{ 1 + \frac{(k_1^2 - (k_1^*)^2)(k_2^2 - (k_2^*)^2)}{|k_1 - k_2^*|^2|k_1+k_2^*|^2} \left| {\bf{p}}_1^{-} \cdot \left({\bf{p}}_2^{+}\right)^* \right|^2\right\},
\end{gather}
where $\cdot$ denotes the scalar product between vectors, namely the sum of the products of the components. One can also check that the following holds for $\chi^2$,
\begin{gather}
\chi^2 = \frac{||{\bf{s}}_1^{+}||^2}{||\pmb{\gamma}_1||^2}.
\end{gather}
Dividing relation \eref{e:30} by $||{\bf{s}}_2^{-}||$ and relation \eref{e:31} by $||{\bf{s}}_1^{+}||$, and using the definitions of $\chi$ and ${\bf{p}}_j^{\pm}$, we find
\begin{subequations}\label{e:39}
\begin{gather}
{\bf{p}}_2^{-} = \frac{1}{\chi} \left( \frac{k_1}{k_1^*} \frac{k_2^* - k_1^*}{k_2^* - k_1} \right) \left(c_1(k_2,{\bf{p}}_1^{-})\right)^*{\bf{p}}_2^{+}, \label{e:39a}\\
{\bf{p}}_1^{+} = \frac{1}{\chi} \left( \frac{k_2}{k_2^*} \frac{k_1^*- k_2^*}{k_1^*- k_2} \right) \left(c_2(k_1,{\bf{p}}_2^{+})\right)^*{\bf{p}}_1^{-}. \label{e:39b}
\end{gather}
\end{subequations}
Substituting the expression \eref{e:74b} for $j=1$ and $k=k_2$ into Eq.~\eref{e:39a}, and for $j=2$ and $k=k_1$ into Eq.~\eref{e:39b}, and using the relations
\begin{subequations}
\begin{gather}
\left({\bf{p}}_1^{-}\right)^*\left( {\bf{p}}_1^{-} \right)^T \left({\bf{p}}_2^{+}\right)^* = \left({\bf{p}}_1^{-} \cdot \left({\bf{p}}_2^{+}\right)^*\right) \left({\bf{p}}_1^{-}\right)^*,\\
\left({\bf{p}}_2^{+}\right)^*\left( {\bf{p}}_2^{+} \right)^T \left({\bf{p}}_1^{-}\right)^* = \left( {\bf{p}}_2^{+} \cdot \left({\bf{p}}_1^{-}\right)^* \right) \left({\bf{p}}_2^{+}\right)^*
\end{gather}
\end{subequations}
respectively, we obtain the expressions of the polarization vector for soliton-1 and 2 after the interaction
\begin{subequations}\label{e:112n}
\begin{gather}
{\bf{p}}_1^{+} = \frac{1}{\chi} \frac{(k_1^*-k_2^*)(k_1^* + k_2)}{(k_1^*-k_2)(k_1^*+k_2^*)}\left\{ {\bf{p}}_1^{-} + \frac{(k_1^*)^2\left( k_2^2 - (k_2^*)^2 \right)}{(k_2^*)^2 \left( (k_1^*)^2 - k_2^2 \right)} \left(\left( {\bf{p}}_2^{+}\right)^* \cdot {\bf{p}}_1^{-} \right) {\bf{p}}_2^{+} \right\},\\
{\bf{p}}_2^{-} = \frac{1}{\chi} \frac{( k_2^* - k_1^*)(k_2^* + k_1)}{(k_2^* - k_1)(k_2^*+ k_1^*)}\left\{ {\bf{p}}_2^{+} + \frac{(k_2^*)^2}{(k_1^*)^2}\frac{ \left(k_1^2 -(k_1^*)^2 \right)}{\left((k_2^*)^2 - k_1^2 \right)}\left( \left( {\bf{p}}_1^{-} \right)^* \cdot {\bf{p}}_2^{+} \right) {\bf{p}}_1^{-} \right\},
\end{gather}
\end{subequations}
in terms of the initial polarization vectors. Eqs.~\eref{e:112n} are the analog of the well-known Manakov formulas for the coupled NLS equation, and show that interacting fundamental solitons exhibit a polarization
shift, i.e., a redistribution of energy between the components, unless the initial soliton polarizations are either parallel or orthogonal (in which case one can easily verify that the soliton polarization vectors remain the same, upon interaction, up to an overall phase shift).
Notice that Eqs.~\eref{e:112n} are symmetric with respect to interchanging indices $1$ and $2$, and limits $t \to \pm \infty$, i.e., to obtain the expression for ${\bf{p}}_1^{+}$ we simply need to swap the indices $1$ and $2$ and the limits $t \to \pm \infty$ in the corresponding relation for ${\bf{p}}_2^{-}$.

The above construction can be suitably modified to show that self-symmetric solitons always interact in a trivial way (because all the corresponding transmission coefficients are proportional to the identity, cf. \eref{e:ss_transmission}), and to characterize the interaction of a self-symmetric soliton and a fundamental soliton.
However, as we show in Fig.~\ref{fig:3}, fundamental solitons do not remain such in general upon interacting with a fundamental breather. Therefore, Manakov's method is not effective to deal with more complicated soliton interactions, since, on one hand, one has to make an a priori assumption on the nature of the solitons after the interaction and, on the other hand, if all transmission coefficients depend in a nontrivial way on the asymptotic polarizations of the solitons, there is no guarantee one would be able to find a closed form solution. Instead, we will resort to a different approach: we consider the Darboux matrices corresponding to the various types of solitons, and study soliton interactions by combining refactorization problems on generators of certain rational loop groups, and long-time asymptotics of these generators. As we will show in Sec.~6, this allows us to completely characterize all types of soliton interactions, without making any a priori assumption on the nature of the solitons after the interaction.
\begin{figure}[h!]
\centering
\begin{subfigure}{0.25\textwidth}
\centering
\includegraphics[height=1.4in,width=1.4in]{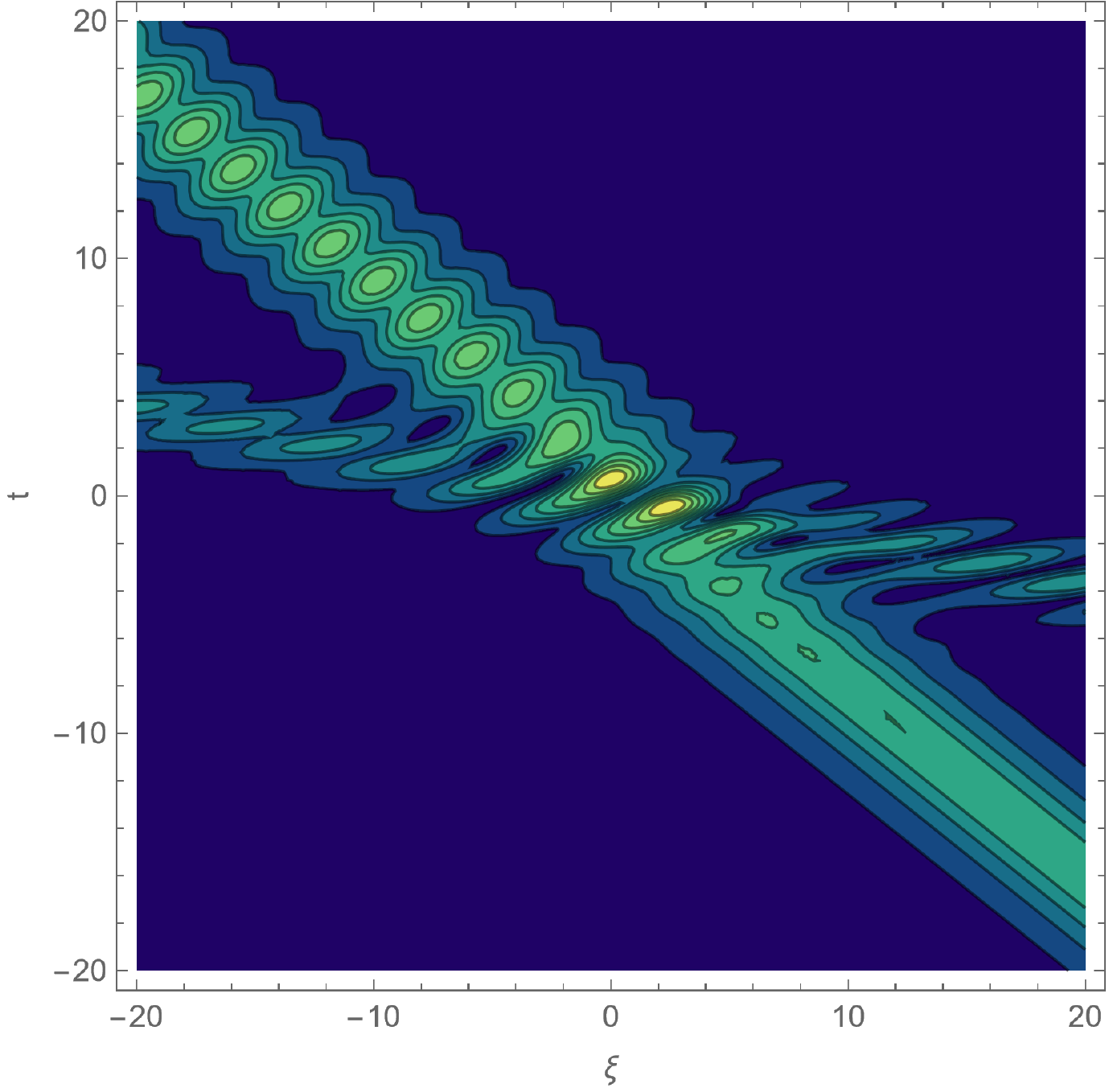}
\caption{} \label{fig:3a}
\end{subfigure}
\hfill
\begin{subfigure}{0.25\textwidth}
\centering
\includegraphics[height=1.4in,width=1.4in]{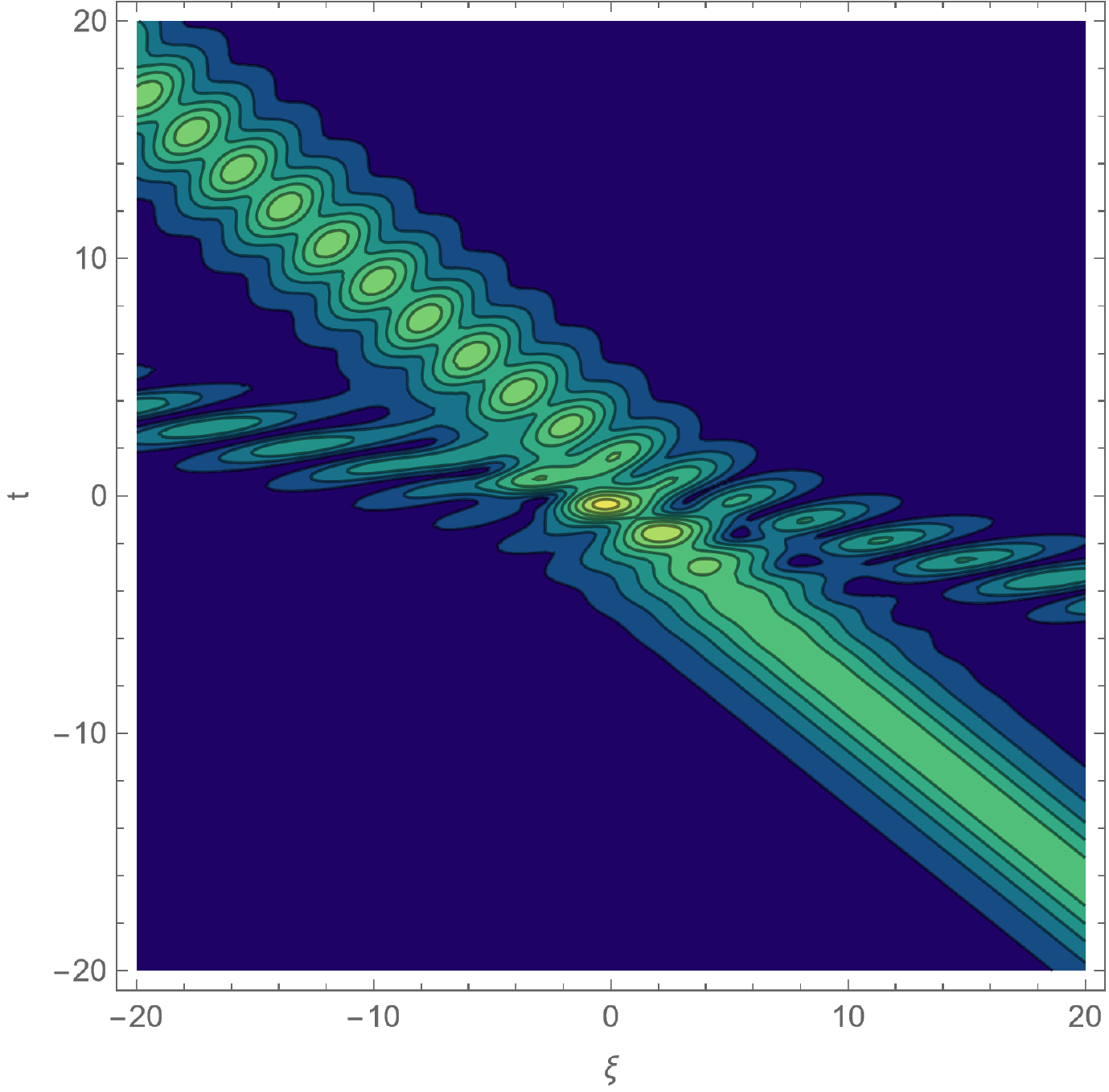}
\caption{} \label{fig:3b}
\end{subfigure}
\hfill
\begin{subfigure}{0.25\textwidth}
\centering
\includegraphics[height=1.4in,width=1.4in]{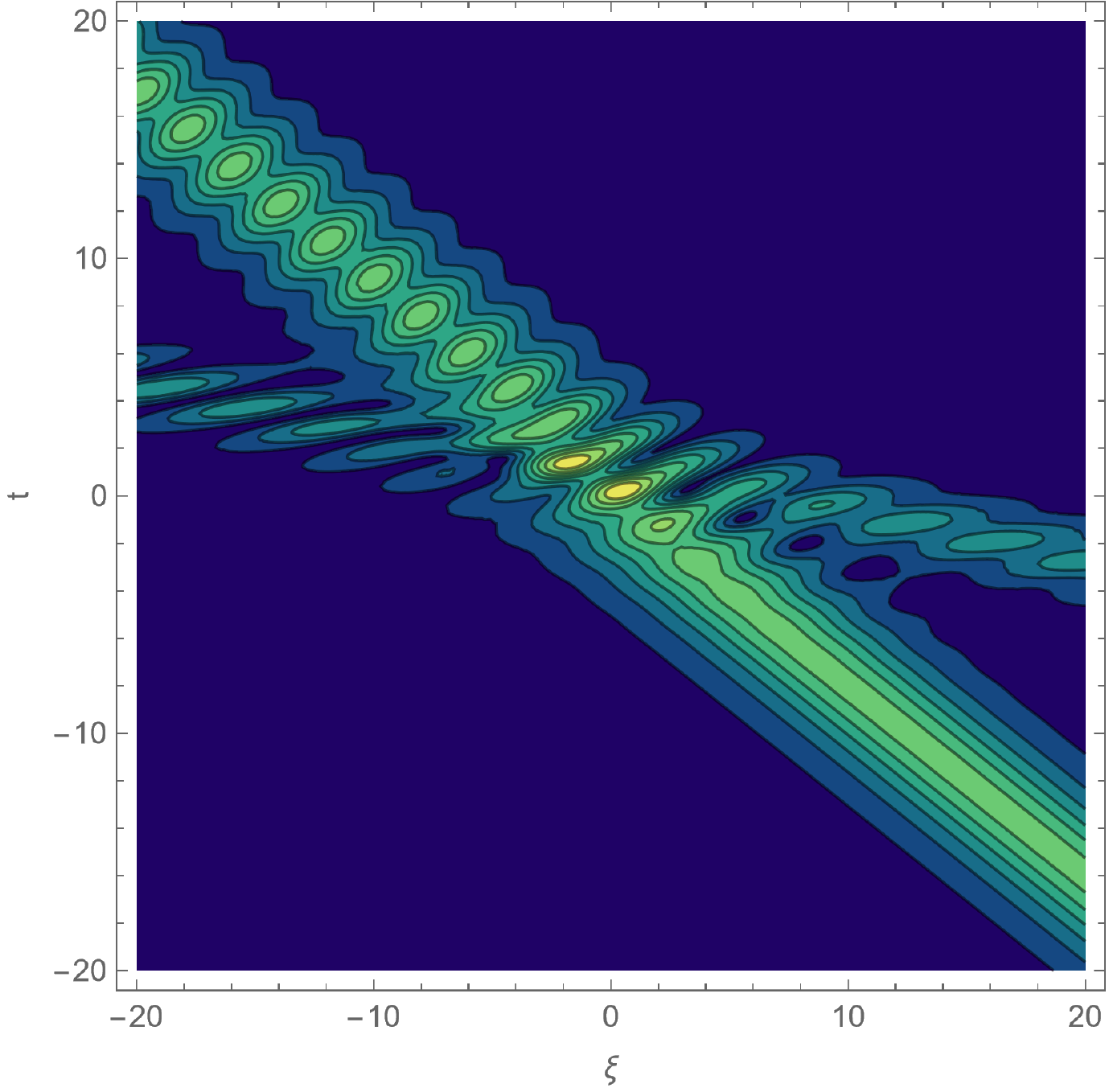}
\caption{} \label{fig:3c}
\end{subfigure}
\caption{Examples of a fundamental soliton interacting with a fundamental breather. From left to right, the three panels show the exact 2-soliton solution for three different examples. In the first example, the initial polarization vectors of the fundamental soliton and of the fundamental breather are chosen to be orthogonal to each other; in the second example they are chosen to be parallel; and in the third example they are neither orthogonal nor parallel. Clearly, the initial fundamental soliton does not maintain its structure upon interacting with the fundamental breather. Here, the soliton parameters are: $k_1 = 1/2+i/4,\,\, k_2=1+i/2,\,\, \alpha_1=e^{i},\,\,\beta_1=1,\,\,\alpha_2 = 1,\,\,\beta_2 =-e^{i}$, for the first example, $k_1=1/2+i/4,\,\,k_2=1+i/2,\,\,\alpha_1=e^{i},\,\,\beta_1=1,\,\,\alpha_2=e^{2i},\,\,\beta_2=e^{i}$, for the second example, and $k_1=1/2+i/4,\,\,k_2=1+i/2,\,\,\alpha_1=e^{i/2},\,\,\beta_1=1,\,\,\alpha_2=e^{i},\,\,\beta_2=4$, for the third example, $\kappa_1=0$ and $\kappa_2=1$ in all the examples, which denotes that soliton-1 is a fundamental soliton and soliton-2 is a fundamental breather before the interaction.}\label{fig:3}
\end{figure}

\section{Yang-Baxter maps and general soliton interactions}

In this section, we study soliton interactions in the ccSPE from the point of view of refactorization problems for B\"acklund-Darboux matrices and Yang-Baxter (YB) maps. We reproduce the results obtained previously with Manakov's method, but also extend more systematically the analysis to all possible two-soliton interactions in the models. This is done by identifying and constructing the possible B\"acklund-Darboux matrices corresponding to fundamental solitons, fundamental breathers, composite breathers and self-symmetric solitons.
With these elementary building blocks at our disposal, the description of soliton interactions is best obtained by combining abstract results on refactorization problems on generators of certain rational loop groups and long time-asymptotics of these generators. The underlying procedure is the so-called dressing method, which goes back to \cite{ZS}. We review the main idea in the next section.

It turns out that the equivalent Lax pair description for the ccSPE introduced in \cite{FL22} is more convenient for the purposes of the beginning of this section. We first review it, and then use it to derive some general results on dressing factors and Yang-Baxter maps. Then, from Section \ref{longtime} onwards, we translate the results back into our original spectral parametrization and go on to derive the Yang-Baxter maps acting on the (polarization) vectors characterizing the various types of solitons of the model.

\subsection{Review of Feng--Ling Lax pair formulation and dressing method}\label{review_dressing}

In \cite{FL22}, the authors consider the following Lax pair obtained from \eref{e:1.3b} by the hodograph transformation \eref{e:4differential} from the coordinates $(x,t)$ to the coordinates $(\xi,t)$ and with the change of spectral parameter $\lambda=1/k$:
\begin{eqnarray}
\label{aux-pb}
\begin{cases}
	\Phi_{\xi}(\xi,t,\lambda)={\cal F}(\xi,t,\lambda)\,\Phi(\xi,t,\lambda)\,,\\
	\Phi_{t}(\xi,t,\lambda)={\cal V}(\xi,t,\lambda)\,\Phi(\xi,t,\lambda)\,,
\end{cases}
\end{eqnarray}
where\footnote{Note that the sign of $\cal{ V}$ has been reversed with respect to \cite{FL22} to account for the sign difference in the way the ccSPE equation is written.}
\begin{eqnarray}
	{\cal F}(\xi,t,\lambda)=-i\frac{\rho(\xi,t)}{\lambda}\Sigma_3-\frac{1}{\lambda}\Sigma_3 \partial_\xi V_{0}(\xi,t)\,,~~
	{\cal V}(\xi,t,\lambda)=\frac{i}{4}\lambda\Sigma_3+\frac{i}{2}V_0(\xi,t)
\end{eqnarray}
with
\begin{equation}
	V_0(\xi,t)=\begin{pmatrix}
		0 & U(\xi,t)\\
		U^\dagger(\xi,t) & 0
	\end{pmatrix}\,.
\end{equation}
The advantage of this formulation is that ${\cal V}(\xi,t,\lambda)$ has the same structure as that of the well-known Lax matrix for a multicomponent nonlinear Schr\"odinger equation with $2\times 2$ matrix potential $U(\xi,t)$. However, the structure of ${\cal F}(\xi,t,\lambda)$ shows that the ccSPE corresponds to a negative flow in that hierarchy. The Lax pair possesses the familiar symmetry
\begin{eqnarray}
	\label{redsym1}
	{\cal F}^\dagger(\xi,t,\lambda^*) = -{\cal F}(\xi,t,\lambda)\,,~~{\cal V}^\dagger(\xi,t,\lambda^*) = -{\cal V}(\xi,t,\lambda),
\end{eqnarray}
but also, and crucially for the ccSPE, the additional symmetry
\begin{eqnarray}
		\label{redsym2}
\Lambda\,{\cal F}^*(\xi,t,-\lambda^*)\,\Lambda^{-1}={\cal F}(\xi,t,\lambda)\,,~~	\Lambda\,{\cal V}^*(\xi,t,-\lambda^*)\,\Lambda^{-1}={\cal V}(\xi,t,\lambda),
\end{eqnarray}
with
\begin{equation}
\Lambda = \diag\left(i \sigma_2\,, i \sigma_2\right), \quad  \sigma_2 = \begin{pmatrix}
0 & -i\\
i & 0
\end{pmatrix}.
\end{equation}
The summary of the ideas and results of the dressing method as introduced in \cite{ZS}, adapted to the present setting, are as follows. We also refer the reader to \cite{Gerd} for a nice review of these ideas focused on multicomponent NLS systems which can be useful for some aspects of the present work. Starting from the potential $V_0=0$, with the goal of constructing pure soliton solutions, one considers two matrix-valued rational functions $G_{\pm}(\xi,t,\lambda)$ which are analytic, respectively, in the upper and lower half-plane for each $\xi,t$, satisfy
$G_{+} G_{-}=I$ for $\lambda \in \mathbb{R}$, $\displaystyle \lim_{\lambda\to\infty}G_\pm(\xi,s;\lambda)=I$, and are degenerate at a finite number of prescribed values of $\lambda$ in their respective domain of analyticity. The structure of their degeneracies is determined by fixing degeneracy spaces corresponding to the norming constants in the IST. An important result of \cite{ZS} is the following. The $(\xi,t)$ dependence of those degeneracy spaces is controlled by a solution of \eref{aux-pb} for $V_0$, and the degeneracy spaces uniquely fix certain projectors entering in the elementary factors contained in $G_\pm$ (matrix Blaschke factors). Below we discuss these aspects in more details.

Since we consider $V_0=0$, the degeneracy space  of $G_+$ at $\lambda_j$ is of the form
\begin{equation}
	\label{kernelcondition}
	\ker G_+(\xi,t,\lambda_j)={\rm span}\Big\lbrace e^{(\frac{i}{4}\lambda_jt-\frac{i}{\lambda_j}\xi)\Sigma_3}W_j \Big\rbrace,
\end{equation}
where $W_j$ is a rectangular matrix whose columns are constant linearly independent vectors fixing the norming constants. As we will show below, the number of these vectors determines the rank of the corresponding projector.
With ${\cal V}(\xi,t,\lambda)=\frac{i}{4}\lambda\Sigma_3+\frac{i}{2}V_0(\xi,t)$ then, in general, a new solution $\widetilde{{\cal V}}(\xi,t,\lambda)=\frac{i}{4}\lambda\Sigma_3+\frac{i}{2}\widetilde{V}_0(\xi,t)$ is obtained by the formula\footnote{It is a standard result that one could equally use $G_-$.}
\begin{equation}
	\label{dressing-formula}
	\widetilde{{\cal V}}(\xi,t,\lambda)=\left(\partial_tG_+(\xi,t,\lambda)+G_+(\xi,t,\lambda) {\cal V}(\xi,t,\lambda)\right)G_+(\xi,t,\lambda)^{-1}.
\end{equation}
The key for explicit formulae is to characterize $G_+$. The {\it generic} result of \cite{ZS} is that $G_+$ is a product of dressing factors (also known as matrix Blaschke factors or B\"acklund-Darboux matrices) of the form
\begin{eqnarray}
	I+\left(\frac{\lambda-\lambda_j^+}{\lambda-\lambda_j^-}-1\right)\Pi_j(\xi,t)\,,
\end{eqnarray}
where $\Pi_j(\xi,t)$ are appropriate projectors.
In the case where $G_+$ is a product of $N$ such factors, $j=1,\dots,N$, we have
\begin{equation}\label{e:n106}
	G_+(\xi,t,\lambda)=I-\frac{1}{\lambda}\sum_{j=1}^N(\lambda_j^+ - \lambda_j^-)\Pi_j(\xi,t)+O\left(\frac{1}{\lambda^2}\right)\,.
\end{equation}
 Inserting Eq.~\eref{e:n106} in \eref{dressing-formula}, recalling that
\begin{gather}
{\cal V}(\xi,t,\lambda)=\frac{i}{4}\lambda\Sigma_3+\frac{i}{2}V_0(\xi,t), \quad \widetilde{{\cal V}}(\xi,t,\lambda)=\frac{i}{4}\lambda\Sigma_3+\frac{i}{2}\widetilde{V}_0(\xi,t),
\end{gather}
we obtain the following relation between the new solution and the old solution by comparing the constant terms in the expansion in $1/\lambda$
\begin{eqnarray}
	\label{dressedsolution}
	\widetilde{V}_0(\xi,t)=V_0(\xi,t)-\sum_{j=1}^N\frac{\lambda_j^+-\lambda_j^-}{2}\left[\Pi_j(\xi,t),\Sigma_3\right]\,.
\end{eqnarray}
In our case, recall that we take $V_0(\xi,t)=0$ as the trivial seed solution.

The exact details on the location of $\lambda^\pm$ and on the properties of the projector $\Pi_j$ depend on the particular model under consideration. We will study this below in conjunction with the two symmetries \eref{redsym1}-\eref{redsym2}. The important message is that the order in which these factors appear in $G_+$ is irrelevant for the final result. This is the important observation that gives rise to Yang-Baxter maps via refactorization problems for the dressing factors. In the sequel, we will follow closely the methodology of \cite{CZ} to combine the refactorization problem with long-time asymptotics in order to derive explicitly Yang-Baxter maps acting on the parameters (norming constants) characterizing the solitons within a multisoliton solution.

\subsection{Refactorization and elementary B\"acklund-Darboux matrices for the ccSPE}

The previous discussion motivates the study of certain refactorization problems for matrix-valued rational functions of the form \begin{equation}\label{e:n111}
	g_{\alpha,\beta,\Pi}(\lambda)=I+\left(\frac{\lambda-\alpha}{\lambda-\beta}-1\right)\Pi,
\end{equation}
 where $\Pi$ is a projector and $\alpha\neq \beta$ are fixed complex numbers. The pointwise dependence on $(\xi,t)$ can be dropped at this stage to formulate the general results. It can be reinstated later on when using these results within the dressing method thanks to the property \eref{kernelcondition} of the degeneracy spaces.

The rigorous study of refactorization problems in conjunction with the dressing method has been undertaken in detail e.g. in \cite{TU}. When forgetting about the pointwise dependence on the independent variables which are present in the dressing methods, such results form part of general structural properties of certain rational loop groups which were studied in detail more recently in \cite{WGW}. The symmetry \eref{redsym1} implies that we can use directly the results of Sec.~6 in \cite{TU}. In particular, symmetry \eref{redsym1} implies that dressing factors must satisfy the symmetry
$g_{\alpha, \beta, \Pi}^\dagger (\xi,t,\lambda^*) = g_{\alpha, \beta, \Pi}^{-1}(\xi,t,\lambda)$.
Eq.~\eref{e:n111} then implies the constraints $\beta=\alpha^*$ and $\Pi^\dagger=\Pi$, and therefore \eref{e:n111} becomes
\begin{equation}\label{e:128n}
	g_{\alpha,\Pi}(\lambda)=I-\frac{\alpha-\alpha^*	}{\lambda-\alpha^*}\Pi\,,~~\alpha \in \Complex\setminus\Real\,,~~\Pi^2=\Pi^\dagger=\Pi\,.
\end{equation}
An elementary factor $g_{\alpha,\Pi}$ is characterized by a complex number with nonzero imaginary part and by an Hermitian projector $\Pi$ which we call its data. We then have the following result on the refactorization of two such factors (Sec.~6 of \cite{TU}).
\begin{theorem}\label{Threfac}
	Let $\Pi_1$, $\Pi_2$ be two orthogonal projectors and $\alpha_1,\alpha_2 \in \mathbb{C}\setminus\Real$ be such that $\alpha_1\neq \alpha_2$ and $\alpha_1\neq\alpha_2^*$. The unique solution of the refactorization problem
	\begin{gather}\label{e:124refac}
		\left(I-\frac{\alpha_2-\alpha_2^*}{\lambda-\alpha_2^*}\Pi_2\right)
\left(I-\frac{\alpha_1-\alpha_1^*}{\lambda-\alpha_1^*}\Pi_1\right)
		=
		\left(I-\frac{\alpha_1-\alpha_1^*}{\lambda-\alpha_1^*}P_1\right)
\left(I-\frac{\alpha_2-\alpha_2^*}{\lambda-\alpha_2^*}P_2\right)
	\end{gather}
	is given by
	\begin{gather}\label{e:10}
		P_j=\phi^{-1}\Pi_j\phi\,,~~j=1,2\,,
	\end{gather}
	where
	\begin{gather}\label{e:11}
	\phi=(\alpha_1^*-\alpha_2)I+(\alpha_1-\alpha_1^*)\Pi_1+(\alpha_2-\alpha_2^*)\Pi_2\,.
	\end{gather}
	Moreover $P_j$, $j=1,2$ are also orthogonal projectors.	
\end{theorem}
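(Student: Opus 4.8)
The plan is to treat \eref{e:124refac} as an identity between two $\mathrm{GL}$-valued rational functions of $\lambda$: both sides equal $I$ at $\lambda=\infty$ and have only simple poles, located at $\lambda=\alpha_1^*$ and $\lambda=\alpha_2^*$ (the right-hand side being regular at $\alpha_1$ and $\alpha_2$). Hence \eref{e:124refac} holds if and only if the residues at $\alpha_1^*$ and at $\alpha_2^*$ agree on both sides. A one-line computation gives, for the left-hand side, the residues $-(\alpha_1-\alpha_1^*)\,g_{\alpha_2,\Pi_2}(\alpha_1^*)\Pi_1$ at $\alpha_1^*$ and $-(\alpha_2-\alpha_2^*)\,\Pi_2\,g_{\alpha_1,\Pi_1}(\alpha_2^*)$ at $\alpha_2^*$, and, for the right-hand side, $-(\alpha_1-\alpha_1^*)\,P_1\,g_{\alpha_2,P_2}(\alpha_1^*)$ and $-(\alpha_2-\alpha_2^*)\,g_{\alpha_1,P_1}(\alpha_2^*)\,P_2$. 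So the proof reduces to four points: (i) $\phi$ in \eref{e:11} is invertible; (ii) the two residue identities hold for $P_j=\phi^{-1}\Pi_j\phi$; (iii) each $P_j$ is a Hermitian idempotent; and (iv) uniqueness, which will fall out of the residue analysis.

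\noindent\textbf{Invertibility of $\phi$ --- the main obstacle.} I expect this to be the delicate step. The idea is to pass to the Hermitian involutions $R_j=2\Pi_j-I$ (so $R_j^\dagger=R_j=R_j^{-1}$): substituting $\Pi_j=\tfrac12(I+R_j)$ in \eref{e:11} and collecting the scalar part yields $\phi=(\Re\alpha_1-\Re\alpha_2)I+i(\Im\alpha_1)R_1+i(\Im\alpha_2)R_2=:xI+iH$, with $x\in\Real$ and $H:=(\Im\alpha_1)R_1+(\Im\alpha_2)R_2$ Hermitian. Then $\mathrm{spec}\,\phi=\{\,x+ih:h\in\mathrm{spec}\,H\,\}$, so $\phi$ is invertible unless $x=0$ and $0\in\mathrm{spec}\,H$. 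If $x=0$, i.e.\ $\Re\alpha_1=\Re\alpha_2$, the hypotheses $\alpha_1\neq\alpha_2$ and $\alpha_1\neq\alpha_2^*$ force $|\Im\alpha_1|\neq|\Im\alpha_2|$; since $R_1R_2$ is unitary, $H^2=\bigl((\Im\alpha_1)^2+(\Im\alpha_2)^2\bigr)I+(\Im\alpha_1)(\Im\alpha_2)\bigl(R_1R_2+(R_1R_2)^{-1}\bigr)$ has spectrum bounded below by $(|\Im\alpha_1|-|\Im\alpha_2|)^2>0$, contradicting $0\in\mathrm{spec}\,H$. Hence $\phi$ is invertible under the stated hypotheses, and $P_j=\phi^{-1}\Pi_j\phi$ is well defined, with $P_j^2=\phi^{-1}\Pi_j^2\phi=P_j$.

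\noindent\textbf{Verifying the refactorization and orthogonality.} For the residue identity at $\alpha_1^*$ I would multiply it on the left by $\phi$ and use the intertwiners $\phi P_j=\Pi_j\phi$ and $\Pi_i\phi P_j=\Pi_i\Pi_j\phi$; it becomes $\phi\bigl(I-\gamma\Pi_2\bigr)\Pi_1=\Pi_1\phi-\gamma\,\Pi_1\Pi_2\phi$ with $\gamma=\dfrac{\alpha_2-\alpha_2^*}{\alpha_1^*-\alpha_2^*}$, which I then check by inserting the explicit $\phi$ of \eref{e:11}: both sides collapse to $(\alpha_1-\alpha_2)\Pi_1-\dfrac{(\alpha_1-\alpha_1^*)(\alpha_2-\alpha_2^*)}{\alpha_1^*-\alpha_2^*}\Pi_1\Pi_2\Pi_1$. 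The residue identity at $\alpha_2^*$ follows from the same manipulation with the two factors exchanged. For $P_j^\dagger=P_j$ I would use that $\phi^\dagger=(\alpha_1-\alpha_2^*)I-(\alpha_1-\alpha_1^*)\Pi_1-(\alpha_2-\alpha_2^*)\Pi_2=sI-\phi$ with $s=(\alpha_1+\alpha_1^*)-(\alpha_2+\alpha_2^*)\in\Real$; thus $\phi$ is normal, and, writing $\phi\phi^\dagger=s\phi-\phi^2$ and using that $[\phi,\Pi_k]$ is a scalar multiple of $[\Pi_1,\Pi_2]$ together with $\Pi_k[\Pi_1,\Pi_2]+[\Pi_1,\Pi_2]\Pi_k=[\Pi_1,\Pi_2]$ for $k=1,2$, a short computation gives $[\phi\phi^\dagger,\Pi_k]=0$. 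Consequently $\phi\phi^\dagger$ commutes with $\phi$ as well, and $P_j^\dagger=\phi^\dagger\Pi_j(\phi^\dagger)^{-1}=\phi^{-1}(\phi\phi^\dagger)\Pi_j(\phi\phi^\dagger)^{-1}\phi=\phi^{-1}\Pi_j\phi=P_j$, so the $P_j$ are orthogonal projectors.

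\noindent\textbf{Uniqueness.} Let $g_{\alpha_1,P_1'}g_{\alpha_2,P_2'}$ be any factorization of the left-hand side with $P_1',P_2'$ projectors. Comparing ranges in the residue at $\alpha_1^*$, and using that $g_{\alpha_2,\cdot}(\alpha_1^*)$ is invertible since $\alpha_1\neq\alpha_2^*$, yields $\mathrm{range}\,P_1'=g_{\alpha_2,\Pi_2}(\alpha_1^*)\,\mathrm{range}\,\Pi_1$; comparing ranges in the value of \eref{e:124refac} at $\lambda=\alpha_1$, and using that $g_{\alpha_2,\cdot}(\alpha_1)$ is invertible since $\alpha_1\neq\alpha_2$, yields $\ker P_1'=g_{\alpha_2,\Pi_2}(\alpha_1)\,\ker\Pi_1$. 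Both subspaces are fixed by the data, so $P_1'$ is determined; symmetrically $P_2'$ is determined. Since the explicit $P_j=\phi^{-1}\Pi_j\phi$ already provides a solution, it is the unique one, completing the plan.
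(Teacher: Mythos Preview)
Your argument is correct and self-contained. The paper does not actually prove this theorem: it quotes the result from Terng--Uhlenbeck \cite{TU} and only remarks that ``the most difficult part in this result is to establish that the matrix $\phi$ is always invertible.'' So there is no paper proof to compare with in detail; what you have written goes well beyond what the paper supplies.

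A few comments on your approach relative to what is standard in the literature. Your treatment of the invertibility of $\phi$ via the Hermitian involutions $R_j=2\Pi_j-I$, rewriting $\phi=xI+iH$ with $H$ Hermitian and then bounding $\mathrm{spec}\,H^2$ from below using the unitarity of $R_1R_2$, is clean and arguably simpler than the argument in \cite{TU}, which proceeds by analyzing the kernel of $\phi$ directly through the projector structure. Your verification of the refactorization by residue matching at $\alpha_1^*,\alpha_2^*$ plus Liouville is standard and correct. The orthogonality argument via $\phi^\dagger=sI-\phi$ (normality of $\phi$) together with $[\phi\phi^\dagger,\Pi_k]=0$ is a nice packaging; the computation you sketch, using $\Pi_k[\Pi_1,\Pi_2]+[\Pi_1,\Pi_2]\Pi_k=[\Pi_1,\Pi_2]$ and $\phi[\Pi_1,\Pi_2]+[\Pi_1,\Pi_2]\phi=s[\Pi_1,\Pi_2]$, does go through. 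Your uniqueness argument, pinning down $\mathrm{range}\,P_1'$ from the residue at $\alpha_1^*$ and $\ker P_1'$ from the value at $\lambda=\alpha_1$, is also correct and is essentially the approach taken in the loop-group literature.
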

Note that the rank of the projectors is preserved under refactorization. The most difficult part in this result is to establish that the matrix $\phi$ is always invertible. This result shows that the refactorization problem induces a (parametric) map on pairs of orthogonal projectors\footnote{The attentive reader will notice that we have included the transposition $(P_1,P_2)\mapsto (P_2,P_1)$ in the definition of $R$ so that it satisfies the Yang-Baxter equation, see corollary below. Without this, $R$ would satisfy the so-called braided version of the Yang-Baxter equation. }
\begin{equation}
	R_{12}(\alpha_1,\alpha_2):~~	(\Pi_1,\Pi_2)\mapsto (P_2,P_1).
\end{equation}
The next result implies that this map is a (parametric) Yang-Baxter (YB) map.
\begin{lemma}
	Let $\alpha_j \in \mathbb{C}\setminus\Real$, $j=1,2,3$ be as in Theorem~\ref{Threfac}, i.e., $\alpha_i\neq \alpha_j$ and $\alpha_i\neq\alpha_j^*$, $i\neq j$.  If
	\begin{equation}
		g_{\alpha_3,\Pi_3}(\lambda)\,g_{\alpha_2,\Pi_2}(\lambda)\,g_{\alpha_1,\Pi_1}(\lambda)=g_{\alpha_3,P_3}(\lambda)\,g_{\alpha_2,P_2}(\lambda)\,g_{\alpha_1,P_1}(\lambda)\,,
	\end{equation}
	Then $\Pi_j=P_j$, $j=1,2,3$.	
\end{lemma}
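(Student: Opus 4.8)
The plan is to prove that an \emph{ordered} product of elementary factors $g_{\alpha_3,P_3}\,g_{\alpha_2,P_2}\,g_{\alpha_1,P_1}$ determines each $P_j$ uniquely; unlike the refactorization of Theorem~\ref{Threfac}, the spectral parameters now occur in the same order on both sides, so there is nothing to reshuffle. First I would record three elementary facts about a factor $g_{\alpha,\Pi}$ as in \eref{e:128n}: (i) it is invertible for every $\lambda\notin\{\alpha,\alpha^*\}$, with $g_{\alpha,\Pi}(\lambda)^{-1}=I+\frac{\alpha-\alpha^*}{\lambda-\alpha}\Pi$, a consequence of the symmetry $g_{\alpha,\Pi}^\dagger(\lambda^*)=g_{\alpha,\Pi}^{-1}(\lambda)$ already used to arrive at \eref{e:128n}; (ii) at its degeneracy point one has $g_{\alpha,\Pi}(\alpha)=I-\Pi$, whose kernel is exactly $\operatorname{Im}\Pi$; and (iii) since $\Pi=\Pi^\dagger=\Pi^2$, it is the orthogonal projector onto $\operatorname{Im}\Pi$ and is therefore completely determined by that subspace.

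Granting these, write $L(\lambda):=g_{\alpha_3,\Pi_3}(\lambda)\,g_{\alpha_2,\Pi_2}(\lambda)\,g_{\alpha_1,\Pi_1}(\lambda)$, which by hypothesis also equals $g_{\alpha_3,P_3}(\lambda)\,g_{\alpha_2,P_2}(\lambda)\,g_{\alpha_1,P_1}(\lambda)$. Evaluating at $\lambda=\alpha_1$: because $\alpha_1\notin\{\alpha_2,\alpha_2^*,\alpha_3,\alpha_3^*\}$ the two leftmost factors are invertible there by (i), while $g_{\alpha_1,\Pi_1}(\alpha_1)=I-\Pi_1$ by (ii), so $\ker L(\alpha_1)=\ker(I-\Pi_1)=\operatorname{Im}\Pi_1$; doing the same on the $P$-side gives $\ker L(\alpha_1)=\operatorname{Im}P_1$, hence $\operatorname{Im}\Pi_1=\operatorname{Im}P_1$ and $\Pi_1=P_1$ by (iii). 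Multiplying the identity on the right by $g_{\alpha_1,\Pi_1}(\lambda)^{-1}$ (an identity of rational matrix functions) cancels the rightmost factor and leaves $g_{\alpha_3,\Pi_3}(\lambda)\,g_{\alpha_2,\Pi_2}(\lambda)=g_{\alpha_3,P_3}(\lambda)\,g_{\alpha_2,P_2}(\lambda)$. Repeating the evaluation at $\lambda=\alpha_2$, where $g_{\alpha_3,\Pi_3}(\alpha_2)$ is invertible since $\alpha_2\notin\{\alpha_3,\alpha_3^*\}$, yields $\operatorname{Im}\Pi_2=\operatorname{Im}P_2$ and hence $\Pi_2=P_2$; cancelling $g_{\alpha_2,\Pi_2}$ leaves $g_{\alpha_3,\Pi_3}(\lambda)=g_{\alpha_3,P_3}(\lambda)$, and comparing the residues at $\lambda=\alpha_3^*$ (equivalently, evaluating at $\lambda=\alpha_3$) gives $\Pi_3=P_3$.

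The only point requiring attention is that at each stage the factors being peeled off are genuinely invertible at the chosen evaluation point; this is precisely guaranteed by the hypotheses $\alpha_i\neq\alpha_j$ and $\alpha_i\neq\alpha_j^*$ for $i\neq j$ (together with $\alpha_j\notin\Real$, which ensures $\alpha_j\neq\alpha_j^*$), so I do not expect a genuine obstacle — the argument is essentially bookkeeping once facts (i)--(iii) are in place. The same induction on the number of factors shows, more generally, that an ordered product $g_{\alpha_n,\Pi_n}\cdots g_{\alpha_1,\Pi_1}$, with the $\alpha_j$ pairwise satisfying $\alpha_i\neq\alpha_j$ and $\alpha_i\neq\alpha_j^*$, determines the data $(\alpha_j,\Pi_j)$ uniquely, which is the natural companion statement to Theorem~\ref{Threfac}.
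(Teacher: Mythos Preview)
Your argument is correct and self-contained: the three facts (i)--(iii) about $g_{\alpha,\Pi}$ are immediate from \eref{e:128n}, and the successive evaluation at $\lambda=\alpha_1,\alpha_2,\alpha_3$ together with right-cancellation of the common factor is exactly the right bookkeeping, with the hypotheses $\alpha_i\neq\alpha_j,\alpha_j^*$ guaranteeing the needed invertibility at each step.

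The paper does not actually give a proof here; it simply refers to Prop.~3.10 of \cite{Li}. Your kernel-identification-and-peel argument is the standard route for such uniqueness statements (and is essentially what one finds in that reference), so you have supplied what the paper omits. The added remark that the same induction works for $n$ factors is a nice bonus and is indeed the natural generalization.
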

\begin{proof}
	The proof is the same as in Prop. 3.10 of \cite{Li}.
\end{proof}

As a consequence, we get the following result which goes back at least to \cite{KP}.
\begin{corollary}
	The map $R_{12}(\alpha_1,\alpha_2)$ satisfies the Yang-Baxter equation
	\begin{equation}
\label{e:YB}
		R_{12}(\alpha_1,\alpha_2)\,R_{13}(\alpha_1,\alpha_3)\,R_{23}(\alpha_2,\alpha_3)=R_{23}(\alpha_2,\alpha_3)\,R_{13}(\alpha_1,\alpha_3)\,R_{12}(\alpha_1,\alpha_2).
	\end{equation}
\end{corollary}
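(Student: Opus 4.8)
The plan is to reduce the claim to a single ``braid relation'' satisfied by the refactorization operation, and then invoke the standard device that turns such a relation into the (unbraided) Yang--Baxter equation \eref{e:YB}. The braid relation itself comes from associativity of a product of three elementary factors, together with the uniqueness statement of the preceding Lemma, applied along the two reduced expressions of the longest element of $S_3$ in the adjacent transpositions.

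First I would fix $\alpha_1,\alpha_2,\alpha_3\in\Complex\setminus\Real$, pairwise distinct with none equal to the complex conjugate of another, together with Hermitian projectors $\Pi_1,\Pi_2,\Pi_3$, and consider the ordered product $g_{\alpha_3,\Pi_3}(\lambda)\,g_{\alpha_2,\Pi_2}(\lambda)\,g_{\alpha_1,\Pi_1}(\lambda)$. Using Theorem~\ref{Threfac} to interchange neighbouring factors one at a time, I would bring this product to the form $g_{\alpha_1,\cdot}(\lambda)\,g_{\alpha_2,\cdot}(\lambda)\,g_{\alpha_3,\cdot}(\lambda)$ along two different sequences of three adjacent refactorizations: (i) swap the rightmost pair of factors, then the leftmost pair, then the rightmost pair again; and (ii) swap the leftmost pair, then the rightmost pair, then the leftmost pair again. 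Each sequence realizes the reversal of the word of pole parameters, i.e. the longest element of $S_3$, via one of its two reduced expressions, and each individual step is an instance of the refactorization \eref{e:124refac} with the new projectors given by \eref{e:10}--\eref{e:11}. Every step is legitimate because at each stage the two factors being swapped carry distinct pole parameters from $\{\alpha_1,\alpha_2,\alpha_3\}$, so the hypotheses of Theorem~\ref{Threfac} are met, and because that theorem guarantees that the rank of each projector is preserved.

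Next I would apply the preceding Lemma (after the innocuous relabeling of the $\alpha_j$ that puts the pole word $\alpha_1,\alpha_2,\alpha_3$ into the order appearing there): both sequences exhibit the \emph{same} rational loop-group element $g_{\alpha_3,\Pi_3}g_{\alpha_2,\Pi_2}g_{\alpha_1,\Pi_1}$ as an ordered product with pole parameters in the fixed order $\alpha_1,\alpha_2,\alpha_3$, and the Lemma asserts that such an ordered factorization is unique; hence the triple of projectors produced is the same along both sequences. It then remains to read off the transformation of $(\Pi_1,\Pi_2,\Pi_3)$ effected by each sequence. Writing $\check R(\alpha_i,\alpha_j)$ for the map on a pair of projectors induced by a single adjacent refactorization \emph{before} the flip, sequence (i) uses the parameter pairs in the order $(\alpha_1,\alpha_2),(\alpha_1,\alpha_3),(\alpha_2,\alpha_3)$ and sequence (ii) in the order $(\alpha_2,\alpha_3),(\alpha_1,\alpha_3),(\alpha_1,\alpha_2)$; inserting the transposition $\tau$ that is built into the definition $R=\tau\circ\check R$ recorded above \eref{e:YB}, and distributing it over the three tensor slots, identifies these two compositions with $R_{12}(\alpha_1,\alpha_2)\,R_{13}(\alpha_1,\alpha_3)\,R_{23}(\alpha_2,\alpha_3)$ and $R_{23}(\alpha_2,\alpha_3)\,R_{13}(\alpha_1,\alpha_3)\,R_{12}(\alpha_1,\alpha_2)$ respectively. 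Their equality is precisely \eref{e:YB}.

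The substantial content is thus already in place: the invertibility of the matrix $\phi$ in \eref{e:11} and the preservation of projector ranks are part of Theorem~\ref{Threfac}, and the uniqueness input is exactly the preceding Lemma. What remains, and the only point I expect to need care, is the purely combinatorial bookkeeping — matching the order in which the adjacent refactorizations are performed and the accompanying permutations of the three slots against the subscript pattern in $R_{12}R_{13}R_{23}=R_{23}R_{13}R_{12}$, and verifying that composing $\check R$ with $\tau$ converts the braid relation into the unbraided Yang--Baxter equation. This step is routine and is carried out in exactly this manner in \cite{KP} (see also Prop.~3.10 of \cite{Li}); I would perform it explicitly and appeal to those references for the general mechanism.
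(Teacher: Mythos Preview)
Your proposal is correct and follows exactly the approach the paper intends: the paper does not spell out a proof but presents the corollary as a direct consequence of the uniqueness Lemma (and Theorem~\ref{Threfac}), citing \cite{KP}, and your argument---refactorizing a triple product along the two reduced words for the longest element of $S_3$ and invoking uniqueness to equate the resulting projector triples---is precisely that standard derivation. Your added care about the braided-versus-unbraided bookkeeping with the flip $\tau$ is appropriate and matches the paper's footnote remark that the transposition is built into the definition of $R$.
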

The meaning of this equation and our notations are as follows. For instance, on the triplet $(\Pi_1,\Pi_2,\Pi_3)$, the LHS of \eref{e:YB} acts as
\begin{eqnarray*}
	&&R_{12}(\alpha_1,\alpha_2)\,R_{13}(\alpha_1,\alpha_3)\,R_{23}(\alpha_2,\alpha_3)\,(\Pi_1,\Pi_2,\Pi_3)\\
	&=&	R_{12}(\alpha_1,\alpha_2)\,R_{13}(\alpha_1,\alpha_3)\,(\Pi_1,\Pi_3^{(i)},\Pi_2^{(i)})\\
	&=&R_{12}(\alpha_1,\alpha_2)\,(\Pi_3^{(ii)},\Pi_1^{(i)},\Pi_2^{(i)})\\
	&=&(\Pi_3^{(ii)},\Pi_2^{(ii)},\Pi_1^{(ii)})\equiv(P_3,P_2,P_1)\,.
\end{eqnarray*}
The YB equation means that the RHS of \eref{e:YB} produces exactly the same final result, although the YB maps act in a different order and the intermediate projectors are not equal:
\begin{eqnarray*}
	&&R_{23}(\alpha_2,\alpha_3)\,R_{13}(\alpha_1,\alpha_3)\,R_{12}(\alpha_1,\alpha_2)\,(\Pi_1,\Pi_2,\Pi_3)\\
	&=&	R_{23}(\alpha_2,\alpha_3)\,R_{13}(\alpha_1,\alpha_3)\,(\Pi_2^{(1)},\Pi_1^{(1)},\Pi_3)\\
	&=&R_{23}(\alpha_2,\alpha_3)\,(\Pi_2^{(1)},\Pi_3^{(1)},\Pi_1^{(2)})\\
	&=&(\Pi_3^{(2)},\Pi_2^{(2)},\Pi_1^{(2)})=(P_3,P_2,P_1)\,.
\end{eqnarray*}
In our case, the model also enjoys the additional symmetry \eref{redsym2}, which has nontrivial consequences on the structure of the elementary factors that can generate a rational loop group element $G(\lambda)$. For such a loop group element, the symmetry reads
\begin{gather}\label{Lambdasym}
	G^*(-\lambda^*)=\Lambda\,G(\lambda)\,\Lambda^{-1}\,.
\end{gather}
The following two results show that the symmetry \eref{Lambdasym} requires to consider $g_{\alpha,\Pi}(\lambda)$ with specific data but also a product of two such elementary factors with specific data in order to exhaust the possible generators of the known types of solitons in the ccSPE. Before we proceed, note that because the ccSPE is described by $4\times 4$ Lax pair, the projectors in the elementary factors acts on $\Complex^4$. A nontrivial projector is therefore of rank $1,2,3$. Due to the relation
\begin{equation}
	g_{\alpha,\Pi}(\lambda^*)^\dagger=	g_{\alpha,\Pi}(\lambda)^{-1}=\frac{\lambda-\alpha^*}{\lambda-\alpha}g_{\alpha,\Pi^\perp}(\lambda)\,,
\end{equation}
where $\Pi^\perp=I-\Pi$ is the orthogonal complement projector of $\Pi$, we see that we can work equally with $\Pi$ or $\Pi^\perp$. Obviously, when $\Pi$ has rank $3$,  $\Pi^\perp$ has rank $1$ and vice-versa, which shows that the rank $1$ and rank 3 cases are equivalent. Hence, in the following we will only consider the rank $1$ and $2$ cases.
\begin{lemma}
	\label{selfsymmetric:lemma}
	Let $\alpha \in \mathbb{C} \setminus \mathbb{R}$. The elementary factor $g_{\alpha,\Pi}(\lambda)$ satisfies the first symmetry in \eref{Lambdasym} if and only if $\alpha=ia$, $a\in \mathbb{R}$, and
	\begin{eqnarray}
		\label{formPi}
		\Pi=\Pi_1+\Lambda \,\Pi_1^*\,\Lambda^{-1}\,,
	\end{eqnarray}
	where $\Pi_1$ is a rank-$1$ projector. In that case, $\Pi$ is a rank-$2$ projector.
\end{lemma}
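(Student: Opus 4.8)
The plan is to impose the symmetry \eref{Lambdasym} directly on the elementary factor $g_{\alpha,\Pi}(\lambda)=I-\frac{\alpha-\alpha^*}{\lambda-\alpha^*}\Pi$ of \eref{e:128n} and to extract the constraints on $\alpha$ and $\Pi$ by comparing rational functions of $\lambda$. Substituting $\lambda\mapsto-\lambda^*$ and conjugating yields $g_{\alpha,\Pi}^*(-\lambda^*)=I-\frac{\alpha-\alpha^*}{\lambda+\alpha}\Pi^*$, whereas $\Lambda\,g_{\alpha,\Pi}(\lambda)\,\Lambda^{-1}=I-\frac{\alpha-\alpha^*}{\lambda-\alpha^*}\,\Lambda\Pi\Lambda^{-1}$. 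For a nontrivial factor ($\Pi\neq0$) these two meromorphic functions can coincide for all $\lambda$ only if their simple poles agree, i.e. $-\alpha=\alpha^*$, which forces $\alpha=ia$ with $a\in\Real$ (and $a\neq0$ since $\alpha\notin\Real$); granting this, matching the residue matrices gives $\Pi^*=\Lambda\Pi\Lambda^{-1}$, equivalently — since $\Lambda$ is real with $\Lambda^2=-I_4$ — the fixed-point relation $\Pi=\tau(\Pi)$, where $\tau(\Pi):=\Lambda\Pi^*\Lambda^{-1}$ is an involution on Hermitian projectors. So the first half of the lemma reduces to classifying the Hermitian projectors of $\Complex^4$ fixed by $\tau$.

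To do that I would introduce the antilinear map $J\colon\Complex^4\to\Complex^4$, $Jv=\Lambda\bar v$. From $\Lambda$ being real, antisymmetric ($\Lambda^T=-\Lambda$) and satisfying $\Lambda^2=-I_4$ — so that $\Lambda^\dagger=-\Lambda=\Lambda^{-1}$ — one checks three facts: (i) $J^2=-I_4$; (ii) $\|Jv\|=\|v\|$ for all $v$; and (iii) $(Jv)^\dagger v=-v^T\Lambda v=0$ for all $v$, the last equality being precisely the antisymmetry of $\Lambda$. Facts (i) and (iii) show that for every $v\neq0$ the vectors $v$ and $Jv$ are orthogonal, hence linearly independent, so every $J$-invariant subspace is even-dimensional. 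On the other hand $\tau(\Pi)=\Pi$ is equivalent to $\Lambda\overline{\Pi}=\Pi\Lambda$, which for $v\in\mathrm{ran}\,\Pi$ gives $\Pi\,Jv=\Lambda\overline{\Pi}\bar v=\Lambda\overline{\Pi v}=Jv$, i.e. $\mathrm{ran}\,\Pi$ is $J$-invariant. For a nontrivial projector on $\Complex^4$ this forces $\mathrm{rank}\,\Pi=2$: ranks $1$ and $3$ are excluded, the former because no $1$-dimensional $J$-invariant subspace exists, the latter because then $I_4-\Pi$ is again fixed by $\tau$ and has a $1$-dimensional $J$-invariant range.

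It remains to produce the decomposition in both directions. Given such a rank-$2$ $\Pi$, pick a unit vector $v$ in $W:=\mathrm{ran}\,\Pi$; by (ii)–(iii) the pair $\{v,Jv\}$ is an orthonormal basis of $W$, so $\Pi=vv^\dagger+(Jv)(Jv)^\dagger$, and with $\Pi_1:=vv^\dagger$ one has $(Jv)(Jv)^\dagger=\Lambda\,\overline{vv^\dagger}\,\Lambda^{-1}=\Lambda\Pi_1^*\Lambda^{-1}$, hence $\Pi=\Pi_1+\Lambda\Pi_1^*\Lambda^{-1}$ with $\Pi_1$ a rank-$1$ projector, as claimed (the choice is not canonical: any unit vector of $W$ serves, and $\Pi_1$ and $\Lambda\Pi_1^*\Lambda^{-1}$ enter symmetrically). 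Conversely, for an arbitrary rank-$1$ Hermitian projector $\Pi_1=vv^\dagger$, fact (iii) guarantees $v\perp Jv=\Lambda\bar v$ automatically, so $\Pi:=\Pi_1+\Lambda\Pi_1^*\Lambda^{-1}$ is a genuine orthogonal projector of rank $2$, and $\tau(\Pi)=\tau(\Pi_1)+\tau^2(\Pi_1)=\Lambda\Pi_1^*\Lambda^{-1}+\Pi_1=\Pi$; reversing the pole computation of the first paragraph then shows that $g_{ia,\Pi}$ obeys \eref{Lambdasym}. The one genuinely load-bearing step is fact (iii): both the rank-$2$ conclusion and the explicit $\Pi_1+\Lambda\Pi_1^*\Lambda^{-1}$ structure rest on recognizing that $v\mapsto\Lambda\bar v$ is a quaternionic structure whose compatibility with the Hermitian inner product degenerates to the orthogonality $(Jv)^\dagger v=0$, i.e. exactly the antisymmetry of the real matrix $\Lambda$.
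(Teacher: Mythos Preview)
Your proof is correct and reaches the same constraints $\alpha=-\alpha^*$ and $\Pi^*=\Lambda\Pi\Lambda^{-1}$ as the paper, but the structural analysis of $\Pi$ proceeds differently. The paper writes $\Pi=UU^\dagger$ with $U$ having orthonormal columns, manipulates the identity $\Pi^*=(\Pi^*)^2=\Lambda UU^\dagger\Lambda^{-1}U^*U^T$, and then rules out rank~$1$ by computing $U^\dagger\Lambda^{-1}U^*=0$ (the same orthogonality you call fact~(iii)); in rank~$2$ it works out the $2\times2$ matrix $M=U^\dagger\Lambda^{-1}U^*=i\alpha_{12}\sigma_2$ explicitly and solves for $u_2$ in terms of $u_1$, arriving at $\Pi=u_1u_1^\dagger+\Lambda u_1^*u_1^T\Lambda^{-1}$. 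Your route is more geometric: you package the symmetry as invariance of $\mathrm{ran}\,\Pi$ under the antilinear $J\colon v\mapsto\Lambda\bar v$, recognize $J$ as a quaternionic structure ($J^2=-I$, isometric, $Jv\perp v$), and read off both the parity constraint on the rank and the decomposition $\Pi=\Pi_1+\Lambda\Pi_1^*\Lambda^{-1}$ from the orthonormal basis $\{v,Jv\}$ without any matrix bookkeeping. The paper's computation has the advantage of being self-contained and giving the relation $u_2=\alpha_{12}^*\Lambda u_1^*$ explicitly; your argument is cleaner, makes the underlying reason (no odd-dimensional $J$-invariant subspaces) transparent, and would generalize verbatim to $2n\times 2n$ Lax pairs with the analogous $\Lambda$.
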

\begin{proof} If $\alpha$ and $\Pi$ are as stated, it is straightforward to check that $g_{\alpha,\Pi}(\lambda)$ satisfies ${g}^*_{\alpha,\Pi}(-\lambda^*) = \Lambda g_{\alpha,\Pi}(\lambda) \Lambda^{-1}$. Conversely, suppose that ${g}^*_{\alpha,\Pi}(-\lambda^*) = \Lambda g_{\alpha,\Pi}(\lambda) \Lambda^{-1}$. This yields $(\lambda-\alpha^*)\Pi^*=(\lambda+\alpha)\Lambda\,\Pi\,\Lambda^{-1}$, which in turn implies $\alpha=-\alpha^*$ and $\Pi^*=\Lambda\,\Pi\,\Lambda^{-1}$. Since $\Pi$ is a Hermitian projector, we can write $\Pi=U\,U^\dagger$, where $U$ is the matrix whose columns vectors form an orthonormal basis of $\Im\Pi$. In particular $U^\dagger U=I_r$, where $r$ is the rank of $\Pi$. We have
	\begin{gather}\label{pistar}
		\Pi^*=(\Pi^*)^2=\Lambda\,\Pi\,\Lambda^{-1}\,\Pi^*=\Lambda\,U\,U^\dagger\,\Lambda^{-1}\,U^*\,U^T.
	\end{gather}
	If $r=1$, then $U=(u_1)$ is a column vector and we have $U^\dagger\,\Lambda^{-1}\,U^*=0$. This is due to the form of $\Lambda$ which implies that for any vector $v\in\Complex^4$
	\begin{equation}
		\label{e:orthog}
		v^\dagger\,\Lambda^{-1}\,v^*=0=	v^T\,\Lambda^{-1}\,v\,.
	\end{equation}
	Hence $\Pi=0$, which is a contradiction so this case is not possible.
Suppose now $r=2$ with $U=(u_1,u_2)$. We have
		\begin{gather}\label{expressionPi}
			\Pi=\Lambda\,U^*\,M^*\,U^\dagger, \quad M\equiv U^\dagger\,\Lambda^{-1}\,U^*=i\alpha_{12}\sigma_2, \quad \alpha_{12} = u_1^\dagger \Lambda^{-1} u_2^*.
	\end{gather}	
	Inserting into $\Pi U=U$, yields
	$$u_2=\alpha_{12}^*\Lambda u_1^*\,,~~u_1=-\alpha_{12}^*\Lambda u_2^*\,.$$
	For consistency we must have $|\alpha_{12}|=1$.
	Using the relation to eliminate $u_2$ in the expression \eref{expressionPi} for $\Pi$ we find	
	\begin{gather*}
		\Pi=-\alpha_{12}^*\Lambda u_2^* u_1^\dagger+\alpha_{12}^*\Lambda u_1^* u_2^\dagger=u_1u_1^\dagger+\Lambda u_1^* u_1^T\Lambda^{-1}
		=\Pi_1+\Lambda\, \Pi_1^* \,\Lambda^{-1}
	\end{gather*}
	where $\Pi_1=u_1u_1^\dagger$ is a rank-$1$ projector. This completes the proof.	
\end{proof}

\medskip

This shows that a single elementary factor always corresponds to a so-called self-symmetric (purely imaginary) eigenvalue, and can only be built from a special rank-$2$ projector. In particular, it is not possible to create solitons for non self-symmetric zeros with only one elementary factors because of the symmetry \eref{Lambdasym}. For this, we need a product of two elementary factors, which we study below.
\begin{lemma}\label{lemmasym}
	Let $\alpha_1,\alpha_2 \in \mathbb{C}\setminus \mathbb{R}$ be such that $\alpha_1\neq\alpha_2$ and $\alpha_1\neq \alpha_2^*$. If
\begin{gather}\label{e:matrixT}
T(\lambda)=g_{\alpha_1,\Pi_1}(\lambda)g_{\alpha_2,\Pi_2}(\lambda),
\end{gather}
satisfies \eref{Lambdasym}, where $g_{\alpha_j,\Pi_j}(\lambda)$ is as in \eref{e:128n}, then either :
\begin{enumerate}
		\item $\alpha_j=-\alpha_j^*$, $j=1,2$, and each elementary factor is of the self-symmetric type classified in Lemma \ref{selfsymmetric:lemma}, or;		
		\item $\alpha_2=-\alpha_1^*$,  and $\Pi_2=\Lambda\,\Pi_1^*\,\Lambda^{-1}$.
		
		
	\end{enumerate}
\end{lemma}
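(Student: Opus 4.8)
The plan is to turn the functional symmetry \eref{Lambdasym} imposed on $T$ into an identity between two ordered products of elementary factors, and then exploit its pole–residue structure together with the refactorization theorem. Two preliminary computations set this up: directly from \eref{e:128n} one checks the conjugation rule $\overline{g_{\alpha,\Pi}(-\lambda^*)}=g_{-\alpha^*,\Pi^*}(\lambda)$, and, since $\Lambda$ is unitary (in fact $\Lambda^\dagger=\Lambda^{-1}=-\Lambda$ and $\Lambda^2=-I_4$), $\Lambda\,g_{\alpha,\Pi}(\lambda)\,\Lambda^{-1}=g_{\alpha,\Lambda\Pi\Lambda^{-1}}(\lambda)$ with $\Lambda\Pi\Lambda^{-1}$ again a Hermitian projector. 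Feeding $T=g_{\alpha_1,\Pi_1}g_{\alpha_2,\Pi_2}$ from \eref{e:matrixT} into \eref{Lambdasym} and using these two rules, the symmetry becomes the identity of rational functions
\[
g_{-\alpha_1^*,\Pi_1^*}(\lambda)\,g_{-\alpha_2^*,\Pi_2^*}(\lambda)=g_{\alpha_1,\Lambda\Pi_1\Lambda^{-1}}(\lambda)\,g_{\alpha_2,\Lambda\Pi_2\Lambda^{-1}}(\lambda).
\]
Assuming $\Pi_1,\Pi_2$ are nontrivial (otherwise $T$ degenerates to a single elementary factor and Lemma~\ref{selfsymmetric:lemma} applies), the left side has exactly two simple poles, at $-\alpha_1$ and $-\alpha_2$, with nonvanishing residues, and the right side has simple poles exactly at $\alpha_1^*$ and $\alpha_2^*$; the hypotheses $\alpha_1\ne\alpha_2$ and $\alpha_1\ne\alpha_2^*$ guarantee the poles are distinct and no cancellation occurs, because the companion factor is invertible at each pole. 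Equating pole sets forces $\{-\alpha_1,-\alpha_2\}=\{\alpha_1^*,\alpha_2^*\}$, hence either (a) $\alpha_1=-\alpha_1^*$ and $\alpha_2=-\alpha_2^*$, or (b) $\alpha_2=-\alpha_1^*$ (and then also $\alpha_1=-\alpha_2^*$).

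In case (a) both sides carry the same ordered pole data (factor with pole $\alpha_1^*$ first, factor with pole $\alpha_2^*$ second, since $-\alpha_1^*=\alpha_1$, $-\alpha_2^*=\alpha_2$). Rearranging into $g_{\alpha_1,\Lambda\Pi_1\Lambda^{-1}}(\lambda)^{-1}g_{\alpha_1,\Pi_1^*}(\lambda)=g_{\alpha_2,\Lambda\Pi_2\Lambda^{-1}}(\lambda)\,g_{\alpha_2,\Pi_2^*}(\lambda)^{-1}$, the left side has poles only among $\{\alpha_1,\alpha_1^*\}$ and the right side only among $\{\alpha_2,\alpha_2^*\}$; these pairs are disjoint because $\alpha_j\notin\mathbb{R}$ and $\alpha_1\ne\alpha_2,\alpha_2^*$, so both sides are entire, hence equal $I$ (each factor $\to I$ at infinity). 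Therefore $\Pi_j^*=\Lambda\Pi_j\Lambda^{-1}$ and $\alpha_j=-\alpha_j^*$ for $j=1,2$; together these say $g_{\alpha_j,\Pi_j}$ satisfies the first symmetry in \eref{Lambdasym}, so Lemma~\ref{selfsymmetric:lemma} forces each to be of self-symmetric type — this is case~1 of the statement.

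Case (b) is where the work lies. Substituting $-\alpha_1^*=\alpha_2$, $-\alpha_2^*=\alpha_1$ turns the identity into $g_{\alpha_2,\Pi_1^*}(\lambda)g_{\alpha_1,\Pi_2^*}(\lambda)=g_{\alpha_1,\Lambda\Pi_1\Lambda^{-1}}(\lambda)g_{\alpha_2,\Lambda\Pi_2\Lambda^{-1}}(\lambda)$, which is precisely a refactorization problem of the type treated in Theorem~\ref{Threfac}. Its uniqueness part identifies $(\Lambda\Pi_1\Lambda^{-1},\Lambda\Pi_2\Lambda^{-1})$ as the conjugate of $(\Pi_2^*,\Pi_1^*)$ by the matrix $\phi$ of \eref{e:11}, which since $\alpha_2=-\alpha_1^*$ reduces to $\phi=2\alpha_1^*I_4+(\alpha_1-\alpha_1^*)(\Pi_1^*+\Pi_2^*)$; hence $\Pi_2=\Lambda\Pi_1^*\Lambda^{-1}$ is equivalent to $[\phi,\Pi_1^*]=0$, i.e. to $[\Pi_1,\Pi_2]=0$. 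To get the commutation, match residues at $\lambda=\alpha_1^*$; a short computation reduces this to $\operatorname{Range}\Pi_2=(I_4+s\Pi_1)^{-1}\Lambda\,\overline{\operatorname{Range}\Pi_1}$ with $s=(\alpha_1-\alpha_1^*)/(\alpha_1+\alpha_1^*)$. The skew-symmetry $\Lambda^T=-\Lambda$ (the relation $v^T\Lambda v=0=v^\dagger\Lambda v^*$ already used in Lemma~\ref{selfsymmetric:lemma}) gives $\Pi_1\Lambda\Pi_1^*=0$, so $\Lambda\,\overline{\operatorname{Range}\Pi_1}\subseteq\ker\Pi_1$ is fixed by $(I_4+s\Pi_1)^{-1}$; therefore $\operatorname{Range}\Pi_2=\Lambda\,\overline{\operatorname{Range}\Pi_1}=\operatorname{Range}(\Lambda\Pi_1^*\Lambda^{-1})$, and since $\Pi_2$ is the orthogonal projector onto its range, $\Pi_2=\Lambda\Pi_1^*\Lambda^{-1}$ — this is case~2.

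The main obstacle lies entirely in case (b). First, the passage through Theorem~\ref{Threfac} is only meaningful because the conjugator $\phi$ is invertible, which is the delicate (and already granted) content of that theorem. Second, the key identity $\Pi_1\Lambda\Pi_1^*=0$ that makes the conjugation by $\phi$ drop out is immediate only when $\Pi_1$ has rank one (then $\Pi_1=uu^\dagger$ and it reduces to the single scalar relation $(u^*)^T\Lambda u^*=0$); for $\Pi_1=UU^\dagger$ of higher rank the matrix $U^T\Lambda U$ is skew-symmetric but generally nonzero, so one must feed back the full residue identity — not merely the equality of ranges — to exclude a surviving ``skew'' component and force $[\Pi_1,\Pi_2]=0$. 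Bookkeeping of exactly which rank-$1$, $2$ (and, by complementation, $3$) projectors actually occur is the least routine part, and the degenerate possibilities $\Pi_j\in\{0,I_4\}$ must be disposed of separately, collapsing back to Lemma~\ref{selfsymmetric:lemma} for a single factor.
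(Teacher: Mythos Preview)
Your setup and Case~(a) match the paper's proof essentially verbatim: you expand the symmetry into an identity of rational functions, match pole sets to get $\{-\alpha_1,-\alpha_2\}=\{\alpha_1^*,\alpha_2^*\}$, and in the self-symmetric branch separate the two poles and invoke Liouville to force $\Pi_j^*=\Lambda\Pi_j\Lambda^{-1}$, reducing to Lemma~\ref{selfsymmetric:lemma}. No issues there.

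Case~(b) is where your argument is genuinely incomplete, and you say so yourself. Your reduction via Theorem~\ref{Threfac} to the single commutation condition $[\Pi_1,\Pi_2]=0$ (equivalently $[\phi,\Pi_2^*]=0$, hence $\Pi_2=\Lambda\Pi_1^*\Lambda^{-1}$) is correct and clean. But the residue argument you give to establish that commutation hinges on $\Pi_1\Lambda\Pi_1^*=0$, which---as you note---is immediate only when $\Pi_1$ has rank~$1$: for $\Pi_1=UU^\dagger$ with $U$ of rank~$2$, the matrix $U^\dagger\Lambda U^*$ is skew-symmetric but typically nonzero (indeed in the proof of Lemma~\ref{selfsymmetric:lemma} it has modulus-one off-diagonal entry). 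Saying ``one must feed back the full residue identity'' is not a proof; this is precisely the hard step, and you have not supplied it.

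The paper closes this gap by a different, rank-independent route. It introduces the Hermitian difference $\Gamma=\Pi_1^*-\Lambda\Pi_2\Lambda^{-1}$, notes $\Gamma=\Lambda\Gamma^*\Lambda^{-1}$ and ${\rm Tr}\,\Gamma=0$, and diagonalizes $\Gamma=MDM^\dagger$. The antisymmetric unitary $A=M^\dagger\Lambda M^*$ must commute with $D$; a case analysis on repeated eigenvalues, using $|\det A|=1$, forces either $\Gamma=0$ or $\Gamma=d\,M\Sigma_3M^\dagger$ with $d\neq 0$. The latter (invertible $\Gamma$) is then eliminated by computing the residues $R_1,R_2$ of $T(\lambda)$ in partial fractions and imposing $R_2^*=-\Lambda R_1\Lambda^{-1}$, which yields a traceless-versus-identity contradiction. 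This argument never invokes $\Pi_1\Lambda\Pi_1^*=0$ and works uniformly in the rank. If you want to complete your own approach, you will need a substitute for that vanishing in rank~$2$---and the paper's $\Gamma$-analysis is exactly the missing ingredient.
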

The proof of this lemma is given in Appendix~C.

Several comments are in order.
	\begin{remark}
		For case 2. in Lemma \ref{lemmasym}, it is interesting to note that we can obtain the case of self-symmetric solitons of Lemma~\ref{selfsymmetric:lemma} when the projector $\Pi_1$ is of rank $1$ and by allowing $\alpha_1$ to be purely imaginary, even though this is not strictly allowed under the assumptions on $\alpha_1$ and $\alpha_2$ in Lemma~\ref{lemmasym}. Indeed, a direct calculation shows that $T(\lambda)$ then exactly reduces to $g_{\alpha_1,\Pi}$ with $\Pi$ as in \eref{formPi}, of rank $2$. Thus,  we can use the factor $T(\lambda)$ described in Lemma \ref{lemmasym} and recover the self-symmetric case as a limiting case if desired, instead of treating it as a separate case. This means that $T(\lambda)$ can be used to deal with all possible cases of solitons in the model, allowing for the various possibilities for the rank of $\Pi_1$ and for the nature of $\alpha_1$ (purely imaginary or not). From now on, we will write the ``elementary'' factor in Lemma~\ref{lemmasym} as
		\begin{eqnarray}
			T_{\alpha,\Pi}(\lambda)\equiv g_{\alpha,\Pi}(\lambda)g_{-\alpha^*,\widetilde{\Pi}}(\lambda)
		\end{eqnarray}
	with the understanding that $\widetilde{\Pi}$ is determined by $\Pi$ according to $\widetilde{\Pi}=\Lambda\,\Pi^*\,\Lambda^{-1}$. As before, $\alpha$ and $\Pi$ will be called the data for $T_{\alpha,\Pi}(\lambda)$.
	\end{remark}

\begin{remark}
In the rank-$1$ case, the factor $T_{\alpha,\Pi}(\lambda)$ reproduces the Darboux matrix of Theorem 1 in \cite{FL22}. This can be checked by a direct calculation, observing that in the rank-$1$ case, $\Pi$ and $\widetilde{\Pi}$ satisfy $\widetilde{\Pi}\,\Pi=0=\Pi\,\widetilde{\Pi}$, so that
	\begin{eqnarray}
				T_{\alpha,\Pi}(\lambda)&=& g_{\alpha,\Pi}(\lambda)g_{-\alpha^*,\widetilde{\Pi}}(\lambda)\nonumber\\
				&=&g_{-\alpha^*,\widetilde{\Pi}}(\lambda)g_{\alpha,\Pi}(\lambda)\nonumber\\
				&=&I-\left(\frac{\alpha-\alpha^*	}{\lambda-\alpha^*}-1\right)\Pi-\left(\frac{\alpha-\alpha^*}{\lambda-\alpha^*}-1\right)\Lambda\,\Pi^*\,\Lambda^{-1}.
	\end{eqnarray}
This factor produces generically the fundamental breather solution, as will be clear below. As explained above, taking $\alpha$ to be pure imaginary produces the special case of a self-symmetric eigenvalue solution. We will see below that choosing the degeneracy space of $\Pi$ to be generated by a special vector yields the case of a fundamental soliton as a special case.
\end{remark}

\begin{remark}\label{Ridentity}
Property \eref{commute1}, which is easily checked for any Hermitian projector $\Pi_1$, ensures that the two factors in $T_{\alpha,\Pi}(\lambda)$ commute
		\begin{eqnarray}
		T_{\alpha,\Pi}(\lambda)= g_{\alpha,\Pi}(\lambda)g_{-\alpha^*,\widetilde{\Pi}}(\lambda)=g_{-\alpha^*,\widetilde{\Pi}}(\lambda)g_{\alpha,\Pi}(\lambda)\,.
	\end{eqnarray}
As a consequence, the YB map $R(\alpha_1,\alpha_2)$ reduces to the identity map when $\alpha_2=-\alpha_1^*$ and when it acts on pairs on projectors $(\Pi_1,\Pi_2)$ such that $\Pi_2=\Lambda\,\Pi^*\,\Lambda^{-1}$.
\end{remark}

In view of our results and the previous remarks, for our model the refactorization problem of Theorem~\ref{Threfac} should now be posed in terms of the binary elementary factor $T_{\alpha,\Pi}(\lambda)$:
	\begin{equation}\label{refacT}
		T_{\alpha_1,\Pi_1}(\lambda)\,T_{\alpha_2,\Pi_2}(\lambda)=T_{\alpha_2,P_2}(\lambda)\,T_{\alpha_1,P_1}(\lambda)\,.
	\end{equation}
All the possible subcases arising from specializing the data of $T_{\alpha,\Pi}(\lambda)$ will describe all the possible interactions between the various types of solitons in the model. This is studied in detail in the next section. For now, let us first note that the refactorization problem \eref{refacT} is well-defined in the sense that, if the LHS is a product of binary elementary factors of type $T_{\alpha,\Pi}(\lambda)$, then the RHS is also such a product. In other words, when considering $T_{\alpha_1,\Pi_1}(\lambda)\,T_{\alpha_2,\Pi_2}(\lambda)$ as a product of 4 elementary factors of type $g_{\alpha,\Pi}(\lambda)$
\begin{equation}
	T_{\alpha_1,\Pi_1}(\lambda)\,T_{\alpha_2,\Pi_2}(\lambda)= g_{\alpha_1,\Pi_1}(\lambda)\,g_{-\alpha_1^*,\widetilde{\Pi}_1}(\lambda)\,g_{\alpha_2,\Pi_2}(\lambda)\,g_{-\alpha_2^*,\widetilde{\Pi_2}}(\lambda),
\end{equation}
and refactorizing into
\begin{equation}
	g_{\alpha_2,P_2}(\lambda)\,g_{-\alpha_2^*,P^{'}_2}(\lambda)\, g_{\alpha_1,P_1}(\lambda)\,g_{-\alpha_1^*,P^{'}_1}(\lambda),
\end{equation}
we have necessarily $P^{'}_j=\Lambda\,P_j^*\Lambda^{-1}=\widetilde{P}_j$, $j=1,2$. Thus, we can consistently interpret the result as the product $T_{\alpha_2,P_2}(k)\,T_{\alpha_1,P_1}(k)$. To see this, one uses again the symmetry \eref{Lambdasym}, a comparison of the poles on each side of the equality, Liouville theorem and Lemma~\ref{lemmasym} to obtain the desired relation $P^{'}_j=\Lambda\,P_j^*\,\Lambda^{-1}$.

The refactorization problem \eref{refacT}  yields a new parametric YB map which can be seen as composite of the basic map $R_{12}(\alpha_1,\alpha_2)$, restricted to special data. Notice that each basic map $R_{12}(\alpha_1,\alpha_2)$ accounts for a single refactorization, and therefore the refactorization problem \eref{refacT} amounts to considering the following composition of four $R$-maps
\begin{eqnarray}
	\label{contraints}
	R_{14}(\alpha_1,\alpha_4)\,R_{24}(\alpha_2,\alpha_4)\,R_{13}(\alpha_1,\alpha_3)\,R_{23}(\alpha_2,\alpha_3)
\end{eqnarray}
acting on the quadruplet of projectors $(\Pi_1,\Pi_2,\Pi_3,\Pi_4)$ with the restrictions that
$$\Pi_2=\widetilde{\Pi}_1\,,~~\Pi_4=\widetilde{\Pi}_3\,,~~\alpha_2=-\alpha_1^*\,,~~\alpha_4=\alpha_3^*\,.$$
With this in mind, it is convenient and suggestive to denote the first space as $1$, the second space as $\tilde{1}$, the third space as $2$ and the fourth space as $\tilde{2}$. Thus, dropping the parameters for conciseness, we write
\begin{eqnarray}
	\label{tildenotation}
	R_{14}\,R_{24}\,R_{13}\,R_{23}\,(\Pi_1,\Pi_2,\Pi_3,\Pi_4)\equiv 	R_{1\tilde{2}}\,R_{\tilde{1}\tilde{2}}\,R_{12}\,R_{\tilde{1}2}\,(\Pi_1,\widetilde{\Pi}_1,\Pi_2,\widetilde{\Pi}_2).
\end{eqnarray}
Due to the above observation on the consistency of the refactorization for the binary elementary factors, the result of this operation is given by
\begin{eqnarray}\label{e:new134}
	R_{1\tilde{2}}\,R_{\tilde{1}\tilde{2}}\,R_{12}\,R_{\tilde{1}2}\,(\Pi_1,\widetilde{\Pi}_1,\Pi_2,\widetilde{\Pi}_2)=(P_2,\widetilde{P}_2,P_1,\widetilde{P}_1)\,.
\end{eqnarray}
As for $T_{\alpha,\Pi}(\lambda)$, $\widetilde{\Pi}_j$, $j=1,2$ (resp., $\widetilde{P}_j$, $j=1,2$) are completely determined by $\Pi_j$, $j=1,2$ (resp., $P_j$, $j=1,2$). Thus they are redundant and we can project the map onto the first and third entries in the quadruplet to obtain a map acting on $(\Pi_1,\Pi_2)$ to produce $(P_2,P_1)$:
	\begin{equation}
		S_{{\pmb 1}{\pmb 2}}(\alpha_1,\alpha_2): (\Pi_1,\Pi_2)\mapsto (P_2,P_1)\,.
	\end{equation}
Note that the indices are in bold to remember the difference between $S$ and $R$. It is a direct consequence of the YB equation for $R$ that this map also satisfies the YB  equation. Indeed, in view of the definition of $S_{{\pmb 1}{\pmb 2}}(\alpha_1,\alpha_2)$ from \eref{e:new134}, to show that the following holds
	\begin{equation}
	S_{{\pmb 1}{\pmb 2}}(\alpha_1,\alpha_2)\,S_{{\pmb 1}{\pmb 3}}(\alpha_1,\alpha_3)\,S_{{\pmb 2}{\pmb 3}}(\alpha_2,\alpha_3)=S_{{\pmb 2}{\pmb 3}}(\alpha_2,\alpha_3)\,S_{{\pmb 1}{\pmb 3}}(\alpha_1,\alpha_3)\,S_{{\pmb 1}{\pmb 2}}(\alpha_1,\alpha_2),
\end{equation}
it is sufficient to show that
\begin{eqnarray}
	\label{YBE_4R}
	&&R_{1\tilde{2}}\,R_{\tilde{1}\tilde{2}}\,R_{12}\,R_{\tilde{1}2}\,R_{1\tilde{3}}\,R_{\tilde{1}\tilde{3}}\,R_{13}\,R_{\tilde{1}3}\,
	R_{2\tilde{3}}\,R_{\tilde{2}\tilde{3}}\,R_{23}\,R_{\tilde{2}3}\notag\\
	&&=R_{2\tilde{3}}\,R_{\tilde{2}\tilde{3}}\,R_{23}\,R_{\tilde{2}3}\,R_{1\tilde{3}}\,R_{\tilde{1}\tilde{3}}\,R_{13}\,R_{\tilde{1}3}\,R_{1\tilde{2}}\,R_{\tilde{1}\tilde{2}}\,R_{12}\,R_{\tilde{1}2}\,,
\end{eqnarray}
holds. Now, \eref{YBE_4R} can be seen to hold using the YB equation for $R$ repeatedly for various combinations  of indices $1,\tilde{1}$, $2,\tilde{2}$, $3,\tilde{3}$ and remembering Remark~\ref{Ridentity} which tells us that $R_{j\tilde{j}}$, $j=1,2,3$ is the identity map.

The map $S$ is the map underlying all the maps describing the interactions of solitons derived in the next section. Its YB property ensures that these various maps consistently extend to multisoliton solutions with more than 2 solitons. This is similar to what is explained in detail in \cite{CZ}, with the important difference that in the latter work one could only have one type of solitons (vector NLS solitons of rank $1$) whereas in the present work, we have various types of solitons corresponding to the various allowed cases for $T$: rank $1$ or $2$, self-symmetric zeros or not. Note that the rank-$1$ case degenerates further into fundamental solitons and fundamental breathers, but we find that the rank-$2$ case also degenerates into various subcases described below. Indeed, as we will explain in Sec.~\ref{e:sub6.3.2}, to construct a full-rank soliton solution, one needs to consider four $2 \times 1$ vectors, and whether two out of the four vectors are linearly independent or not produces various subcases that need to be discussed separately. To our knowledge, this is the first time that YB maps are presented in a multicomponent integrable system with such a variety of soliton interactions. As we will see below, this gives rise to different explicit forms of the map $S$, depending on the type of solitons which interact. Our results could perhaps give a realization in the context of soliton interactions of the idea of entwining YB maps as discussed for instance in \cite{KP2,Kas,KRP}. However, this is beyond the scope of the present work.

\subsection{Long-time asymptotic analysis and map on soliton data}\label{longtime}

To derive the induced YB map on the quantities controlling the solitons in a multisoliton solution and interpret the soliton collisions from this point of view, we need to relate the previous abstract results to objects appearing in the IST. This is the essence of the dressing method as introduced in \cite{ZS}. In view of our discussion in Sec.~\ref{review_dressing} and in particular formula \eref{kernelcondition}, all we need to do is to fix the matrices $W_j$ corresponding to the eigenvalue $\lambda_j$. For a solution corresponding to a projector of rank $1$, this matrix is just a column vector ($4\times 1$ matrix), and the rank-$2$ case corresponds to a $4\times 2$ matrix (see Sec.~6.3.2 for details).

\subsubsection{Rank-1 case: fundamental breathers and fundamental solitons}

We first review the construction for $N=1$ to derive the rank-$1$ one-soliton solution (either fundamental soliton or fundamental breather), and fix notations of the relevant parameters. Then, we will study the case $N=2$ in view of the refactorization results of the previous section, and deduce the map induced on the parameters characterizing each soliton.

\paragraph{Case $N=1$.} In this case, fix $\lambda_1=a_1+ib_1$ with $b_1<0$ (recall that $\lambda_1=1/k_1 \equiv k_1^*/|k_1|^2$, and $k_1\in\mathbb{C}^{+}$), and we have
\begin{eqnarray}
G_+(\xi,t;\lambda)&=&T_{\lambda_1,\Pi_1}(\lambda)=\left(I-\frac{\lambda_1-\lambda_1^*}{\lambda-\lambda_1^*}\Pi_1\right)\left(I-\frac{\lambda_1-\lambda_1^*}{\lambda+\lambda_1}\Lambda\Pi_1^*\Lambda^{-1}\right)\nonumber\\
&=&I-\frac{\lambda_1-\lambda_1^*}{\lambda-\lambda_1^*}\Pi_1-\frac{\lambda_1-\lambda_1^*}{\lambda+\lambda_1}\Lambda\Pi_1^*\Lambda^{-1},
\end{eqnarray}
where $\Pi_1$, which now depends on $(\xi,t)$ even though we did not show it explicitly for conciseness, is determined by the condition $\ker G_+(\xi,t;\lambda_1)={\rm span}\Big\lbrace e^{(\frac{i}{4}\lambda_1 t-\frac{i}{\lambda_1}\xi)\Sigma_3}\Phi_1 \Big\rbrace$. Introducing the notations
\begin{gather}
\Phi_1(\xi,t)=e^{(\frac{i}{4}\lambda_1 t-\frac{i}{\lambda_1}\xi)\Sigma_3}M_1=e^{(\frac{i}{4}\lambda_1 t-\frac{i}{\lambda_1}\xi)\Sigma_3}\begin{pmatrix}
	{\pmb \delta}_1\\
	{\pmb \gamma}_1
\end{pmatrix}\,~~{\pmb \delta}_1,{\pmb \gamma}_1\in \mathbb{C}^2\,,
\end{gather}
and recalling the definition of $\Pi_1(\xi,t) = \frac{\Phi_1(\xi,t) \Phi_1^\dagger(\xi,t)}{\Phi_1^\dagger(\xi,t) \Phi_1(\xi,t)}$, this yields
\begin{equation}
	\label{formPi1}
	\Pi_1(\xi,t)=\frac{1}{D(\xi,t)}\begin{pmatrix}
		e^{\frac{b_1}{2\mathrm{v}_1}(\xi-\mathrm{v}_1t)}{\pmb \delta}_1{\pmb \delta}_1^\dagger & e^{\frac{ia_1}{2\mathrm{v}_1}(\xi+\mathrm{v}_1t)}{\pmb \delta}_1{\pmb \gamma}_1^\dagger\\
		e^{-\frac{ia_1}{2\mathrm{v}_1}(\xi+\mathrm{v}_1t)}{\pmb \gamma}_1{\pmb \delta}_1^\dagger		& e^{-\frac{b_1}{2\mathrm{v}_1}(\xi-\mathrm{v}_1t)}{\pmb \gamma}_1{\pmb \gamma}_1^\dagger
	\end{pmatrix},
\end{equation}
where $\mathrm{v}_1=-\frac{a_1^2+b_1^2}{4}=-\frac{|\lambda_1|^2}{4}$ represents the velocity of the soliton envelope in the $(\xi,t)$ coordinates, and we have introduced $D(\xi,t)=\Phi_1^\dagger(\xi,t) \Phi_1(\xi,t)$. It is convenient to set
\begin{equation}
	\label{new:param}
	\nu_1=\frac{b_1}{4\mathrm{v}_1}\,,~~\eta_1=-\frac{a_1}{4\mathrm{v}_1}\,.
\end{equation}
Then,
\bea
D(\xi,t)&=&e^{2\nu_1(\xi-\mathrm{v}_1 t)}||{\pmb \delta}_1||^2+e^{-2\nu_1(\xi-\mathrm{v}_1t)}||{\pmb \gamma}_1||^2\nonumber\\
&=&2||{\pmb \delta}_1||||{\pmb \gamma}_1||\cosh\left[2\nu_1(\xi-\mathrm{v}_1t)+\delta_1-\gamma_1\right],
\eea
where $e^{\delta_1}=||{\pmb \delta}_1||$ and $e^{\gamma_1}=||{\pmb \gamma}_1||$. Note that the quantity $D$ is real. Inserting in \eref{dressedsolution}, we find that
\begin{gather*}
\widetilde{V}_0(\xi,t)= \begin{pmatrix}
0_2 & U(\xi,t)\\
U(\xi,t)^\dagger & 0_2
\end{pmatrix}\,,
\end{gather*}
with the one-soliton solution
\begin{gather}
U(\xi,t) = 4i\nu_1\mathrm{v}_1\sech\left[2 \nu_1(\xi-\mathrm{v}_1 t)+\delta_1-\gamma_1\right]\times \nonumber \\
\times \left( e^{-2i \eta_1(\xi +\mathrm{v}_1 t)}
\frac{{\pmb \delta}_1}{||{\pmb \delta}_1||}\frac{{\pmb \gamma}_1^\dagger}{||{\pmb \gamma}_1||} + e^{2i \eta_1(\xi + \mathrm{v}_1 t)} \sigma_2 \frac{ {\pmb \delta}_1^*}{|| {\pmb \delta}_1||} \frac{{\pmb \gamma}_1^T}{||{\pmb \gamma}_1||}\sigma_2 \right).
\end{gather}
Noting that $(i \sigma_2{\bf v}^*)^\dagger {\bf v}=0$ and $||i \sigma_2{\bf v}^*||=||{\bf v}||$ for any ${\bf v}\in \mathbb{C}^2$, it is then  convenient to introduce the shorthand notation $i \sigma_2 {\bf v}^*={\bf v}^\perp$, and write:
\begin{gather}\label{e:141New}
U(\xi,t) = 4i\nu_1\mathrm{v}_1\sech\left[2 \nu_1(\xi-\mathrm{v}_1 t)+\delta_1-\gamma_1\right]\times \nonumber \\
\times \left( e^{-2i \eta_1(\xi + \mathrm{v}_1 t)}
\frac{{\pmb \delta}_1}{||{\pmb \delta}_1||}\frac{{\pmb \gamma}_1^\dagger}{||{\pmb \gamma}_1||} + e^{2i \eta_1(\xi + \mathrm{v}_1 t)} \frac{ {\pmb \delta}_1^\perp}{|| {\pmb \delta}_1||} \frac{({\pmb \gamma}_1^\perp)^\dagger}{||{\pmb \gamma}_1||}\right).
\end{gather}
Thus, the two vectors ${\pmb \gamma}_1$ and ${\pmb \delta}_1$ completely control the multicomponent structure of the solution, and we refer to them as the polarization vectors of the one-soliton solution. Recalling that
\be
U=\begin{pmatrix}
	-i u_1 & -i u_2\\
	i u_2^* & -i u_1^*
\end{pmatrix}\,,
\ee
one can easily extract the solution for the components $u_j$, for $j=1,2$ and recover \eref{e:fundbreathera}. However, doing so breaks the natural symmetric role played by the vectors ${\pmb \gamma}_1$ and ${\pmb \delta}_1$ and makes the upcoming analysis of the YB map acting on the polarization vectors less transparent. For this reason, we prefer to continue working with $U$ and the vectors ${\pmb \gamma}_1$ and ${\pmb \delta}_1$.

At this point, having shown how to determine an explicit solutions from the structure of the degeneracy space of $G_+$ and having already introduced the parameters $\eta_1, \nu_1$ in \eref{new:param}, we will now revert to the spectral parameter $k$ and the corresponding eigenvalue $k_1$:
\begin{equation}
k=\frac{1}{\lambda}\,,~~k_1=\eta_1+i\nu_1=\frac{1}{\lambda_1}\,.
\end{equation}
Of course, all the general results on refactorization and YB maps are unchanged by this redefinition. This allows us to compare more easily the forthcoming results with those already obtained using Manakov's method in Sec.~5	.

The polarization vectors also determine the phase shifts, and in particular the norms of ${\pmb \delta}_1$ and ${\pmb \gamma}_1$ control the position shift. The discrete eigenvalue $k_1$ controls the velocity $\mathrm{v}_1$. It is natural to denote this one-soliton solution as
$	U^{sol}\left(\xi,t;k_1,{\pmb \gamma}_1, {\pmb \delta}_1\right)$
to emphasize the quantities controlling its properties (we may drop the $(\xi,t)$ dependence for conciseness when this does not lead to confusion).

Having identified the quantities determining one-soliton solutions of rank $1$, we examine next the case of a two-soliton solutions built on rank-$1$ projectors, and investigate how the refactorization problem on the projectors discussed above induces a scattering map on the polarization vectors. To this end, we analyze the asymptotic behavior of the projector $\Pi_1(\xi,t)$ as $|t|\to\infty$ along rays $\xi - \mathrm{v} t = C$ ($C$ constant) for $\mathrm{v}<\mathrm{v}_1$ and $\mathrm{v}>\mathrm{v}_1$.

\underline{Case $\mathrm{v}<\mathrm{v}_1$.}
In view of \eref{formPi1}, and recalling that $\nu_1>0$, then the dominant term is
$e^{-2\nu_1(\mathrm{v}-\mathrm{v}_1)t}$ as $t \to \infty$, and we find that
\begin{equation}
	\label{Pi:asympt}
	\Pi_1(\xi,t)\sim\begin{pmatrix}
		0 & 0\\
		0 & \frac{{\pmb \gamma}_1{\pmb \gamma}_1^\dagger}{{\pmb \gamma}_1^\dagger{\pmb \gamma}_1}
	\end{pmatrix}+O \left(e^{2 \nu_1(\mathrm{v} - \mathrm{v}_1)t}\right)\,.
\end{equation}
On the other hand, as $t\to-\infty$, the dominant term is $e^{2\nu_1(\mathrm{v} - \mathrm{v}_1)t}$ and we obtain
\begin{equation}
	\Pi_1(\xi,t)\sim\begin{pmatrix}
		\frac{{\pmb \delta}_1{\pmb \delta}_1^\dagger}{{\pmb \delta}_1^\dagger{\pmb \delta}_1} & 0\\
		0		& 0
	\end{pmatrix}+O \left(e^{-2 \nu_1(\mathrm{v} - \mathrm{v}_1)t}\right)\,.
\end{equation}
As a consequence, denoting by $\pi_\gamma$ (resp., $\pi_\delta$) the 2-dimensional Hermitian projector $\frac{{\pmb \gamma}_1{\pmb \gamma}_1^\dagger}{{\pmb \gamma}_1^\dagger{\pmb \gamma}_1}$ (resp., $\frac{{\pmb \delta}_1{\pmb \delta}_1^\dagger}{{\pmb \delta}_1^\dagger{\pmb \delta}_1}$), we obtain
\begin{align}
	\label{T:asympt1}
	T_{k_1,\Pi_1}(k) &\sim I_4 - \frac{k}{k_1}\frac{k_1^* - k_1}{k_1^* - k}\begin{pmatrix}
		0 & 0\\
		0 & \pi_\gamma
	\end{pmatrix}
	-\frac{k}{k_1^*}\frac{k_1^*-k_1}{k_1+k}\begin{pmatrix}
	0 & 0\\
	0 & \sigma_2 \pi_\gamma^*\sigma_2
\end{pmatrix}\nonumber\\&+O \left(e^{2\nu_1(\mathrm{v} - \mathrm{v}_1)t}\right), \quad t \to \infty,
\end{align}
and
\begin{align}
	\label{T:asympt2}
	T_{k_1,\Pi_1}(k) &\sim I_4 - \frac{k}{k_1}\frac{k_1^* - k_1}{k_1^* - k}\begin{pmatrix}
		\pi_\delta & 0\\
		0 & 0
	\end{pmatrix}
	-\frac{k}{k_1^*}\frac{k_1^*-k_1}{k_1+k}\begin{pmatrix}
	\sigma_2 \pi_\delta^*\sigma_2 & 0\\
	0 & 0
\end{pmatrix}\nonumber\\&+O \left(e^{-2\nu_1(\mathrm{v} - \mathrm{v}_1)t}\right), \quad t \to -\infty.
\end{align}

\medskip

\underline{ Case $\mathrm{v}>\mathrm{v}_1$.} In this case, the role of the dominant terms as $t\to\pm\infty$ is swapped compared to the previous case, so we immediately deduce
\begin{align}
	\label{T:asympt3}
	T_{k_1,\Pi_1}(k) &\sim I_4 - \frac{k}{k_1}\frac{k_1^* - k_1}{k_1^* - k}\begin{pmatrix}
		\pi_\delta & 0\\
		0 & 0
	\end{pmatrix}
	-\frac{k}{k_1^*}\frac{k_1^*-k_1}{k_1+k}\begin{pmatrix}
	\sigma_2 \pi_\delta^*\sigma_2 & 0\\
	0 & 0
\end{pmatrix}\nonumber\\&+O \left(e^{-2\nu_1(\mathrm{v} - \mathrm{v}_1)t}\right), \quad t \to \infty,
\end{align}
and
\begin{align}
	\label{T:asympt4}
	T_{k_1,\Pi_1}(k) &\sim I_4 - \frac{k}{k_1}\frac{k_1^* - k_1}{k_1^* - k}\begin{pmatrix}
		0 & 0\\
		0 & \pi_\gamma
	\end{pmatrix}
	-\frac{k}{k_1^*}\frac{k_1^*-k_1}{k_1+k}\begin{pmatrix}
	0 & 0\\
	0 & \sigma_2 \pi_\gamma^*\sigma_2
\end{pmatrix}\nonumber\\&+O \left(e^{2\nu_1(\mathrm{v} - \mathrm{v}_1)t}\right), \quad t \to -\infty.
\end{align}

\medskip

\paragraph{Case $N=2$.} Here, we have
\begin{eqnarray}
	\label{refacG}
	G_+(k)=T_{k_2,\Pi_2}(k)\,T_{k_1,\Pi_1}(k)=T_{k_1,P_1}(k)\,T_{k_2,P_2}(k),
\end{eqnarray}
where we recall that, in the parametrization with the spectral parameter $k=\frac{1}{\lambda}$,
\be
T_{k_1,\Pi_1}(k)=\left(I_4 -\frac{k}{k_1}\frac{k_1^*-k_1}{k_1^*-k}\Pi_1\right)
	\left(I_4-\frac{k}{k_1^*}\frac{k_1^*-k_1}{k_1+k}\Lambda \Pi_1^* \Lambda^{-1}\right),
\ee
and similarly for the other factors.
The projectors $(\Pi_1,\Pi_2)$ and $(P_1,P_2)$ are related by the map arising from the refactorization discussed previously.
Now, \eref{dressedsolution} gives
\begin{subequations}
\begin{align}
\begin{pmatrix}
0_2 & U(\xi,t)\\
U(\xi,t)^\dagger & 0_2
\end{pmatrix} &= \frac{k_1^*-k_1}{2|k_1|^2}\left[\Pi_1+\widetilde{\Pi}_1,\Sigma_3\right] + \frac{k_2^*-k_2}{2|k_2|^2}\left[\Pi_2+\widetilde{\Pi}_2,\Sigma_3\right] \label{e:193a}\\
&= \frac{k_1^*-k_1}{2|k_1|^2}\left[P_1+\widetilde{P}_1,\Sigma_3\right] + \frac{k_2^*-k_2}{2|k_2|^2}\left[P_2+\widetilde{P}_2,\Sigma_3\right], \label{e:193b}
\end{align}
\end{subequations}
when we take $V_0(\xi,t)=0$ as the trivial seed solution. The projectors are found from conditions \eref{kernelcondition}.  Using the fact that $[\Pi_j,\Lambda\,\Pi_j^*\,\Lambda^{-1}]=0$, these give
\begin{eqnarray}
&&	\Im \Pi_1
={\rm span}
	\left\lbrace \begin{pmatrix}
		e^{-i(k_1 \xi - t/4k_1)}{\pmb \delta}_1\\
				e^{i(k_1 \xi - t/4k_1)}{\pmb \gamma}_1
	\end{pmatrix} \right\rbrace\,,\\
\label{Im:Pi2}
&&	\Im \Pi_2={\rm span}
	\left\lbrace  T_{k_1,\Pi_1}(k_2)\begin{pmatrix}
		e^{-i(k_2 \xi - t/4k_2)}{\pmb \delta}_2\\
		e^{i(k_2 \xi - t/4k_2)}{\pmb \gamma}_2
	\end{pmatrix} \right\rbrace\,.
\end{eqnarray}
Similarly, in view of \eref{refacG}, we find
\begin{eqnarray}\label{Im:P1}
&&	\Im P_1
={\rm span}
	\left\lbrace T_{k_2,P_2}(k_1)
	\begin{pmatrix}
	 e^{-i(k_1 \xi - t/4k_1)}{\pmb \delta}_1\\
e^{i(k_1 \xi - t/4k_1)}{\pmb \gamma}_1
	\end{pmatrix} \right\rbrace\,,\\
	&&	\Im P_2={\rm span}
	\left\lbrace
	\begin{pmatrix}
		e^{-i(k_2 \xi - t/4k_2)}{\pmb \delta}_2\\
		e^{i(k_2 \xi - t/4k_2)}{\pmb \gamma}_2
	\end{pmatrix} \right\rbrace\,.
\end{eqnarray}
We are now in a position to discuss the map on the solitons polarizations arising from soliton interactions, following the arguments given in Sec.~2.4 of \cite{CZ}.
\begin{proposition}
Suppose without loss of generality that $k_1$ and $k_2$ are such that $\mathrm{v}_2 > \mathrm{v}_1$. Then, up to exponentially small terms, as $t\to\pm\infty$ the two-soliton solution is the sum of two one-soliton solutions
 \begin{eqnarray}
	\label{Q:asympt}
	U(\xi,t) \sim U^{sol}\left( \xi,t;k_1,{\pmb \gamma}_1^\pm,{\pmb \delta}_1^\pm\right) + U^{sol}\left( \xi,t;k_2,{\pmb \gamma}_2^\pm,{\pmb \delta}_2^\pm\right),
\end{eqnarray}
where the relations between the ``incoming'' polarization vectors $({\pmb \gamma}_j^-,{\pmb \delta}_j^-)$, $j=1,2$ and the ``outgoing'' polarization vectors $({\pmb \gamma}_j^+,{\pmb \delta}_j^+)$, $j=1,2$ are given by:
\begin{subequations}\label{e:196n}
\begin{gather}
{\pmb \gamma}^{+}_1 = \frac{k_2}{k_2^*}\frac{k_1+k_2^*}{k_1+k_2} \left[ I_2 + \frac{k_1^2}{k_2^2}\frac{(k_2^*)^2-k_2^2}{k_1^2 - (k_2^*)^2} \frac{{\pmb \gamma}^{+}_2 ({\pmb \gamma}^{+}_2)^\dagger}{({\pmb \gamma}^{+}_2)^\dagger {\pmb \gamma}^{+}_2} \right]{\pmb \gamma}^{-}_1,\label{e:196na}\\
{\pmb \gamma}^{-}_2 = \frac{k_1}{k_1^*}\frac{k_2 + k_1^*}{k_2 + k_1} \left[ I_2 + \frac{k_2^2}{k_1^2} \frac{(k_1^*)^2 - k_1^2}{k_2^2 - (k_1^*)^2} \frac{{\pmb \gamma}^{-}_1 ({\pmb \gamma}^{-}_1)^\dagger}{({\pmb \gamma}^{-}_1)^\dagger{\pmb \gamma}^{-}_1} \right] {\pmb \gamma}^{+}_2,\label{e:196nb}\\
{\pmb \delta}^{-}_1 = \frac{k_2}{k_2^*}\frac{k_1+k_2^*}{k_1+k_2} \left[ I_2 + \frac{k_1^2}{k_2^2}\frac{(k_2^*)^2-k_2^2}{k_1^2 - (k_2^*)^2}\frac{{\pmb \delta}^{-}_2 ({\pmb \delta}^{-}_2)^\dagger}{({\pmb \delta}^{-}_2)^\dagger {\pmb \delta}^{-}_2} \right]{\pmb \delta}^{+}_1,\label{e:196nc}\\
{\pmb \delta}^{+}_2 =  \frac{k_1}{k_1^*}\frac{k_2 + k_1^*}{k_2 + k_1} \left[ I_2 + \frac{k_2^2}{k_1^2} \frac{(k_1^*)^2 - k_1^2}{k_2^2 - (k_1^*)^2}  \frac{{\pmb \delta}^{+}_1 ({\pmb \delta}^{+}_1)^\dagger}{({\pmb \delta}^{+}_1)^\dagger {\pmb \delta}^{+}_1} \right] {\pmb \delta}^{-}_2,\label{e:196nd}
\end{gather}
\end{subequations}
and the relations between these vectors and the ``true'' norming constants defining the degeneracy spaces are
\begin{gather}
{\pmb \gamma}^{-}_1 = {\pmb \gamma}_1, \quad {\pmb \gamma}^{+}_2 = {\pmb \gamma}_2, \quad {\pmb \delta}^{+}_1 = {\pmb \delta}_1, \quad {\pmb \delta}^{-}_2 = {\pmb \delta}_2.
\end{gather}
\end{proposition}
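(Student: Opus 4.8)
The plan is to adapt the argument of Sec.~2.4 of \cite{CZ}: isolate each soliton in the two‑soliton solution by letting $|t|\to\infty$ along the ray $\xi-\mathrm{v}_jt=\mathrm{const}$, and to read off its asymptotic polarization vectors from whichever of the two refactorizations of $G_+$ in \eref{refacG} presents that soliton's Darboux factor with an \emph{undressed} degeneracy vector. The starting observation is that in $\Im\Pi_1$ (and in $\Im P_2$) the defining column is $\big(e^{-i(k_1\xi-t/4k_1)}{\pmb\delta}_1,\,e^{i(k_1\xi-t/4k_1)}{\pmb\gamma}_1\big)^T$, with $|e^{\pm i(k_1\xi-t/4k_1)}|=e^{\mp\nu_1(\xi-\mathrm{v}_1t)}$, and similarly for index $2$; hence along a ray with $\mathrm{v}\neq\mathrm{v}_j$ exactly one block dominates, and the corresponding rank‑$1$ projector degenerates to a constant block‑diagonal projector $\mathrm{diag}(\pi_{\delta_j},0)$ or $\mathrm{diag}(0,\pi_{\gamma_j})$ according to $\mathrm{sign}(\mathrm{v}-\mathrm{v}_j)$ and $t\to\pm\infty$ — this is precisely the content of \eref{T:asympt1}–\eref{T:asympt4}. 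Since block‑diagonal matrices commute with $\Sigma_3$, the pair $\Pi_j+\widetilde\Pi_j$ attached to the degenerating factor drops out of \eref{e:193a}–\eref{e:193b} up to exponentially small terms, which yields \eref{Q:asympt}.

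To extract soliton $2$, I work with $G_+=T_{k_2,\Pi_2}(k)\,T_{k_1,\Pi_1}(k)$ along the ray $\mathrm{v}=\mathrm{v}_2>\mathrm{v}_1$. By \eref{T:asympt3}–\eref{T:asympt4}, $T_{k_1,\Pi_1}(k)\to\mathrm{diag}(M_{11}^+,I_2)$ as $t\to+\infty$ and $T_{k_1,\Pi_1}(k)\to\mathrm{diag}(I_2,N_{22}^-)$ as $t\to-\infty$, with $M_{11}^+$, $N_{22}^-$ explicit $2\times2$ matrices built from $\pi_{\delta_1}$, resp.\ $\pi_{\gamma_1}$. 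Feeding this into \eref{Im:Pi2}, the degeneracy space of $\Pi_2$ becomes exactly that of an $N=1$ rank‑$1$ soliton with polarization vectors $\big(M_{11}^+(k_2){\pmb\delta}_2,\ {\pmb\gamma}_2\big)$ as $t\to+\infty$ and $\big({\pmb\delta}_2,\ N_{22}^-(k_2){\pmb\gamma}_2\big)$ as $t\to-\infty$; comparing with \eref{e:141New} gives ${\pmb\gamma}_2^{+}={\pmb\gamma}_2$, ${\pmb\delta}_2^{-}={\pmb\delta}_2$, ${\pmb\delta}_2^{+}=M_{11}^+(k_2){\pmb\delta}_2$, ${\pmb\gamma}_2^{-}=N_{22}^-(k_2){\pmb\gamma}_2$. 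Evaluating $M_{11}^+$, $N_{22}^-$ at $k=k_2$ and simplifying with the elementary identity $\pi+\sigma_2\pi^*\sigma_2=I_2$ valid for any rank‑$1$ Hermitian projector $\pi$ on $\Complex^2$ (the same identity used to pass from \eref{e:fbreatherdata} to \eref{e:162new}) reduces both matrices to the one‑soliton transmission‑coefficient shape of \eref{e:74}; this is exactly \eref{e:196nb} and \eref{e:196nd}.

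Soliton $1$ is handled symmetrically from the other factorization $G_+=T_{k_1,P_1}(k)\,T_{k_2,P_2}(k)$ along the ray $\mathrm{v}=\mathrm{v}_1<\mathrm{v}_2$: by \eref{Im:P1}, $\Im P_1$ is the undressed soliton‑$1$ vector acted on by $T_{k_2,P_2}(k_1)$, and since $P_2$ carries the \emph{bare} degeneracy space, the index‑$2$ analogue of \eref{T:asympt1}–\eref{T:asympt2} gives $T_{k_2,P_2}(k)\to\mathrm{diag}(I_2,\widetilde M_{22}^+)$ as $t\to+\infty$ and $\to\mathrm{diag}(\widetilde M_{11}^-,I_2)$ as $t\to-\infty$. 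Hence ${\pmb\delta}_1^{+}={\pmb\delta}_1$, ${\pmb\gamma}_1^{-}={\pmb\gamma}_1$, ${\pmb\gamma}_1^{+}=\widetilde M_{22}^+(k_1){\pmb\gamma}_1$, ${\pmb\delta}_1^{-}=\widetilde M_{11}^-(k_1){\pmb\delta}_1$, which simplify as above to \eref{e:196na} and \eref{e:196nc}. Collecting the four identities ${\pmb\gamma}_1^{-}={\pmb\gamma}_1$, ${\pmb\gamma}_2^{+}={\pmb\gamma}_2$, ${\pmb\delta}_1^{+}={\pmb\delta}_1$, ${\pmb\delta}_2^{-}={\pmb\delta}_2$ and substituting them back into the relations just derived closes the system \eref{e:196n}.

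The main obstacle is not the algebra but the analytic control behind these limits: one must show that the a priori $(\xi,t)$‑dependent dressing matrices $T_{k_1,\Pi_1}(k_2)$ and $T_{k_2,P_2}(k_1)$ genuinely converge — uniformly on the relevant ray and with exponentially small error — to the block‑diagonal constants dictated by \eref{T:asympt1}–\eref{T:asympt4}, so that the ``dressed'' degeneracy spaces $\Im\Pi_2$ and $\Im P_1$ collapse to honest one‑soliton degeneracy spaces. This relies on $k_1$ lying outside the quartet of $k_2$ (so that evaluating those asymptotic limits at the other spectral point is admissible), and on the uniform decay of the $O(e^{\pm2\nu_j(\mathrm{v}-\mathrm{v}_j)t})$ remainders in \eref{T:asympt1}–\eref{T:asympt4}; once this is granted, the identification of the surviving contribution with a single $U^{sol}$ through \eref{e:141New} is immediate.
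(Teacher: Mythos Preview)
Your proposal is correct and follows essentially the same approach as the paper: you use the two refactorizations in \eref{refacG} to isolate each soliton along its own ray, apply the asymptotic estimates \eref{T:asympt1}--\eref{T:asympt4} to the dressing factor that becomes block-diagonal (so it commutes with $\Sigma_3$ and drops out of \eref{e:193a}--\eref{e:193b}), read off the surviving soliton's polarization vectors from the dressed degeneracy space \eref{Im:Pi2} or \eref{Im:P1}, and simplify via $\pi+\sigma_2\pi^*\sigma_2=I_2$. The organization differs slightly (you treat soliton $2$ then soliton $1$, the paper treats $t\to+\infty$ then $t\to-\infty$), but the logical structure and every key step coincide.
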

\begin{proof}
First, we study the behavior as $t\to\infty$. The argument will be similar for $t\to-\infty$. The idea is to use the collection of estimates \eref{Pi:asympt}--\eref{T:asympt4} appropriately applied to $\Pi_1$ and $\Pi_2$, or $P_1$ and $P_2$, recalling that $\mathrm{v}_2 > \mathrm{v}_1$. Let $\xi= \mathrm{v}_1 t+C$ and let us use \eref{e:193b} for the expression of the solution.  As $t\to\infty$, we find that $P_2$ is block diagonal, hence so is $\widetilde{P}_2$ and therefore
\begin{gather}
\begin{pmatrix}
0_2 & U(\xi,t)\\
U(\xi,t)^\dagger & 0_2
\end{pmatrix} \sim \frac{k_1^*-k_1}{2|k_1|^2}\left[P_1+\widetilde{P}_1,\Sigma_3\right],
\end{gather}
where now, using \eref{T:asympt1} applied to $T_{k_2,P_2}(k_1)$ and writing accordingly $\frac{{\pmb \gamma}_2{\pmb \gamma}_2^\dagger}{{\pmb \gamma}_2^\dagger{\pmb \gamma}_2}=\pi_{\gamma_2}$, in view of \eref{Im:P1} we get that $\Im P_1$ coincides with the span of
$$
\left[I_4 - \frac{k_1}{k_2}\frac{k_2^* - k_2}{k_2^* - k_1}\begin{pmatrix}
		0 & 0\\
		0 & \pi_{\gamma_2}
	\end{pmatrix}
	- \frac{k_1}{k_2^*}\frac{k_2^*-k_2}{k_2+k_1}\begin{pmatrix}
		0 & 0\\
		0 & \sigma_2 \pi_{\gamma_2}^*\sigma_2
	\end{pmatrix}\right]\begin{pmatrix}
	 e^{-i(k_1 \xi - t/4k_1)}{\pmb \delta}_1\\
e^{i(k_1 \xi - t/4k_1)}{\pmb \gamma}_1
	\end{pmatrix},
$$
i.e.
$$
\Im P_1\sim {\rm span}
\left\lbrace \begin{pmatrix}
	 e^{-i(k_1 \xi - t/4k_1)}{\pmb \delta}^{+}_1\\
e^{i(k_1 \xi - t/4k_1)}{\pmb \gamma}^{+}_1
	\end{pmatrix} \right\rbrace.
$$
In turn, this gives
\begin{gather}
{\pmb \gamma}^{+}_1 = \left[ I_4 - \frac{k_1}{k_2}\frac{k_2^*-k_2}{k_2^*-k_1}\pi_{\gamma_2} - \frac{k_1}{k_2^*}\frac{k_2^*-k_2}{k_2+k_1}\sigma_2 \pi_{\gamma_2}^*\sigma_2 \right]{\pmb \gamma}_1, \quad {\pmb \delta}^{+}_1 = {\pmb \delta}_1.
\end{gather}	
Therefore, along $\xi=\mathrm{v}_1 t + C$ and as $t\to\infty$, $U(\xi,t)$ behaves like
$U^{sol}\left(\xi,t;k_1,{\pmb \gamma}_1^{+},{\pmb \delta}_1^{+}\right).$
Now we analyze the solution along $\xi=\mathrm{v}_2 t + C$ as $t\to\infty$, following the same reasoning applied to \eref{e:193a}. Now $\Pi_1$ becomes block diagonal so that it does not contribute to the solution, which now behaves like 	
\begin{gather}
\begin{pmatrix}
0_2 & U(\xi,t)\\
U(\xi,t)\dagger & 0_2
\end{pmatrix} \sim \frac{k_2^*-k_2}{2|k_2|^2}\left[\Pi_2+\widetilde{\Pi}_2,\Sigma_3\right].
\end{gather}	
Using \eref{T:asympt3} for $T_{k_1,\Pi_1}(k_2)$ and \eref{Im:Pi2}, we get that $\Im \Pi_2$ coincides with the span of	
$$ \left[I_4 - \frac{k_2}{k_1}\frac{k_1^* - k_1}{k_1^*- k_2}\begin{pmatrix}
		 \pi_{\delta_1} & 0\\
		0 & 0
	\end{pmatrix}
	- \frac{k_2}{k_1^*}\frac{k_1^*-k_1}{k_1+k_2}\begin{pmatrix}
		\sigma_2 \pi_{\delta_1}^*\sigma_2 & 0\\
		0 & 0
	\end{pmatrix}\right]\begin{pmatrix}
	 e^{-i(k_2 \xi - t/4k_2)}{\pmb \delta}_2\\
e^{i(k_2 \xi - t/4k_2)}{\pmb \gamma}_2
	\end{pmatrix},$$
or equivalently
\begin{gather*}
\Im \Pi_2 \sim {\rm span}
\left\lbrace \begin{pmatrix}
	 e^{-i(k_2 \xi - t/4k_2)}{\pmb \delta}_2^{+}\\
e^{i(k_2 \xi - t/4k_2)}{\pmb \gamma}_2^{+}
	\end{pmatrix} \right\rbrace,
\end{gather*}
whence
\begin{gather*}
{\pmb \gamma}^{+}_2 = {\pmb \gamma}_2, \quad {\pmb \delta}^{+}_2 = \left[ I_4 - \frac{k_2}{k_1}\frac{k_1^*-k_1}{k_1^*-k_2} \pi_{\delta_1} - \frac{k_2}{k_1^*}\frac{k_1^*-k_1}{k_1+k_2}\sigma_2 \pi_{\delta_1}^*\sigma_2 \right]{\pmb \delta}_2, \quad \pi_{\delta_1} = \frac{{\pmb \delta}_1 {\pmb \delta}_1^\dagger}{{\pmb \delta}_1^\dagger {\pmb \delta}_1}.
\end{gather*}
Therefore, along $\xi=\mathrm{v}_2t+C$ and as $t\to\infty$, $U(\xi,t)$ behaves like
$U^{sol}\left(\xi,t;k_1,{\pmb \gamma}_2^{+},{\pmb \delta}_2^{+}\right).$
This proves the claim \eref{Q:asympt} as $t\to\infty$. The argument can be repeated as $t \to-\infty$, with similar results. For brevity, we only provide the formulas for the asymptotic polarization vectors:
\begin{gather*}
{\pmb \gamma}^{-}_1 = {\pmb \gamma}_1, \quad {\pmb \delta}^{-}_1 = \left[ I_4 - \frac{k_1}{k_2}\frac{k_2^*-k_2}{k_2^*-k_1}\pi_{\delta_2} - \frac{k_1}{k_2^*}\frac{k_2^*-k_2}{k_2+k_1}\sigma_2 \pi_{\delta_2}^* \sigma_2 \right]{\pmb \delta}_1, \quad \pi_{\delta_2} = \frac{{\pmb \delta}_2 {\pmb \delta}_2^\dagger}{{\pmb \delta}_2^\dagger {\pmb \delta}_2},
\end{gather*}
and
\begin{gather*}
{\pmb \gamma}^{-}_2 = \left[ I_4 - \frac{k_2}{k_1}\frac{k_1^*-k_1}{k_1^*-k_2}\pi_{\gamma_1} - \frac{k_2}{k_1^*}\frac{k_1^*-k_1}{k_1 + k_2} \sigma_2 \pi_{\gamma_1}^* \sigma_2 \right] {\pmb \gamma}_2, \quad {\pmb \delta}^{-}_2 = {\pmb \delta}_2, \quad \pi_{\gamma_1} = \frac{{\pmb \gamma}_1 {\pmb \gamma}_1^\dagger}{{\pmb \gamma}_1^\dagger {\pmb \gamma}_1}.
\end{gather*}
Summarizing our previous results, we have the following relations:
\begin{subequations}
\begin{gather}
{\pmb \gamma}^{+}_1 = \left[ I_4 - \frac{k_1}{k_2}\frac{k_2^*-k_2}{k_2^*-k_1}\frac{{\pmb \gamma}_2^{+} ({\pmb \gamma}_2^{+})^\dagger}{({\pmb \gamma}_2^{+})^\dagger {\pmb \gamma}_2^{+}} - \frac{k_1}{k_2^*}\frac{k_2^*-k_2}{k_2+k_1}\sigma_2\frac{({\pmb \gamma}_2^{+})^* ({\pmb \gamma}_2^{+})^T}{({\pmb \gamma}_2^{+})^T ({\pmb \gamma}_2^{+})^*}\sigma_2 \right]{\pmb \gamma}_1^{-},\\
{\pmb \gamma}^{-}_2 = \left[ I_4 - \frac{k_2}{k_1}\frac{k_1^*-k_1}{k_1^*-k_2}\frac{{\pmb \gamma}_1^{-} ({\pmb \gamma}_1^{-})^\dagger}{({\pmb \gamma}_1^{-})^\dagger {\pmb \gamma}_1^{-}} - \frac{k_2}{k_1^*}\frac{k_1^*-k_1}{k_1 + k_2} \sigma_2 \frac{({\pmb \gamma}_1^{-})^* ({\pmb \gamma}_1^{-})^T}{({\pmb \gamma}_1^{-})^T ({\pmb \gamma}_1^{-})^*} \sigma_2\right] {\pmb \gamma}_2^{+},\\
{\pmb \delta}^{-}_1 = \left[ I_4 - \frac{k_1}{k_2}\frac{k_2^*-k_2}{k_2^*-k_1}\frac{{\pmb \delta}_2^{-} ({\pmb \delta}_2^{-})^\dagger}{({\pmb \delta}_2^{-})^\dagger {\pmb \delta}_2^{-}} - \frac{k_1}{k_2^*}\frac{k_2^*-k_2}{k_2+k_1}\sigma_2 \frac{({\pmb \delta}_2^{-})^* ({\pmb \delta}_2^{-})^T}{({\pmb \delta}_2^{-})^T ({\pmb \delta}_2^{-})^*}\sigma_2 \right]{\pmb \delta}_1^{+},\\
{\pmb \delta}^{+}_2 = \left[ I_4 - \frac{k_2}{k_1}\frac{k_1^*-k_1}{k_1^*-k_2} \frac{{\pmb \delta}_1^{+} ({\pmb \delta}_1^{+})^\dagger}{({\pmb \delta}_1^{+})^\dagger {\pmb \delta}_1^{+}} - \frac{k_2}{k_1^*}\frac{k_1^*-k_1}{k_1+k_2}\sigma_2\frac{({\pmb \delta}_1^{+})^* ({\pmb \delta}_1^{+})^T}{({\pmb \delta}_1^{+})^T ({\pmb \delta}_1^{+})^*}\sigma_2\right]{\pmb \delta}_2^{-}.
\end{gather}
\end{subequations}
Using the property
\begin{gather}
\sigma_2 \frac{(\pmb{\gamma}_j^{\pm})^* (\pmb{\gamma}_j^{\pm})^{T}}{(\pmb{\gamma}_j^{\pm})^{T} (\pmb{\gamma}_j^{\pm})^*}\sigma_2 = I_2 - \frac{\pmb{\gamma}_j^{\pm} (\pmb{\gamma}_j^{\pm})^\dagger}{(\pmb{\gamma}_j^{\pm})^\dagger \pmb{\gamma}_j^{\pm}}, \quad \sigma_2 \frac{ (\pmb{\delta}_j^{+})^* (\pmb{\delta}_j^{\pm})^{T}}{(\pmb{\delta}_j^{\pm})^{T} (\pmb{\delta}_j^{\pm})^*}\sigma_2 = I_2 - \frac{\pmb{\delta}_j^{\pm} (\pmb{\delta}_j^{\pm})^\dagger}{(\pmb{\delta}_j^{\pm})^\dagger \pmb{\delta}_j^{\pm}},
\end{gather}
for $j=1,2$, one can reduce the last equations into \eref{e:196n}, where we essentially express the polarization vectors after the interaction in terms of parameters characterizing the initial solitons. Notice that since soliton-1 is faster than soliton-2, then the states before/after the interaction coincide with the limits $t \to \pm \infty$, while for soliton-2 the state before the interaction coincides with the limit $t \to + \infty$, and the state after the interaction coincides with the limit $t \to - \infty$.
\end{proof}

\begin{remark}
Note that one can easily rewrite eqs \eref{e:196n} more explicitly as a scattering map $({\pmb \gamma}_1^-,{\pmb \delta}_1^-,{\pmb \gamma}_2^-,{\pmb \delta}_2^-)  \mapsto  ({\pmb \gamma}_1^+,{\pmb \delta}_1^+,{\pmb \gamma}_2^+,{\pmb \delta}_2^+) $ (which is its intended meaning) but then \eref{e:196na} and \eref{e:196nd} become rather lengthy since one has to replace ${\pmb \gamma}_2^+$ and ${\pmb \delta}_1^+$ by their expressions in terms of ${\pmb \gamma}_1^-,{\pmb \delta}_1^-,{\pmb \gamma}_2^-,{\pmb \delta}_2^-$ obtained from \eref{e:196nb} and \eref{e:196nc}. This is the reason why we chose to write the map more compactly as in \eref{e:196n}.
\end{remark}

\paragraph{Self-symmetric case.} Note that from our general results on the structure of the dressing factors, the self-symmetric case is simply a limit of the above results when $k_1$ and/or $k_2$ become purely imaginary. It is easy to see that for instance if $k_2=-k^*_2$ then the maps \eref{e:196na} and \eref{e:196nc} become trivial in the sense that the polarization vectors after the interaction are proportional to those before the interaction. If $k_1$ remains generic then \eref{e:196nb} and \eref{e:196nd} retains their matrix structure. If both zeros are self-symmetric then the the map is trivial, again in the sense that only the norm of the vectors is changed upon interaction, not their direction. This only produces a phase and position shift. This gives a description of the interaction between a self-symmetric soliton and a fundamental breather or between two self-symmetric solitons, respectively.

\subsubsection{Rank-2 case: composite breathers}\label{e:sub6.3.2}

We present a formula for the one-soliton solution in the generic rank-2 case. Recall that this means that the dressing factor is
$g_{k_1,\Pi_1}(k)g_{k_2,\Pi_2}(k)$ with $k_2=-k_1^*$ and $\Pi_2=\Lambda\,\Pi_1^*\,\Lambda^{-1}$ where $\Pi_1$ is a Hermitian projector of rank $2$. Let $\phi_1$ and $\phi_2$ be a basis of the image of $\Pi_1$ (at $t=\xi=0$) (we can choose them orthogonal wlog).
Let
\begin{equation}
	\Phi_1(\xi,t)=e^{-i(k_1 \xi - t/4k_1)\Sigma_3}(\phi_1,\phi_2)\,,
\end{equation}
then, dropping again the $(\xi,t)$ dependence for conciseness,
\begin{eqnarray}
	\label{formulaPi1rank2}
	\Pi_1=\Phi_1(\Phi_1^\dagger \Phi_1)^{-1}\Phi_1^\dagger\,.
\end{eqnarray}
It is convenient to write
\begin{eqnarray}
	\phi_1=\begin{pmatrix}
		{\pmb \delta}_1\\
				{\pmb \gamma}_1
	\end{pmatrix}\,,~~
	\phi_2=\begin{pmatrix}
	{\pmb \tau}_1\\
	{\pmb \omega}_1
\end{pmatrix}\,,~~	{\pmb \delta}_1,{\pmb \gamma}_1,{\pmb \tau}_1,{\pmb \omega}_1\in \mathbb{C}^2\,,~~j=1,2\,.
\end{eqnarray}
Then one obtains
\begin{gather}
\Phi_1^\dagger \Phi_1 = \begin{pmatrix}
{\cal F}_{11} & 	{\cal F}_{12}\\
		{\cal F}_{21} & 	{\cal F}_{22}
\end{pmatrix},
\end{gather}
where
\begin{subequations}
\begin{gather}
{\cal F}_{11} = e^{2\nu_1 (\xi + t/4|k_1|^2)}{\pmb \delta}_1^\dagger {\pmb \delta}_1+e^{-2\nu_1 (\xi + t/4|k_1|^2)}{\pmb \gamma}_1^\dagger{\pmb \gamma}_1 ,\\
{\cal F}_{12} = e^{2\nu_1 (\xi + t/4|k_1|^2)}{\pmb \delta}_1^\dagger {\pmb \tau}_1+e^{-2\nu_1 (\xi + t/4|k_1|^2)}{\pmb \gamma}_1^\dagger{\pmb \omega}_1,\\
{\cal F}_{21} = e^{2\nu_1 (\xi + t/4|k_1|^2)}{\pmb \tau}_1^\dagger {\pmb \delta}_1+e^{-2\nu_1 (\xi + t/4|k_1|^2)}{\pmb \omega}_1^\dagger{\pmb \gamma}_1,\\
{\cal F}_{22} =e^{2\nu_1 (\xi + t/4|k_1|^2)}{\pmb \tau}_1^\dagger {\pmb \tau}_1+e^{-2\nu_1 (\xi + t/4|k_1|^2)}{\pmb \omega}_1^\dagger{\pmb \omega}_1.
\end{gather}
\end{subequations}
Then in particular
\begin{eqnarray}
	(\Phi_1^\dagger \Phi_1)^{-1}	=\frac{1}{D(\xi,t)}\begin{pmatrix}
		{\cal F}_{22} & 	-{\cal F}_{12}\\
		-{\cal F}_{21} & 	{\cal F}_{11}
	\end{pmatrix}\,,~~D(\xi,t)={\det}\,\Phi_1^\dagger \Phi_1\,.
\end{eqnarray}
After long but straightforward calculations, we find
\begin{eqnarray}\label{e:171compbr}
	&& U(\xi,t)=\frac{2i \nu_1}{|k_1|^2}\frac{1}{D(\xi,t)}
\left[			e^{ -2i \eta_1(\xi - t/4|k_1|^2)}\begin{pmatrix}
				{\pmb \delta}_1 & {\pmb \tau}_1
			\end{pmatrix}\begin{pmatrix}
			{\cal F}_{22}^* & 	-{\cal F}_{21}^*\\
			-{\cal F}_{12}^* & 	{\cal F}_{11}^*
		\end{pmatrix}\begin{pmatrix}
		{\pmb \gamma}_1^\dagger\\
		{\pmb \omega}_1^\dagger
	\end{pmatrix}\right.\nonumber\\
&&\qquad\qquad\left.	+e^{2i \eta_1(\xi - t/4|k_1|^2)}\sigma_2 \begin{pmatrix}
	{\pmb \delta}_1^* & {\pmb \tau}_1^*
\end{pmatrix}\begin{pmatrix}
	{\cal F}_{22} & 	-{\cal F}_{21}\\
	-{\cal F}_{12} & 	{\cal F}_{11}
\end{pmatrix}\begin{pmatrix}
	{\pmb \gamma}_1^T\\
	{\pmb \omega}_1^T
\end{pmatrix}\sigma_2  \right].
	\end{eqnarray}
Note that ${\cal F}_{11}^*={\cal F}_{11}$, ${\cal F}_{22}^*={\cal F}_{22}$, and ${\cal F}_{12}^*={\cal F}_{21}$. Even though we could normalize the four vectors ${\pmb \delta}_1,{\pmb \gamma}_1,{\pmb \tau}_1,{\pmb \omega}_1$ characterizing the structure of a composite breather solution, it appears less natural to do so in the above expression than in the case of the fundamental breather. Thus we keep them non-normalized, and denote the above one-soliton solution as
$$U^{cb}(\xi,t;k_1,{\pmb \delta}_1,{\pmb \gamma}_1,{\pmb \tau}_1,{\pmb \omega}_1)\,.$$
Two examples of such composite breathers are shown in Fig.~\ref{fig:5}.
\begin{figure}[h!]
\centering
\begin{subfigure}{0.48\textwidth}
\centering
\includegraphics[height=1.5in,width=1.5in]{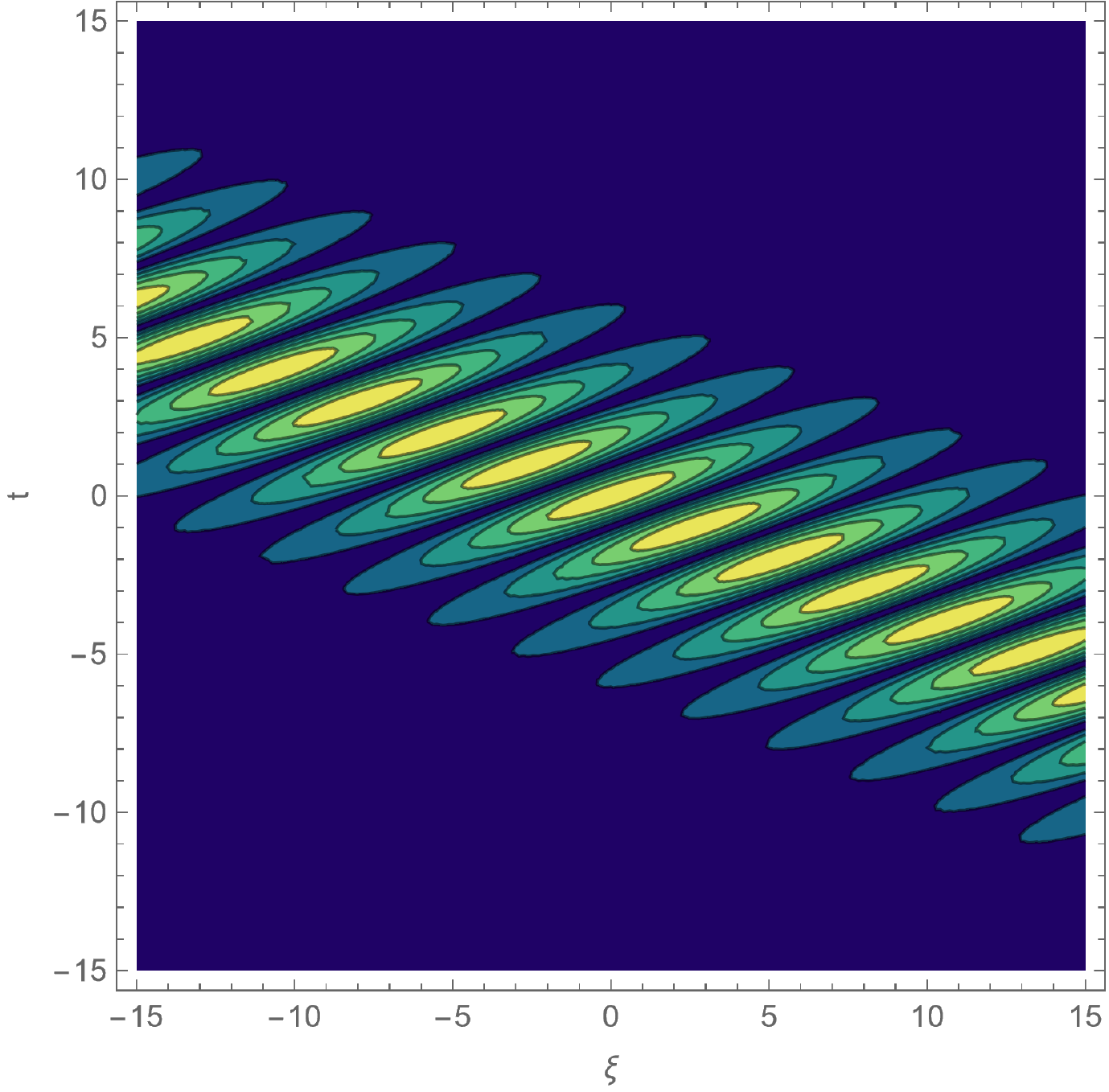}
\caption{} \label{fig:5a}
\end{subfigure}
\hfill
\begin{subfigure}{0.48\textwidth}
\centering
\includegraphics[height=1.5in,width=1.5in]{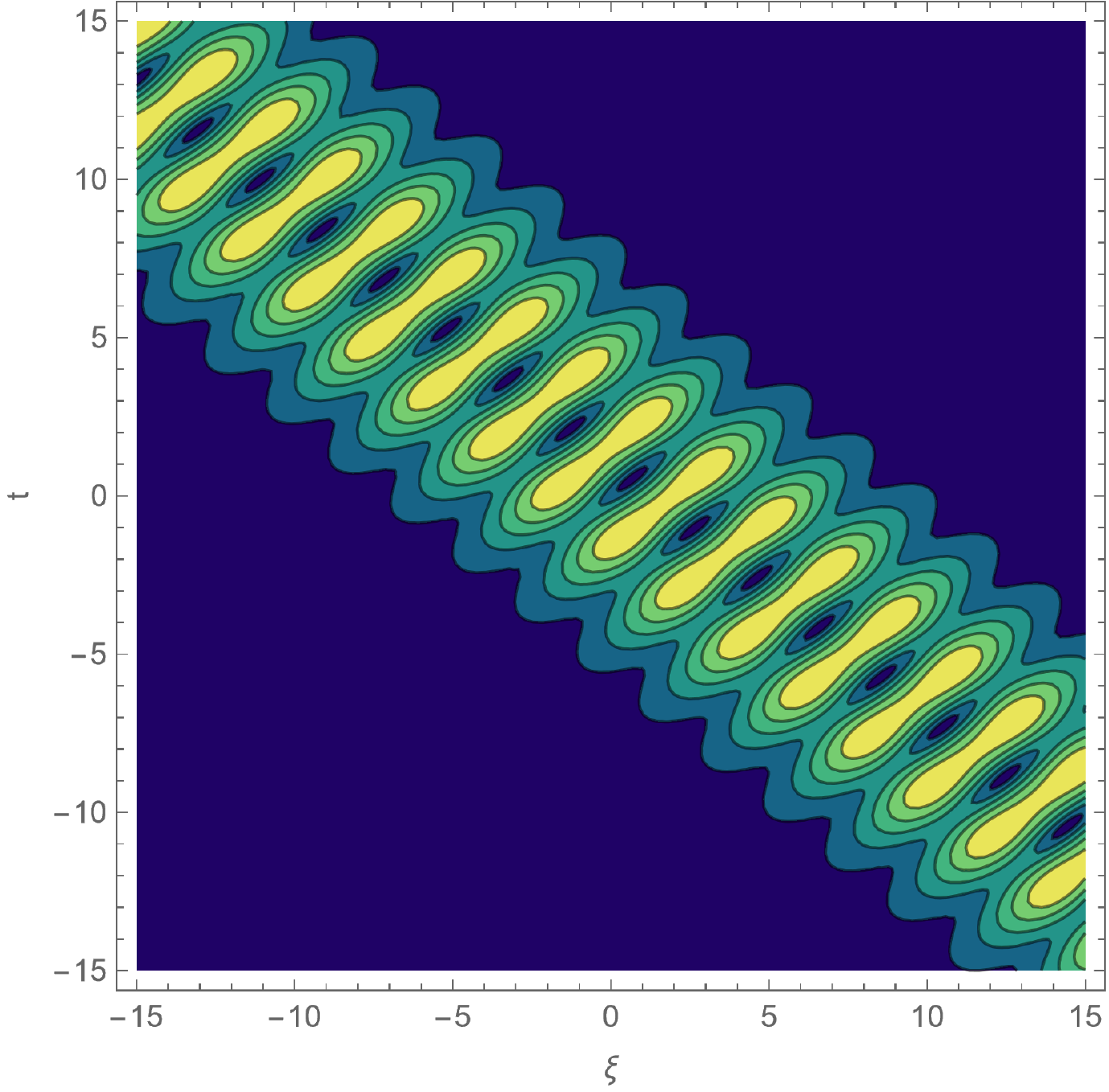}
\caption{} \label{fig:5b}
\end{subfigure}
\hfill
\caption{Contour plots for composite breather solutions. The panels show the magnitude of the component $u_1(\xi,t)$ for two different examples. Here, the soliton parameters are: ${\pmb \delta}_1=(1\,,2)^{T}$, ${\pmb \tau}_1=(2\,,1)^{T}$, ${\pmb \gamma}_1=(10\,,2)^{T}$, ${\pmb \omega}_1=(2\,,e^{i/2})^{T}$ and $k_1=4/5+i/5$ for panel (a); ${\pmb \delta}_1=(1\,,1)^{T}$, ${\pmb \tau}_1=(2\,,3)^{T}$, ${\pmb \gamma}_1=(1\,,2)^{T}$, ${\pmb \omega}_1=(2\,,1)^{T}$ and $k_1=1/2+i/4$ for panel (b).}\label{fig:5}
\end{figure}

Following the same reasoning as in the previous section, our next task is to determine the asymptotic behavior of the projector
$\Pi_1(\xi,t)$ as $|t|\to \infty$ when $\xi-\mathrm{v}t=C$ for $\mathrm{v}<\mathrm{v}_1$ and $\mathrm{v}>\mathrm{v}_1$. Again, the calculations based on the explicit form of $\Phi_1(\xi,t)$ are long but straightforward. Unlike the rank-$1$ case, they show an interesting phenomenon: we need to distinguish several cases depending on whether ${\pmb \delta}_1$ and ${\pmb \tau}_1$ are independent vectors or not, and similarly for ${\pmb \gamma}_1$ and ${\pmb \omega}_1$. We call the case where $\det ({\pmb \delta}_1,{\pmb \tau}_1)\neq 0$ and $\det ({\pmb \gamma}_1,{\pmb \omega}_1)\neq 0$ the {\it generic case}. If either $\det ({\pmb \delta}_1,{\pmb \tau}_1)= 0$ or $\det ({\pmb \gamma}_1,{\pmb \omega}_1)= 0$, we will talk about the {\it non-generic case}.
As we show in Appendix E by computing the associated transmission coefficients, in either of the above non-generic cases the composite breather reduces to a fundamental breather. Moreover, one can see directly from \eref{e:171compbr} that the case where both determinants are zero does not produce a soliton solution. Therefore, in the following we will only consider generic composite breathers.

\paragraph{Generic composite breather ($\det ({\pmb \delta}_1,{\pmb \tau}_1)\neq 0$ and $\det ({\pmb \gamma}_1,{\pmb \omega}_1)\neq 0$).}\ \newline
\underline{Case $\mathrm{v}<\mathrm{v}_1$.}
Recalling that $\nu_1>0$, then the dominant term is
$e^{-2 \nu_1(\mathrm{v}-\mathrm{v}_1)t}$ as $t \to \infty$, and introducing ${\cal N}=({\pmb \gamma}_1~{\pmb \omega}_1)$, we find
\begin{equation}
	\Pi_1(\xi,t)\sim\begin{pmatrix}
		0 & 0\\
		0 & {\cal N}({\cal N}^\dagger{\cal N})^{-1}{\cal N}^\dagger
	\end{pmatrix}+O \left(e^{2 \nu_1(\mathrm{v} - \mathrm{v}_1)t}\right)\,.
\end{equation}
Now since ${\pmb \gamma}_1,{\pmb \omega}_1$ are assumed to be independent, the projector ${\cal N}({\cal N}^\dagger{\cal N})^{-1}{\cal N}^\dagger$ is of full rank equal to $2$ and hence is simply the identity matrix $I_2$. Consequently, we find
\begin{equation}
	\Pi_1(\xi,t)\sim\begin{pmatrix}
		0 & 0\\
		0 & I_2
	\end{pmatrix}+ O \left(e^{2 \nu_1(\mathrm{v} - \mathrm{v}_1)t}\right)\,.
\end{equation}
On the other hand, as $t\to-\infty$, the dominant term is $e^{2\nu_1(\mathrm{v} - \mathrm{v}_1)t}$ and we obtain, since ${\pmb \delta}_1,{\pmb \tau}_1$ are independent,
\begin{equation}
	\Pi_1(\xi,t)\sim\begin{pmatrix}
		I_2 & 0\\
		0		& 0
	\end{pmatrix}+ O \left(e^{-2 \nu_1(\mathrm{v} - \mathrm{v}_1)t}\right)\,.
\end{equation}
As a consequence, we obtain
\begin{align}
	T_{k_1,\Pi_1}(k) \sim   \begin{pmatrix}
		I_2 & 0_2\\
		0_2 & \frac{k-k_1}{k-k_1^*} \frac{k+k_1^*}{k+k_1}I_2
	\end{pmatrix} +O \left(e^{2\nu_1(\mathrm{v} - \mathrm{v}_1)t}\right), \quad t \to \infty,
\end{align}
and
\begin{align}
	T_{k_1,\Pi_1}(k) \sim   \begin{pmatrix}
		\frac{k-k_1}{k-k_1^*} \frac{k+k_1^*}{k+k_1}I_2 & 0_2\\
		0_2 & I_2
	\end{pmatrix}+O \left(e^{-2\nu_1(\mathrm{v} - \mathrm{v}_1)t}\right), \quad t \to -\infty.
\end{align}

\medskip

\noindent \underline{Case $\mathrm{v}>\mathrm{v}_1$.} In this case, the role of the dominant terms as $t\to\pm\infty$ is swapped compared to the previous case so we immediately deduce
\begin{align}
	T_{k_1,\Pi_1}(k) \sim  \begin{pmatrix}
		\frac{k-k_1}{k-k_1^*} \frac{k+k_1^*}{k+k_1}I_2 & 0_2\\
		0_2 & I_2
	\end{pmatrix}+O \left(e^{-2\nu_1(\mathrm{v} - \mathrm{v}_1)t}\right), \quad t \to \infty,
\end{align}
and
\begin{align}
	T_{k_1,\Pi_1}(k) \sim   \begin{pmatrix}
		I_2 & 0_2\\
		0_2 & \frac{k-k_1}{k-k_1^*} \frac{k+k_1^*}{k+k_1}I_2
	\end{pmatrix}+O \left(e^{2\nu_1(\mathrm{v} - \mathrm{v}_1)t}\right), \quad t \to -\infty.
\end{align}

We are now in a position to derive the map induced on the vectors describing a two-soliton solution made of composite breathers. We follow exactly the same reasoning as in the rank-$1$ case, combining the above asymptotic results obtained on the projectors with the refactorization properties that allow to express the solution in two equivalent ways. Thus, one can show that, as $t\to\pm\infty$,
\begin{eqnarray}
	U(\xi,t) \sim U^{cb}(\xi,t;k_1,{\pmb \delta}_1^\pm,{\pmb \gamma}_1^\pm,{\pmb \tau}_1^\pm,{\pmb \omega}_1^\pm)+U^{cb}(\xi,t;k_2,{\pmb \delta}_2^\pm,{\pmb \gamma}_2^\pm,{\pmb \tau}_2^\pm,{\pmb \omega}_2^\pm),
\end{eqnarray}
where the incoming/outgoing polarization vectors ${\pmb \delta}_j^\pm,{\pmb \gamma}_j^\pm,{\pmb \tau}_j^\pm,{\pmb \omega}_j^\pm$, $j=1,2$ are related via the following maps. Note that we make use of the same property already used in the rank-$1$ case to simplify the expressions, namely for any two-dimensional projector $\pi$ of rank $1$ we have
\begin{eqnarray}
	\pi + \sigma_2 \,\pi^*\,\sigma_2=I_2\,.
\end{eqnarray}
As discussed above, we assume that both composite breathers are generic, i.e., $\det ({\pmb \delta}_j,{\pmb \tau}_j)\neq 0$ and $\det ({\pmb \gamma}_j,{\pmb \omega}_j)\neq 0$, $j=1,2$. We obtain the following relations between the incoming/outgoing polarization vectors:
\begin{subequations}
\begin{gather*}
{\pmb \gamma}_1^+ = \frac{k_1-k_2}{k_1-k_2^*}\frac{k_1+k_2^*}{k_1+k_2}{\pmb \gamma}_1^-, \quad {\pmb \delta}_1^- = \frac{k_1-k_2}{k_1-k_2^*}\frac{k_1+k_2^*}{k_1+k_2}{\pmb \delta}_1^+,\\
 {\pmb \omega}_1^+ = \frac{k_1-k_2}{k_1-k_2^*}\frac{k_1+k_2^*}{k_1+k_2}{\pmb \omega}_1^-, \quad {\pmb \tau}_1^- = \frac{k_1-k_2}{k_1-k_2^*}\frac{k_1+k_2^*}{k_1+k_2}{\pmb \tau}_1^+,\\
{\pmb \gamma}_2^- = \frac{k_2-k_1}{k_2-k_1^*}\frac{k_2+k_1^*}{k_2+k_1}{\pmb \gamma}_2^+, \quad {\pmb \delta}_2^+ = \frac{k_2-k_1}{k_2-k_1^*}\frac{k_2+k_1^*}{k_2+k_1}{\pmb \delta}_2^-,\\
{\pmb \omega}_2^- = \frac{k_2-k_1}{k_2-k_1^*}\frac{k_2+k_1^*}{k_2+k_1}{\pmb \omega}_2^+, \quad {\pmb \tau}_2^+ = \frac{k_2-k_1}{k_2-k_1^*}\frac{k_2+k_1^*}{k_2+k_1}{\pmb \tau}_2^-,
\end{gather*}
\end{subequations}
and the following relations between these vectors and the norming constants:
\begin{subequations}
\begin{gather*}
{\pmb \gamma}_1^- = {\pmb \gamma}_1, \quad {\pmb \delta}_1^+ = {\pmb \delta}_1, \quad {\pmb \omega}_1^- = {\pmb \omega}_1, \quad {\pmb \tau}_1^+ = {\pmb \tau}_1,\\
{\pmb \gamma}_2^+ = {\pmb \gamma}_2, \quad {\pmb \delta}_2^- = {\pmb \delta}_2, \quad {\pmb \omega}_2^+ = {\pmb \omega}_2, \quad {\pmb \tau}_2^- = {\pmb \tau}_2.
\end{gather*}
\end{subequations}
We see that this map is trivial in the sense that the vectors characterizing a generic composite breather are only rescaled upon interaction, thus only contributing to a phase shift and position shift. The internal structure of generic composite breathers is not affected by interactions.

\subsubsection{Mixed rank-1 and rank-2 case}

We can also investigate the map on the soliton data in the case where one soliton corresponds to a rank-$1$ projector and the other to a rank-$2$ projector. As a result, below we will be able to discuss the interaction between a fundamental soliton or a fundamental breather with a composite breather.

We already have established the required methodology, so we simply state the results. Suppose that soliton $1$, with associated eigenvalue $k_1$, corresponds to a rank-$1$ projector characterized by ${\pmb \delta}_1$ and ${\pmb \gamma}_1$, and that soliton $2$, with associated eigenvalue $k_2$, corresponds to a rank-$2$ projector characterized by ${\pmb \delta}_2$, ${\pmb \gamma}_2$, ${\pmb \tau}_2$ ${\pmb \omega}_2$ with $\det ({\pmb \delta}_2,{\pmb \tau}_2)\neq 0$ and $\det ({\pmb \gamma}_2,{\pmb \omega}_2)\neq 0$. As before, we assume without loss of generality that $\mathrm{v}_2>\mathrm{v}_1$. As $t\to\pm \infty$,
\begin{eqnarray}
	U(\xi,t) \sim U^{sol}(\xi,t;k_1,{\pmb \delta}_1^\pm,{\pmb \gamma}_1^\pm)+U^{cb}(\xi,t;k_2,{\pmb \delta}_2^\pm,{\pmb \gamma}_2^\pm,{\pmb \tau}_2^\pm,{\pmb \omega}_2^\pm),
\end{eqnarray}
with the following maps among the polarization vectors before/after the soliton interaction:
\begin{subequations}
\begin{gather}
{\pmb \gamma}_1^+=\frac{k_1-k_2}{k_1-k_2^*}\frac{k_1+k_2^*}{k_1+k_2}{\pmb \gamma}_1^-, \quad {\pmb \delta}_1^-=  \frac{k_1-k_2}{k_1-k_2^*}\frac{k_1+k_2^*}{k_1+k_2}{\pmb \delta}_1^+,\\
{\pmb \gamma}_2^-= \frac{k_1}{k_1^*}\frac{k_1^*+k_2}{k_1+k_2} \left[ I_2 + \frac{k_2^2}{k_1^2}\frac{(k_1^*)^2-k_1^2}{k_2^2 - (k_1^*)^2} \frac{{\pmb \gamma}_1^-({\pmb \gamma}_1^-)^\dagger}{({\pmb \gamma}_1^-)^\dagger{\pmb \gamma}_1^-} \right] {\pmb \gamma}_2^+, \\
{\pmb \delta}_2^+= \frac{k_1}{k_1^*}\frac{k_1^*+k_2}{k_1+k_2} \left[ I_2 + \frac{k_2^2}{k_1^2}\frac{k_1^2-(k_1^*)^2}{(k_1^*)^2 - k_2^2} \frac{{\pmb \delta}_1^-({\pmb \delta}_1^-)^\dagger}{({\pmb \delta}_1^-)^\dagger{\pmb \delta}_1^-} \right] {\pmb \delta}_2^-, \\
{\pmb \omega}_2^-= \frac{k_1}{k_1^*}\frac{k_1^*+k_2}{k_1+k_2} \left[ I_2 + \frac{k_2^2}{k_1^2}\frac{(k_1^*)^2-k_1^2}{k_2^2 - (k_1^*)^2} \frac{{\pmb \gamma}_1^-({\pmb \gamma}_1^-)^\dagger}{({\pmb \gamma}_1^-)^\dagger{\pmb \gamma}_1^-} \right] {\pmb \omega}_2^+,\\
{\pmb \tau}_2^+= \frac{k_1}{k_1^*}\frac{k_1^*+k_2}{k_1+k_2} \left[ I_2 + \frac{k_2^2}{k_1^2}\frac{k_1^2-(k_1^*)^2}{(k_1^*)^2 - k_2^2} \frac{{\pmb \delta}_1^-({\pmb \delta}_1^-)^\dagger}{({\pmb \delta}_1^-)^\dagger{\pmb \delta}_1^-} \right] {\pmb \tau}_2^-,
\end{gather}
\end{subequations}
while the relations between these and the norming constants are as follows
\begin{subequations}
\begin{gather}
{\pmb \gamma}_1^- = {\pmb \gamma}_1, \quad {\pmb \delta}_1^+ = {\pmb \delta}_1, \quad {\pmb \gamma}_2^+ = {\pmb \gamma}_2, \quad {\pmb \delta}_2^- = {\pmb \delta}_2, \quad {\pmb \omega}_2^+ = {\pmb \omega}_2, \quad {\pmb \tau}_2^- = {\pmb \tau}_2.
\end{gather}
\end{subequations}
Fig.~\ref{fig:7} shows two examples of a fundamental soliton and a fundamental breather interacting with a generic composite breather, showing that the fundamental soliton retains its nature upon interacting with a generic composite breather.
\begin{figure}[h!]
\centering
\begin{subfigure}{0.48\textwidth}
\centering
\includegraphics[height=1.5in,width=1.5in]{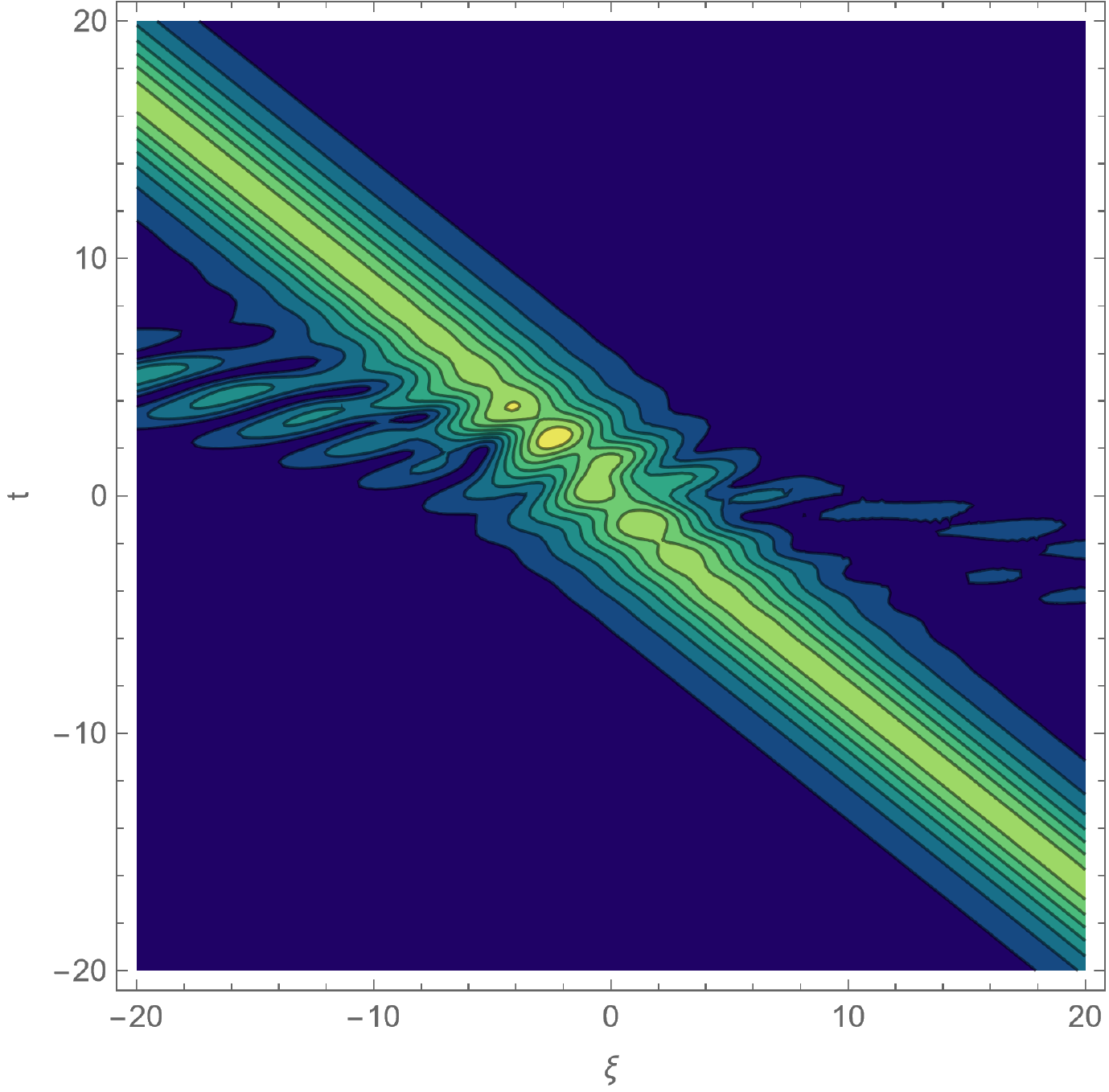}
\caption{} \label{fig:7a}
\end{subfigure}
\hfill
\begin{subfigure}{0.48\textwidth}
\centering
\includegraphics[height=1.5in,width=1.5in]{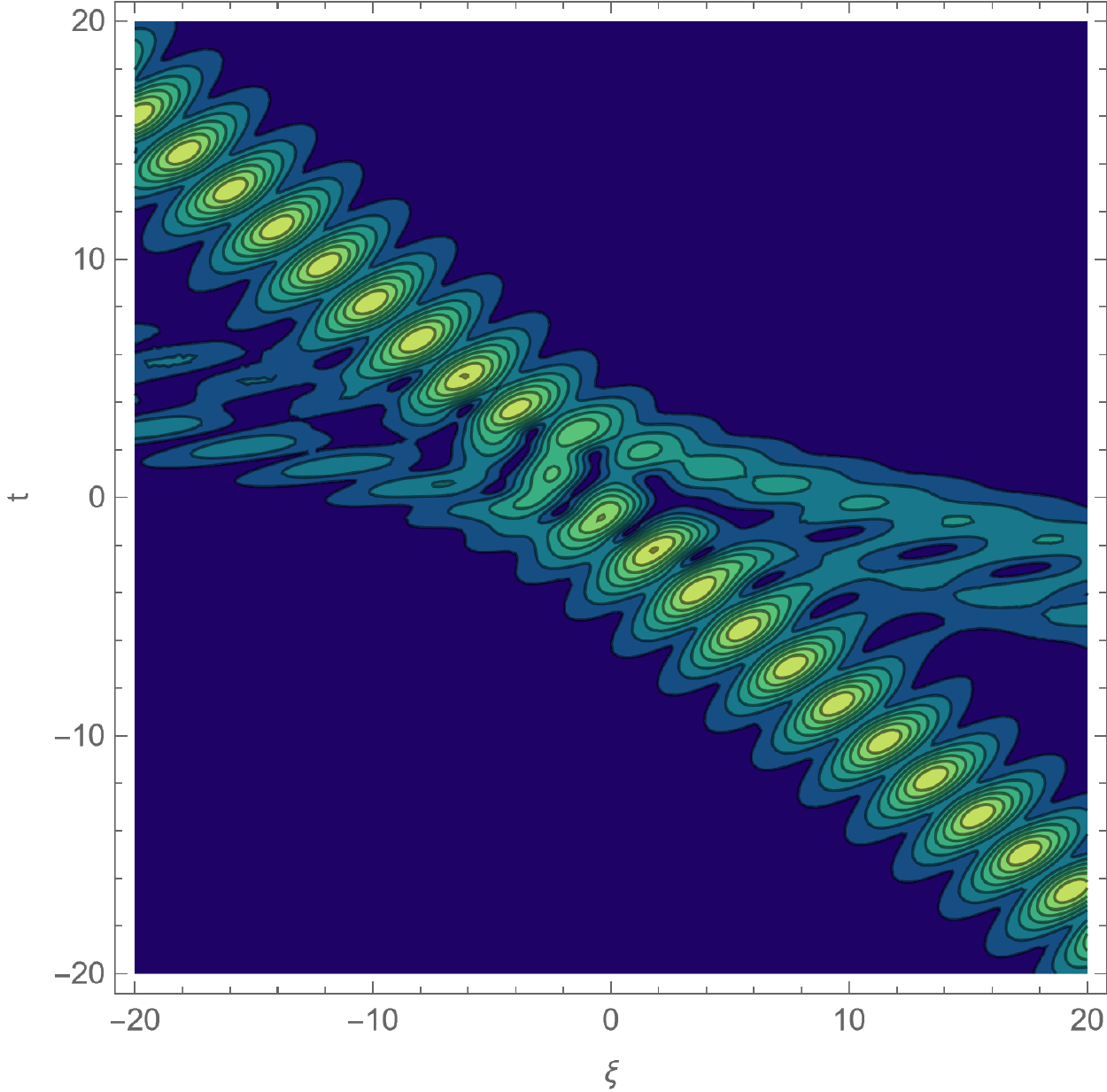}
\caption{} \label{fig:7b}
\end{subfigure}
\caption{A fundamental soliton (left) and fundamental breather (right) interacting with a generic composite breather. The panels show the magnitude of the first component of the vector solution. Panel (a): interaction between a fundamental soliton and a generic composite breather. Panel (b): interaction between a fundamental breather and a generic composite breather. Here, the soliton parameters are: panel (a): $\pmb{\gamma}_1 = \left( 1\,, 1/2\right)^{T}$, $\pmb{\delta}_1 = \left( 1\,,0\right)^{T}$ for soliton-1 being a fundamental soliton, and $\pmb{\delta}_2 = \left( 1\,,1\right)^{T}$, $\pmb{\tau}_2 = \left( 1\,, 2\right)^{T}$, $\pmb{\gamma}_2 = \left( 1,2\right)^{T}$, $\pmb{\omega}_2 = \left( 3,e^{i}\right)^{T}$, for soliton-2 being a generic composite breather. Panel (b): $\pmb{\gamma}_1 = \left( 1\,, 1/2\right)^{T}$, $\pmb{\delta}_1 = \left(1\,,1\right)^{T}$ for soliton-1 being a fundamental breather, and $\pmb{\delta}_2 = \left( 1\,,2\right)^{T}$, $\pmb{\tau}_2 = \left( 2\,, 1\right)^{T}$, $\pmb{\gamma}_2 = \left( 10,2\right)^{T}$, $\pmb{\omega}_2 = \left( e^{i},1\right)^{T}$ for soliton-2 being a generic composite breather. In both examples, the discrete eigenvalues have been chosen as: $k_1=1/2+i/4$, $k_2=1+i/2$.}\label{fig:7}
\end{figure}

Finally, Fig.~\ref{fig:6} shows some examples of mixed rank-1 and rank-2 soliton interactions, as well as interactions between two rank-2 solitons, to illustrate the results we derived in the current and the previous sections.

\begin{figure}[h!]
\centering
\begin{subfigure}{0.45\textwidth}
\centering
\includegraphics[height=1.5in,width=1.5in]{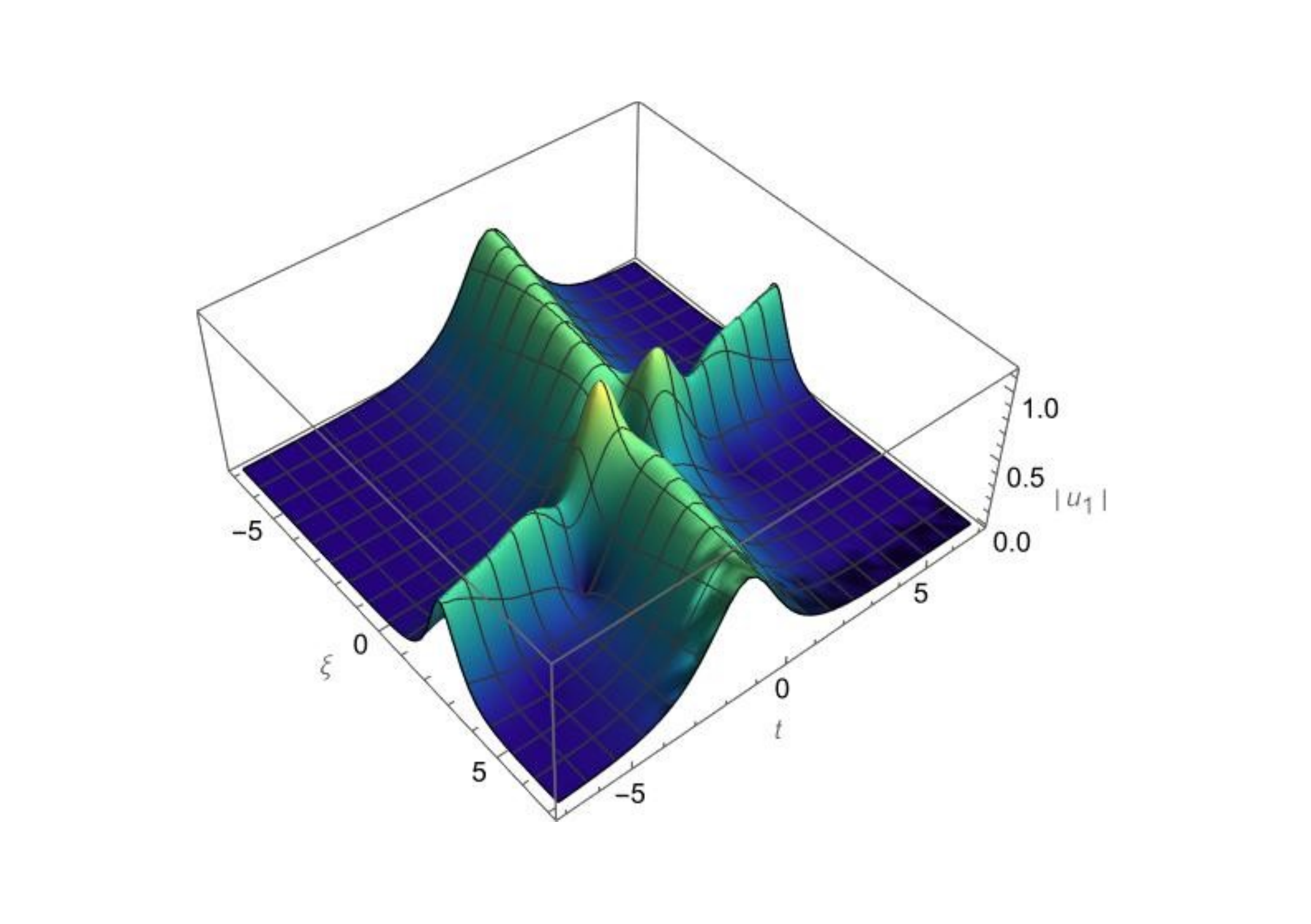}
\caption{} \label{fig:6a}
\end{subfigure}
\hfill
\begin{subfigure}{0.45\textwidth}
\centering
\includegraphics[height=1.5in,width=1.5in]{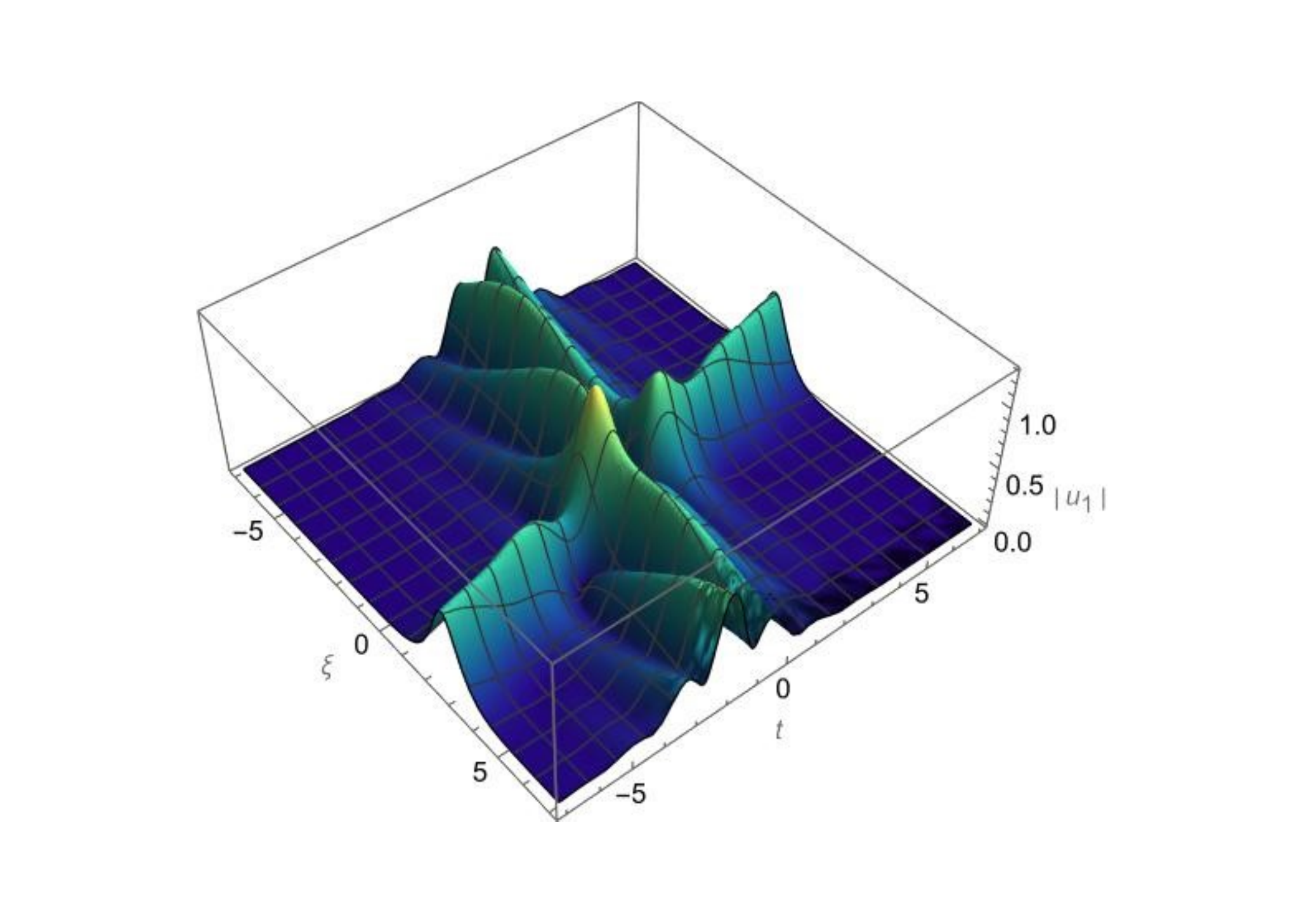}
\caption{} \label{fig:6b}
\end{subfigure}
\hfill
\begin{subfigure}{0.45\textwidth}
\centering
\includegraphics[height=1.5in,width=1.5in]{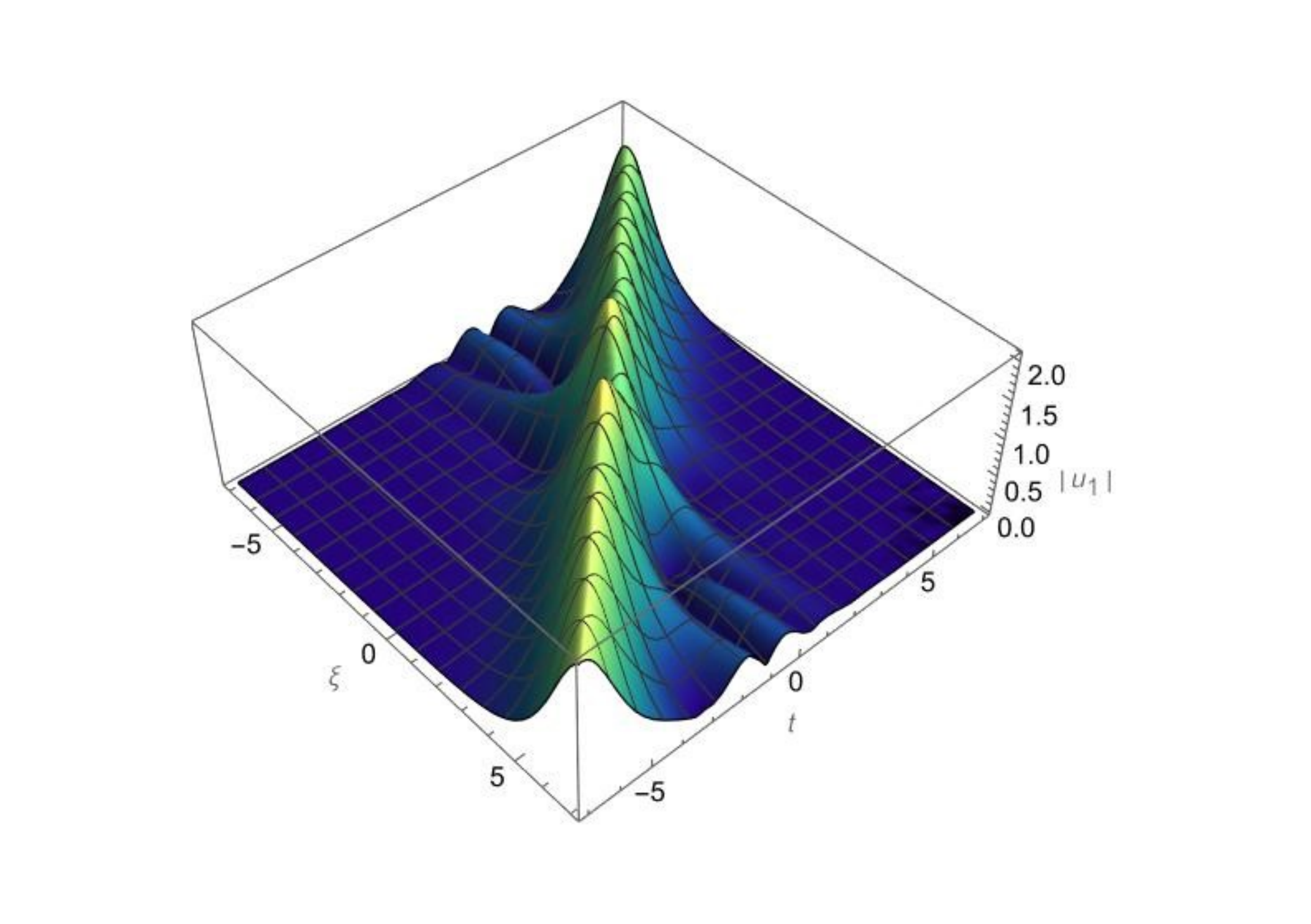}
\caption{} \label{fig:6d}
\end{subfigure}
\hfill
\hfill
\begin{subfigure}{0.45\textwidth}
\centering
\includegraphics[height=1.5in,width=1.5in]{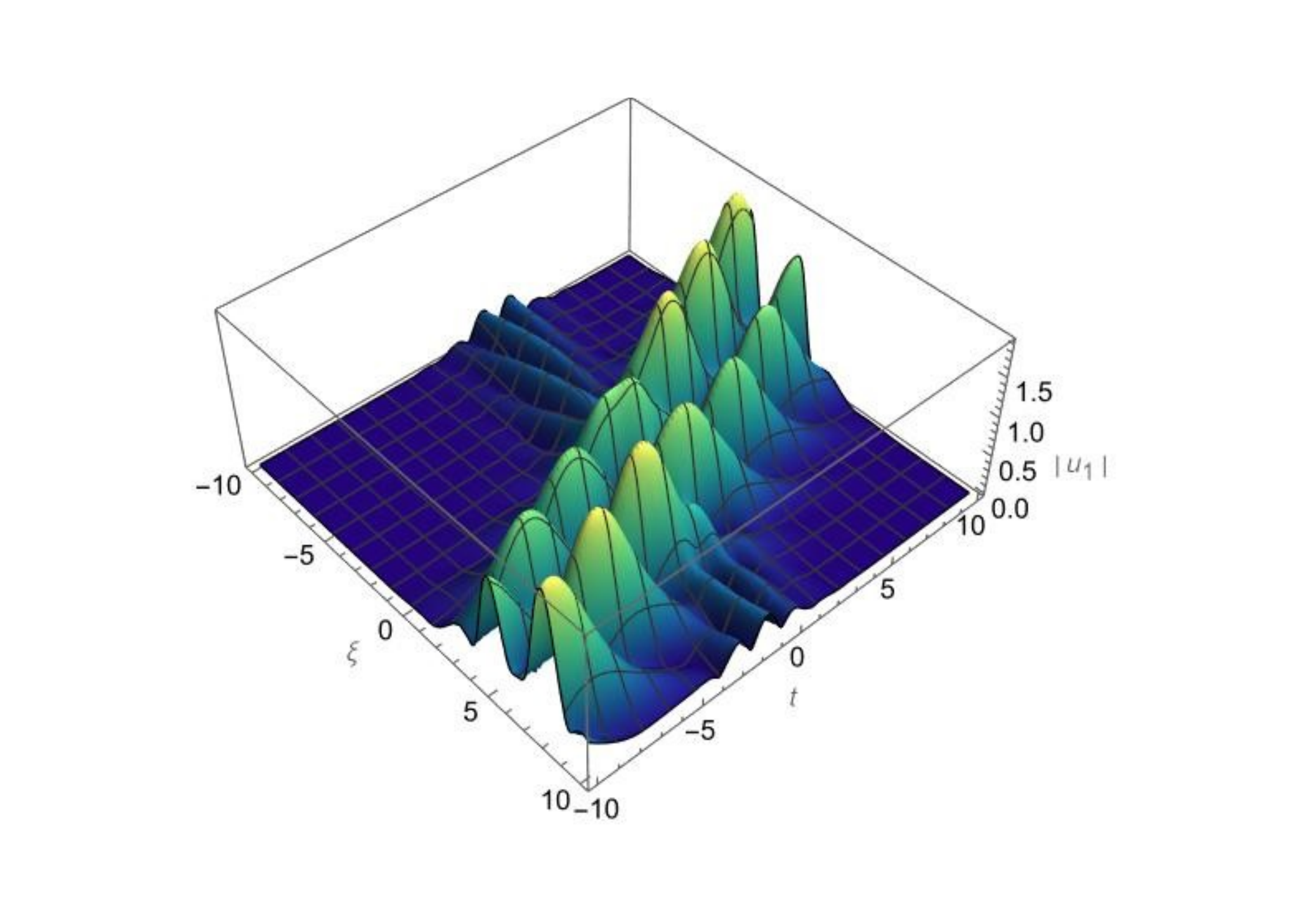}
\caption{} \label{fig:6f}
\end{subfigure}
\caption{Soliton interactions. All the panels show the magnitude of the first component of the vector solution. Panel (a): interaction between a fundamental soliton and a self-symmetric soliton. Panel (b): interaction between a fundamental breather and a self-symmetric soliton. Panel (c): interaction between a self-symmetric soliton and a generic composite breather. Panel (d): interaction between two generic composite breathers. Here, the soliton parameters are: panel (a): $\pmb{\gamma}_1 = \left( 1\,, 1/2\right)^{T}$, $\pmb{\delta}_1 = \left( 1\,, 0\right)^{T}$, $\pmb{\delta}_2 = \left( 1\,,1\right)^{T}$, $\pmb{\tau}_2 = \left( 1\,, 1\right)^{T}$, $\pmb{\gamma}_2 = \left( 10,2\right)^{T}$, $\pmb{\omega}_2 = \left( 2,-10\right)^{T}$ and $k_1=1+i/2$, $k_2=i/4$. Panel (b): $\pmb{\gamma}_1 = \left( 1\,,1/2 \right)^{T}$, $\pmb{\delta}_1 = \left( 1\,,1 \right)^{T}$, $\pmb{\delta}_2 = \left( 1\,,1\right)^{T}$, $\pmb{\tau}_2 = \left( 1\,, 1\right)^{T}$, $\pmb{\gamma}_2 = \left(10,2\right)^{T}$, $\pmb{\omega}_2 = \left( 2,-10\right)^{T}$ and $k_1=1+i/2$, $k_2=i/4$. Panel (c): $\pmb{\delta}_1 = \left( 1\,,1\right)^{T}$, $\pmb{\tau}_1 = \left( 1\,, 1\right)^{T}$, $\pmb{\gamma}_1 = \left( 1,2\right)^{T}$, $\pmb{\omega}_1 = \left( 2, -1\right)^{T}$, $\pmb{\delta}_2 = \left( 1\,,2\right)^{T}$, $\pmb{\tau}_2 = \left( 2\,, 1\right)^{T}$, $\pmb{\gamma}_2 = \left( 1,2\right)^{T}$, $\pmb{\omega}_2 = \left( 3, e^{i}\right)^{T}$ and $k_1=i/2$, $k_2=1+i/2$. Panel (d): $\pmb{\delta}_1 = \left( 1\,,2\right)^{T}$, $\pmb{\tau}_1 = \left( 2\,, 3\right)^{T}$, $\pmb{\gamma}_1 = \left( 10,2\right)^{T}$, $\pmb{\omega}_1 = \left( 2\,, -10\right)^{T}$, $\pmb{\delta}_2 = \left( 1\,,3\right)^{T}$, $\pmb{\tau}_2 = \left( 2\,, 3\right)^{T}$, $\pmb{\gamma}_2 = \left( 1,2\right)^{T}$, $\pmb{\omega}_2 = \left( 3\,, e^{i}\right)^{T}$ and $k_1=1/4+i/5$, $k_2=1+i/2$.}\label{fig:6}
\end{figure}

\subsection{Discussion of soliton interactions using the Yang-Baxter maps}

Thanks to the maps systematically derived above, the phenomenon observed in Fig.~\ref{fig:3} can now be explained. Looking at \eref{e:196n}, we note that the map on the polarizations decouples: the ${\pmb \gamma}$-type polarizations act on each other, and similarly for the ${\pmb \delta}$-type polarizations. From this point of view, the fundamental soliton sector is stable and the map reduces to a trivial map where the polarizations are simply rescaled upon interaction. This is why two fundamental solitons remain fundamental solitons when interacting with each other. However, in general a fundamental soliton interacting with a fundamental breather will become a fundamental breather after the interaction. This is a very interesting phenomenon which, to our knowledge, had never been identified before in studies on multicomponent soliton collisions. We discuss this in more detail below.

\paragraph{Interaction of two fundamental breathers.} A fundamental breather is obtained when both components in ${\pmb \delta}$ are non-zero. To study the interaction between two fundamental breathers, without loss of generality we can consider ${\pmb \delta}^{+}_1 \equiv {\pmb \delta}_1 = (1\,, \kappa^*_1)^{T}$, and ${\pmb \delta}^{-}_2 \equiv {\pmb \delta}_2 = (1\,, \kappa^*_2)^{T}$, $\kappa_j \in \mathbb{C}$, for $j=1,2$. The vectors ${\pmb \gamma}^{-}_1 \equiv {\pmb \gamma}_1$ and ${\pmb \gamma}^{+}_2 \equiv {\pmb \gamma}_2$ are chosen to be arbitrary, and after inserting these into \eref{e:196nc}-\eref{e:196nd} we obtain
\begin{subequations}\label{e:221new}
\begin{gather}
{\pmb \delta}^{-}_1 = \left(d_1\,,e_1\right)^{T},\quad {\pmb \delta}^{+}_2 =  \left(d_2\,,e_2\right)^{T},\\
d_1 = \frac{k_2}{k_2^*} \frac{k_1+k_2^*}{k_1+k_2} \left( 1 + \frac{k_1^2}{k_2^2} \frac{\left( (k_2^*)^2-k_2^2\right) \left( 1 + \kappa_1^* \kappa_2 \right)}{\left( k_1^2 - (k_2^*)^2 \right) \left(1 + |\kappa_2|^2\right)}\right),\\
e_1 = \frac{k_2}{k_2^*} \frac{k_1+k_2^*}{k_1+k_2} \left( \kappa_1^* + \frac{k_1^2}{k_2^2}
 \frac{\left( (k_2^*)^2 - k_2^2 \right) \left( \kappa_2^* + \kappa_1^* |\kappa_2|^2 \right)}{\left( k_1^2 - (k_2^*)^2 \right)\left(1+|\kappa_2|^2 \right)}\right),\\
d_2 = \frac{k_1}{k_1^*} \frac{k_2+k_1^*}{k_2+k_1} \left( 1 + \frac{k_2^2}{k_1^2} \frac{\left( (k_1^*)^2 - k_1^2 \right) \left( 1 + \kappa_1 \kappa_2^*\right)}{\left( k_2^2 - (k_1^*)^2 \right) \left( 1 + |\kappa_1|^2 \right)} \right),\\
e_2 =  \frac{k_1}{k_1^*} \frac{k_2+k_1^*}{k_2+k_1} \left( \kappa_2^* + \frac{k_2^2}{k_1^2} \frac{\left( (k_1^*)^2 - k_1^2 \right) \left( \kappa_1^* + \kappa_2^* | \kappa_1|^2 \right)}{\left( k_2^2 - (k_1^*)^2 \right) \left( 1 + |\kappa_1|^2 \right)} \right).
\end{gather}
\end{subequations}
To determine the total change in the first column of the norming constants of the solitons, we need to multiply relations \eref{e:196na} and \eref{e:196nb} by the scalars $d_1$ and $d_2$, respectively (cf. \eref{e:fundbreathera}). Therefore, we define
\begin{subequations}\label{e:221n}
\begin{gather}
{\bf s}^{+}_1 = d_1 \frac{k_2}{k_2^*}\frac{k_1+k_2^*}{k_1+k_2} \left[ I_2 + \frac{k_1^2}{k_2^2}\frac{(k_2^*)^2-k_2^2}{k_1^2 - (k_2^*)^2} \frac{{\pmb \gamma}_2 ({\pmb \gamma}_2)^\dagger}{({\pmb \gamma}_2)^\dagger {\pmb \gamma}_2} \right]{\pmb \gamma}_1, \label{e:221na}\\
{\bf s}^{-}_2 = d_2 \frac{k_1}{k_1^*}\frac{k_2 + k_1^*}{k_2 + k_1} \left[ I_2 + \frac{k_2^2}{k_1^2} \frac{(k_1^*)^2 - k_1^2}{k_2^2 - (k_1^*)^2} \frac{{\pmb \gamma}_1 ({\pmb \gamma}_1)^\dagger}{({\pmb \gamma}_1)^\dagger{\pmb \gamma}_1} \right] {\pmb \gamma}_2,
\end{gather}
\end{subequations}
and introducing the following notations for the normalization constants
\begin{gather}
\chi_1^2 \equiv \frac{|| \bf{s}^{+}_1||^2}{|| \pmb {\gamma}_1||^2}, \quad \chi_2^2 \equiv \frac{|| \bf {s}^{-}_2||^2}{|| \pmb {\gamma}_2||^2},
\end{gather}
we find
\begin{subequations}\label{e:224new}
\begin{gather}
\chi_1^2 = |d_1|^2 \chi^2, \quad \chi_2^2 = |d_2|^2 \chi^2, \\[2pt] \chi^2 = \left| \frac{k_1+k_2^*}{k_1+k_2}\right|^2 \left\{ 1 + \frac{\left( k_1^2 - (k_1^*)^2 \right) \left( k_2^2 - (k_2^*)^2\right)}{|k_1-k_2^*|^2 |k_1+k_2^*|^2} \left| \bf {{p}}_1^{-} \cdot (\bf {{p}}_2^{+})^* \right|^2\right\},
\end{gather}
\end{subequations}
where $\bf{{p}}_1^{-}=\pmb{\gamma}_1^*/||\pmb{\gamma}_1||$ and $\bf{{p}}_2^{+}=\pmb{\gamma}_2^*/||\pmb{\gamma}_2||$ are unit-norm polarization vectors. Furthermore, the polarization vectors after the interaction are given by:
\begin{subequations}\label{e:219nw}
\begin{gather}
{\bf {p}}^{+}_1 = \frac{d_1^*}{\chi_1} \frac{k_2^*}{k_2}\frac{k_1^*+k_2}{k_1^*+k_2^*} \left\{ {\bf {{p}}_1^{-}} + \frac{(k_1^*)^2}{(k_2^*)^2}\frac{\left( k_2^2-(k_2^*)^2\right)}{\left((k_1^*)^2 - k_2^2\right)} \left( ({\bf {{p}}_2^{+}})^* \cdot {\bf {{p}}_1^{-}} \right) {\bf {{p}}_2^{+}} \right\},\\
{\bf {p}}^{-}_2 = \frac{d_2^*}{\chi_2} \frac{k_1^*}{k_1}\frac{k_2^* + k_1}{k_2^* + k_1^*} \left\{  {\bf {{p}}_2^{+}} + \frac{(k_2^*)^2}{(k_1^*)^2} \frac{\left( k_1^2 - (k_1^*)^2\right)}{\left( (k_2^*)^2 - k_1^2\right)} \left( ({\bf {{p}}_1^{-}})^* \cdot {\bf {{p}}_2^{+}} \right) {\bf {{p}}_1^{-}} \right\},
\end{gather}
\end{subequations}
in terms of parameters characterizing the initial fundamental breathers, where ${\bf {p}}^{+}_1=({\bf s}^{+}_1)^*/||{\bf s}^{+}_1||$ and ${\bf {p}}^{-}_2=({\bf s}^{-}_2)^*/||{\bf s}^{-}_2||$.

Generically speaking, the quantities $e_1$ and $e_2$ in Eqs. \eref{e:221new} are nonzero, which implies that the two-soliton solution is a superposition of two fundamental breathers which retain their nature throughout their interaction. However, there exists a specific value of the multiplicative constant $\kappa_1$, which involves the discrete eigenvalues and the second multiplicative constant $\kappa_2$, and which makes the quantity $e_1$ identically zero
\begin{gather}\label{e:241}
\kappa_1 = \frac{(k_1^*)^2 \left( (k_2^*)^2 - k_2^2 \right) \kappa_2}{(k_2^*)^2 \left( (k_1^*)^2 - k_2^2 \right) + k_2^2 \left( (k_1^*)^2 - (k_2^*)^2 \right) |\kappa_2|^2}.
\end{gather}
In this case, soliton-2 retains its nature, while soliton-1 becomes a fundamental soliton after the interaction. Equivalently, there exists a specific value for the multiplicative constant $\kappa_2$ which makes the quantity $e_2$ identically zero, its expression is given by simply interchanging the indices 1 and 2 in relation \eref{e:241}, and in this case soliton-1 retains its nature, and soliton-2 becomes a breather. Fig.~\ref{fig:8} shows an example of interacting fundamental breathers when the multiplicative constant $\kappa_2$ is arbitrarily assigned, but $\kappa_1$ has been computed via relation \eref{e:241}, and an example of interacting fundamental breathers where both multiplicative constants are arbitrarily assigned.

\begin{figure}[h!]
\centering
\begin{subfigure}{0.48\textwidth}
\centering
\includegraphics[height=1.5in,width=1.5in]{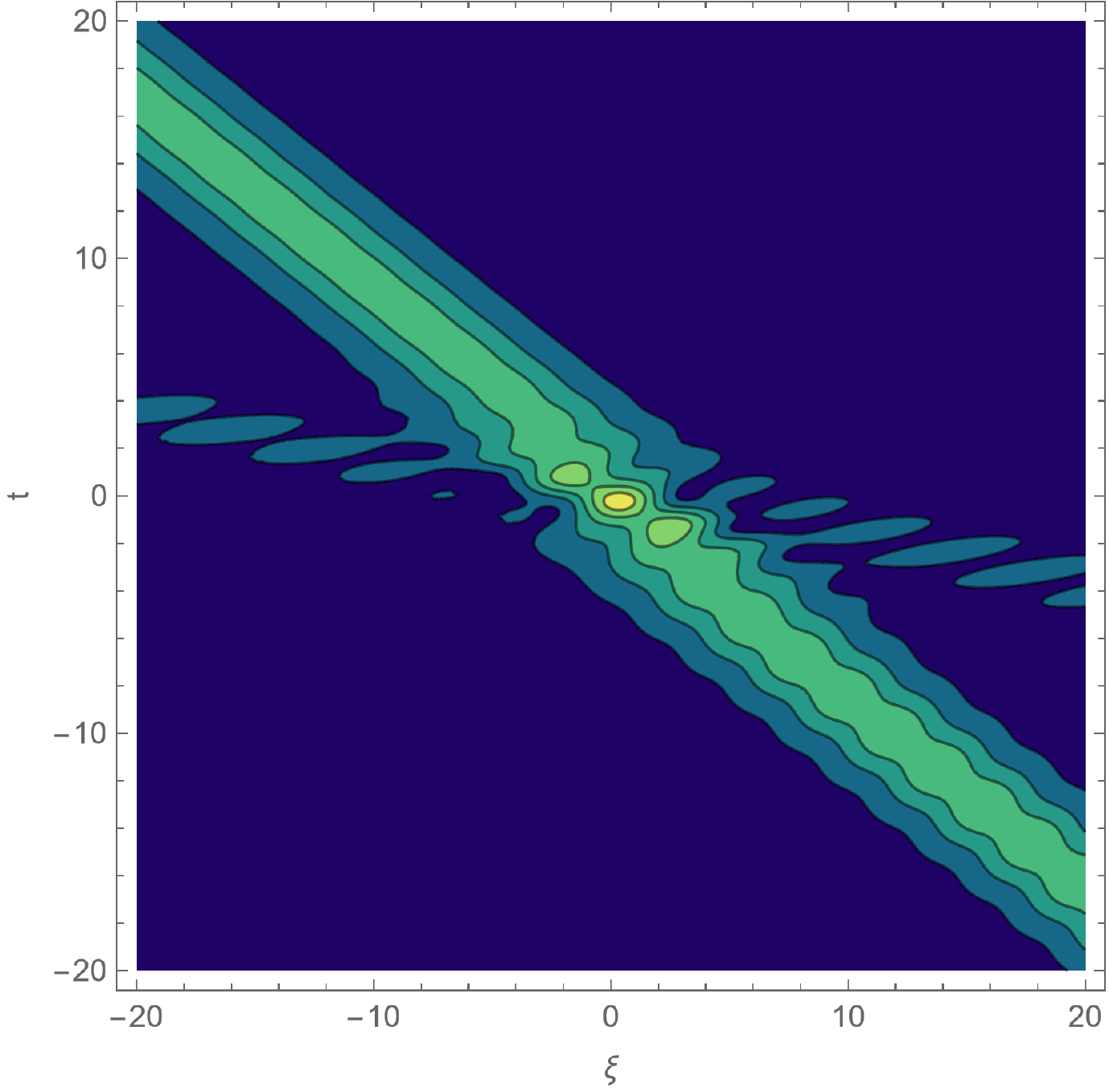}
\caption{} \label{fig:8a}
\end{subfigure}
\hspace*{\fill}
\hfill
\begin{subfigure}{0.48\textwidth}
\centering
\includegraphics[height=1.5in,width=1.5in]{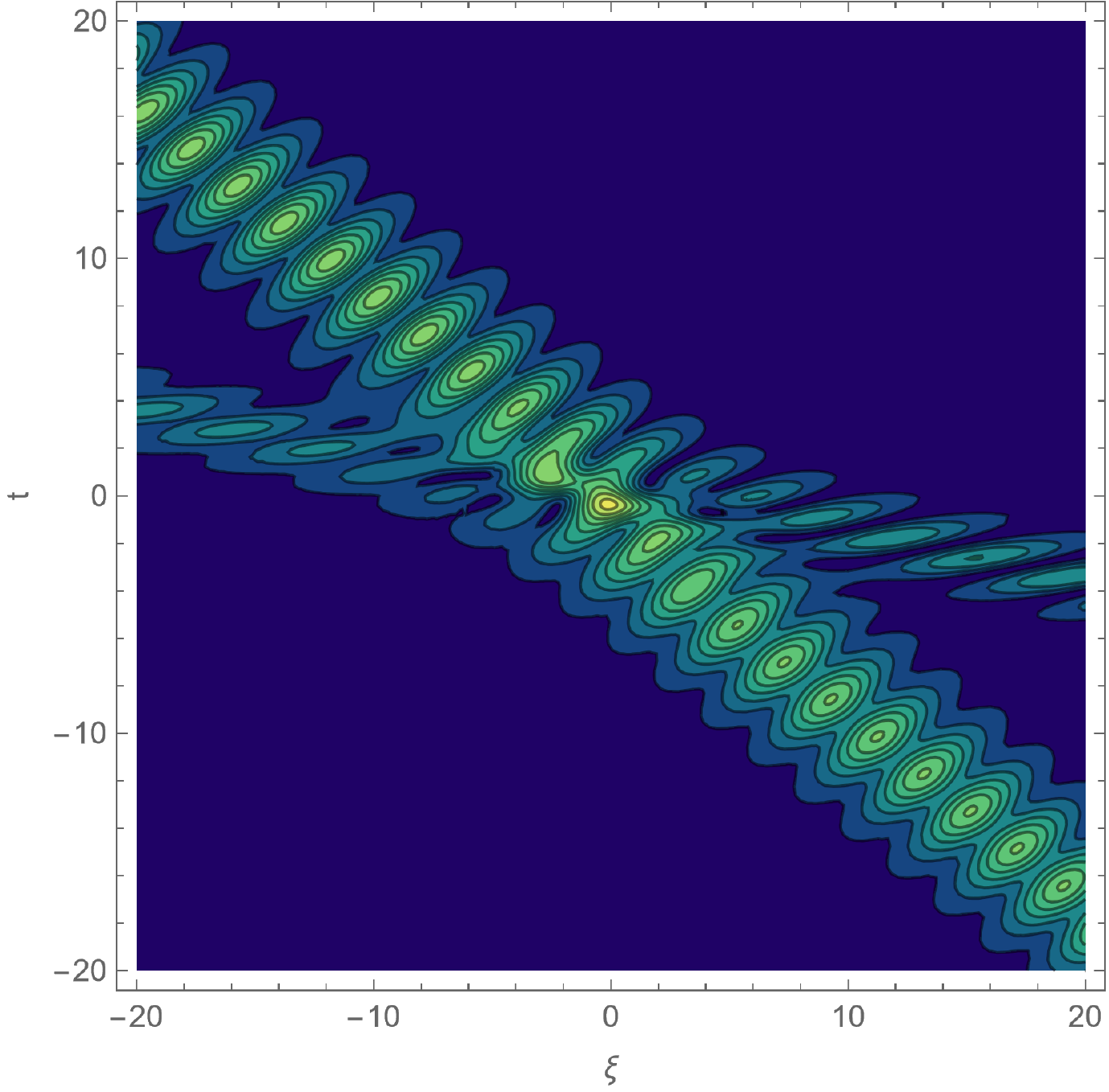}
\caption{} \label{fig:8b}
\end{subfigure}
\hfill
\caption{Fundamental breathers interacting with fundamental breathers. Panel (a): the multiplicative constant $\kappa_2$ is arbitrarily assigned, while $\kappa_1$ is computed via relation \eref{e:241}. Clearly, the initial fundamental breather with multiplicative constant $\kappa_2$ retains its nature, but the initial fundamental breather with multiplicative constant $\kappa_1$ becomes a fundamental soliton after the interaction. Panel (b): both multiplicative constants $\kappa_1$ and $\kappa_2$ have been arbitrarily assigned, and one can observe that both initial fundamental breathers retain their nature upon interacting with each other. The panels show the magnitude of the first component of the vector solution. Here, the soliton parameters are: panel (a): $\pmb{\gamma}_1 = \left( 1\,, 1/2\right)^{T}$, $\pmb{\delta}_1 = \left( 0.18+0.1 i\,, 1\right)^{T}$, $\pmb{\gamma}_2 = \left(1\,,e^{i}\right)^{T}$, $\pmb{\delta}_2 = \left(1\,,1\right)^{T}$, and $k_1=1/2+i/4$, $k_2=1+i/2$. Panel (b): $\pmb{\gamma}_1 = \left(1\,,1/2 \right)^{T}$, $\pmb{\delta}_1 = \left(1\,,1\right)^{T}$, $\pmb{\gamma}_2 = \left( 1\,,e^{i}\right)^{T}$, $\pmb{\delta}_2 = \left( 1\,,1\right)^{T}$, $k_1=1/2+i/4$, $k_2=1+i/2$.}\label{fig:8}
\end{figure}

\paragraph{Special cases.} Next we show that interactions between two fundamental solitons, as well as interactions between a fundamental soliton and a fundamental breather, are special cases of interactions between two fundamental breathers. For two fundamental solitons, one has to set $\kappa_j=0$, for $j=1,2$, and therefore the initial vectors ${\pmb \delta}_j$ can be reduced to ${\pmb \delta}_1 = (1\,, 0)^{T}$, and ${\pmb \delta}_2 = (1\,, 0)^{T}$. It is easy to check from Eqs.~\eref{e:221new} that, if $\kappa_1=\kappa_2=0$, then $e_1=e_2 \equiv 0$, which implies that the two fundamental solitons retain their nature upon interacting with each other. Moreover, the scalars $d_1, d_2$ reduce to:
\begin{gather}
d_1 = \frac{k_2^*}{k_2} \frac{k_1-k_2}{k_1-k_2^*}, \quad d_2 = \frac{k_1^*}{k_1}\frac{k_1-k_2}{k_1^*-k_2},
\end{gather}
which gives $|d_1|= |d_2|$. Consequently, from Eqs.~\eref{e:224new} it follows that the two normalization constants coincide, namely, $\chi_1=\chi_2=\chi$ with
$\chi$ given by \eref{e:38}, and Eqs.~\eref{e:219nw} reduce to the expressions \eref{e:112n} we derived using Manakov's method in Sec.~5. Note that one could alternatively choose ${\pmb \delta}_1=(0\,,1)^{T}$ and ${\pmb \delta}_2=(0\,,1)^{T}$, or  ${\pmb \delta}_1=(1\,,0)^{T}$ and ${\pmb \delta}_2=(0\,,1)^{T}$ and ${\pmb \gamma}$ arbitrary, and we still obtain that the two fundamental solitons retain their nature upon interacting with each other.

Without loss of generality, let us now consider soliton-1 to be a fundamental soliton and soliton-2 to be a fundamental breather. We therefore set $\kappa_1=0$ and let $\kappa_2$ be arbitrary. Then, ${\pmb \delta}_1 =(1\,,0)^{T}$, ${\pmb \delta}_2 = (1\,, \kappa^*)^{T}$, $\kappa \in \mathbb{C}$, and the vectors ${\pmb \gamma}_1$ and ${\pmb \gamma}_2$ are arbitrary. Using Eqs.~\eref{e:221new}, it is easy to check that, if $\kappa_1=0$, then the scalars $e_1, e_2$ are both nonzero and reduce to:
\begin{gather}
e_1 = \frac{\kappa^*}{1 + |\kappa|^2} \frac{k_1^2}{|k_2|^2}\frac{(k_2^*)^2-k_2^2}{(k_1+k_2)(k_1-k_2^*)}, \quad e_2 = \kappa^* \frac{k_1}{k_1^*}\frac{k_1^* + k_2}{k_1 + k_2},
\end{gather}
which shows that although the solution before the interaction is a superposition of a fundamental soliton and a fundamental breather, after the interaction it becomes a superposition of two fundamental breathers, since necessarily $e_1\ne 0$ and $e_2\ne0$. Moreover, the polarization vectors after the interaction are given by Eqs.~\eref{e:219nw}, and the two normalization constants are still given by Eqs.~\eref{e:224new}, but now with $d_1, d_2$:
\begin{gather}\label{e:scald1d2}
d_1 = \frac{k_2}{k_2^*}\frac{k_1 + k_2^*}{k_1 + k_2} \left[ 1 + \frac{k_1^2}{k_2^2}\frac{(k_2^*)^2 - k_2^2}{\left(k_1^2 - (k_2^*)^2\right)\left(1 + |\kappa|^2\right)}\right], \quad
d_2 = \frac{k_1^*}{k_1}\frac{k_1-k_2}{k_1^* - k_2}.
\end{gather}

\section*{Acknowledgements}

BP and AG gratefully acknowledge partial support for this work from the NSF, under grant DMS-2106488.
The authors would also like to thank the Isaac Newton Institute for Mathematical Sciences (Cambridge, UK) for support and hospitality during the programme ``Dispersive Hydrodynamics: mathematics, simulation and experiments, with applications in nonlinear waves'', when work on this paper was undertaken (EPSRC Grant Number EP/R014604/1).

\appendix

\numberwithin{equation}{section}

\section{On the ranks of the transmission coefficients $a$ and $\bar{a}$} \label{appendix:ranks}

In this section, we show that the ranks of the matrices $a$ and $\bar{a}$ need necessarily to be equal at each of the eigenvalues of a given quartet. To do that, we recall the $4 \times 4$ matrix solutions
\begin{gather}\label{e:181}
P(k) = \left( \Phi_{-,1}(k), \Phi_{+,2}(k) \right), \quad \bar{P}(k) = \left( \Phi_{+,1}(k), \Phi_{-,2}(k) \right),
\end{gather}
where the $(x,t)$ dependence has been omitted for brevity, and the bilinear combinations
\begin{gather}
\label{e:AAb}
A(k) = \bar{P}^\dagger(x,t,k^*)P(x,t,k), \quad \bar{A}(k) = P^\dagger(x,t,k^*) \bar{P}(x,t,k),
\end{gather}
introduced in \cite{ABBA}. Using the analyticity properties of the Jost eigenfunctions, one can easily show that the solution $P$ and the bilinear combination $A$ are analytic in $\mathbb{C}^{+}$, while the matrix solution $\bar{P}$ and the bilinear combination $\bar{A}$ are analytic in $\mathbb{C}^{-}$. Next, we discuss in details the correspondence between the ranks of the matrix $P$ and the linear combination $A$ (resp., $\bar{P}$ and $\bar{A}$), which will lead us to the goal of this section.

Let us assume $k_n$ to be a zero of $\det a(k)$ in $\mathbb{C}^{+}$. Then, due to the symmetries of the ccSPE summarized in Sec.~2, $k_n^*$ is a zero of $\det \bar{a}(k)$ in $\mathbb{C}^{-}$. Therefore, it is meaningful to discuss the ranks of the matrices $P(k_n)$ and $\bar{P}(k_n^*)$. It is clear that $\mathrm{rank}P(k_n)\le 3$ and $\mathrm{rank}\bar{P}(k_n^*)\le 3$, since both matrices have zero determinant at the discrete eigenvalues. It is also clear that $\mathrm{rank}P(k_n)$ and $\mathrm{rank}\bar{P}(k_n^*)$ cannot be zero, because the $4 \times 2$ columns in $\Phi_{-,1}(k_n)$ and in $\Phi_{+,2}(k_n)$ are linearly independent, and so are the columns in $\Phi_{+,1}(k_n^*)$ and in $\Phi_{-,2}(k_n^*)$.

One can also prove that $\mathrm{rank}P(k_n)$ and $\mathrm{rank}\bar{P}(k_n^*)$ cannot equal $1$, by contradiction. Let us assume that $\mathrm{rank}P(k_n) =1$. Then the columns of $\Phi_{-,1}(k_n)$ must be linearly dependent, as well as the columns of $\Phi_{+,2}(k_n)$. Let us denote these columns as $\Phi_{-,1}(k_n)\big{|}_{j}$ and $\Phi_{+,2}(k_n)\big{|}_{j}$ respectively, for $j=1,2$.  The above implies that there exist constants $c_1, c_2 \in \mathbb{C}$ not both zero, as well as constants $c_3, c_4 \in \mathbb{C}$ not both zero, such that
\begin{subequations}\label{e:1}
\begin{gather}
c_1 \Phi_{-,1}\big{|}_{1}(x,t,k_n) + c_2 \Phi_{-,1}\big{|}_{2}(x,t,k_n) = 0_{2 \times 1},\\
c_3 \Phi_{+,2}\big{|}_{1}(x,t,k_n) + c_4 \Phi_{+,2}\big{|}_{2}(x,t,k_n) = 0_{2 \times 1},
\end{gather}
\end{subequations}
for all $x \in \mathbb{R}$. Recall that the Jost eigenfunctions and the modified eigenfunctions are related according to \eref{e:defM},
and therefore the last relation becomes
\begin{subequations}\label{e:3}
\begin{gather}
c_1 M_{-,1}\big{|}_{1}(x,t,k_n) + c_2 M_{-,1}\big{|}_{2}(x,t,k_n) = 0_{2 \times 1},\\
c_3 M_{+,2}\big{|}_{1}(x,t,k_n) + c_4 M_{+,2}\big{|}_{2}(x,t,k_n) = 0_{2 \times 1},
\end{gather}
\end{subequations}
for all $x \in \mathbb{R}$. Moreover, recall that
\begin{gather*}
M_{-,1}(x,t,k) \sim \begin{pmatrix}
I_2 \\
0
\end{pmatrix}, \quad x \to - \infty, \quad \quad M_{+,2}(x,t,k) \sim \begin{pmatrix}
0 \\
I_2
\end{pmatrix}, \quad x \to + \infty,
\end{gather*}
and both $M_{-,1}$ and $M_{+,2}$ are analytic in the upper half plane of $\mathbb{C}$, which implies that as $x \to \pm \infty$, the columns of $M_{-,1}$ and the columns of $M_{+,2}$ are linearly independent. Therefore, relation \eref{e:3} does not hold, and \eref{e:1} does not hold as a consequence, which implies that the columns of $\Phi_{-,1}$ and $\Phi_{+,2}$ are linearly independent. This contradicts our initial assumption that $\mathrm{rank}P(k_n)=1$. One can use a similar argument to prove that $\mathrm{rank}\bar{P}(k_n^*)$ cannot equal 1. The above analysis allows us to conclude that the ranks of the matrices $P(k_n)$ and $\bar{P}(k_n^*)$ can only equal $2$ or $3$.

For the matrices $A(k)$ and $\bar{A}(k)$ in \eref{e:AAb}, as shown in \cite{ABBA} one has
\begin{subequations}\label{e:185}
\begin{gather}
A(k) = \bar{P}^\dagger(x,t,k^*)P(x,t,k) = \begin{pmatrix}
a(k) & 0_2 \\
0_2 & \bar{a}^\dagger(k^*)
\end{pmatrix},\\
\bar{A}(k) = P^\dagger(x,t,k^*)\bar{P}(x,t,k) = \begin{pmatrix}
a^\dagger(k^*) & 0_2 \\
0_2 & \bar{a}(k)
\end{pmatrix},
\end{gather}
\end{subequations}
and, consequently,
\begin{subequations}\label{e:1new}
\begin{gather}
\mathrm{rank} A(k_n) = \mathrm{rank} a(k_n) + \mathrm{rank} \bar{a}^\dagger(k_n^*), \label{e:1newa}\\
 \mathrm{rank} \bar{A}(k_n) = \mathrm{rank} a^\dagger (k_n^*) + \mathrm{rank} \bar{a}(k_n).\label{e:1newb}
\end{gather}
\end{subequations}
\begin{prop}
\label{e:propA}
Let $k_n$ be a zero of $\det a(k)$ in $\mathbb{C}^{+}$, $k_n^*$ is a zero of $\det \bar{a}(k)$ in $\mathbb{C}^{-}$. Then one has $\mathrm{rank}P(x,t,k_n) \equiv  \mathrm{rank}\bar{P}(x,t,k_n^*) = 2$ if and only if $a(k_n)\equiv \bar{a}(k_n^*) = 0_2$ (i.e., $\mathrm{rank}A(k_n)=0$). The same holds for the remaining points $-k_n^*$ and $-k_n$ in the discrete spectrum.
\end{prop}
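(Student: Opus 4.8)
The plan is to deduce the equivalence from just two ingredients already in place above: the block-diagonal form of the bilinear combination $A(k)=\bar{P}^\dagger(x,t,k^*)P(x,t,k)$ recorded in \eref{e:185} (with its rank consequence \eref{e:1newa}), and the fact — established in the paragraphs preceding the Proposition — that $\mathrm{rank}\,P(k_n),\,\mathrm{rank}\,\bar{P}(k_n^*)\in\{2,3\}$ and that each of the four $4\times 2$ blocks $\Phi_{-,1}(k_n),\,\Phi_{+,2}(k_n),\,\Phi_{+,1}(k_n^*),\,\Phi_{-,2}(k_n^*)$ has rank $2$. I will work at an arbitrary fixed $(x,t)$; since $A(k_n),\,a(k_n),\,\bar{a}(k_n^*)$ do not depend on $(x,t)$, by \eref{e:185} and \eref{e:1newa} the statements ``$A(k_n)=0_{4\times 4}$'', ``$\mathrm{rank}\,A(k_n)=0$'' and ``$a(k_n)=\bar{a}(k_n^*)=0_2$'' are interchangeable, and any one of them obtained at one point holds for all $(x,t)$.

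For the implication ``$a(k_n)=\bar{a}(k_n^*)=0_2\Rightarrow\mathrm{rank}\,P(k_n)=\mathrm{rank}\,\bar{P}(k_n^*)=2$'', I would observe that by \eref{e:185} the hypothesis is exactly $A(k_n)=\bar{P}^\dagger(k_n^*)P(k_n)=0_{4\times 4}$, i.e. every column of $P(k_n)$ is orthogonal to every column of $\bar{P}(k_n^*)$. Hence $\mathrm{rank}\,P(k_n)+\mathrm{rank}\,\bar{P}(k_n^*)\le 4$, and since each of the two ranks is at least $2$, both equal $2$.

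For the reverse implication I would use a dimension count. Assume $\mathrm{rank}\,P(k_n)=2$ (this will turn out to force $\mathrm{rank}\,\bar{P}(k_n^*)=2$ as well, so that the ``$\equiv$'' in the statement is unambiguous). Since $\mathrm{col}\,P(k_n)=\mathrm{col}\,\Phi_{-,1}(k_n)+\mathrm{col}\,\Phi_{+,2}(k_n)$ is a sum of two $2$-dimensional subspaces that is itself $2$-dimensional, the two summands coincide: $\mathrm{col}\,\Phi_{-,1}(k_n)=\mathrm{col}\,\Phi_{+,2}(k_n)=\mathrm{col}\,P(k_n)$. Now read off the two off-diagonal blocks of \eref{e:185}, namely $\Phi_{+,1}^\dagger(k_n^*)\,\Phi_{+,2}(k_n)=0_2$ and $\Phi_{-,2}^\dagger(k_n^*)\,\Phi_{-,1}(k_n)=0_2$: using the identification just made, the first gives $\mathrm{col}\,P(k_n)\perp\mathrm{col}\,\Phi_{+,1}(k_n^*)$ and the second gives $\mathrm{col}\,P(k_n)\perp\mathrm{col}\,\Phi_{-,2}(k_n^*)$, hence $\mathrm{col}\,P(k_n)\perp\bigl(\mathrm{col}\,\Phi_{+,1}(k_n^*)+\mathrm{col}\,\Phi_{-,2}(k_n^*)\bigr)=\mathrm{col}\,\bar{P}(k_n^*)$. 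Therefore $A(k_n)=\bar{P}^\dagger(k_n^*)P(k_n)=0$, which by \eref{e:1newa} yields $a(k_n)=\bar{a}(k_n^*)=0_2$; moreover $\mathrm{rank}\,\bar{P}(k_n^*)\le 4-\mathrm{rank}\,P(k_n)=2$, so also $\mathrm{rank}\,\bar{P}(k_n^*)=2$. The statement for the remaining pair $-k_n^*$ (a zero of $\det a$ in $\mathbb{C}^+$) and $-k_n$ then follows by running the identical argument at those points, since the symmetry $k\mapsto -k^*$ permutes the quartet $\{k_n,-k_n^*,k_n^*,-k_n\}$ and preserves all the relations used.

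I do not anticipate a serious obstacle. The nontrivial content has already been done in the preparatory material: the identity \eref{e:185}, which in particular encodes the orthogonalities $\mathrm{col}\,\Phi_{+,2}(k)\perp\mathrm{col}\,\Phi_{+,1}(k^*)$ and $\mathrm{col}\,\Phi_{-,1}(k)\perp\mathrm{col}\,\Phi_{-,2}(k^*)$ exploited above, and the exclusion of ranks $0$ and $1$ for $P(k_n)$ and $\bar{P}(k_n^*)$. The only genuinely new step is the elementary linear-algebra observation that the two half-plane column spaces assembled into a rank-$2$ block matrix must collapse onto a single $2$-plane; after that, the vanishing of the off-diagonal blocks of $A$ does the work. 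The mild care needed is purely bookkeeping: applying the column-space identification to the correct blocks ($P$ evaluated at $k_n$ versus $\bar{P}$ at $k_n^*$), and recognizing that the $(x,t)$-dependence is immaterial because the conclusions concern only the $(x,t)$-independent matrices $A(k_n)$, $a(k_n)$, $\bar{a}(k_n^*)$.
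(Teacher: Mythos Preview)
Your proof is correct, and your reverse implication (the dimension/orthogonality count) is essentially the paper's Sylvester-inequality argument in different language. For the forward implication, however, you take a genuinely different and cleaner route. The paper picks two independent kernel vectors $\mathbf e_j=(\eta_j,-\xi_j)^T$ of $P(k_n)$, feeds them into $A(k_n)$ to get $a(k_n)\eta_j=0$ and $\bar a^\dagger(k_n^*)\xi_j=0$, and then runs a four-case analysis on the linear independence of $\{\eta_1,\eta_2\}$ and $\{\xi_1,\xi_2\}$ (using the full-rankness of the individual $4\times 2$ blocks to rule out the mixed cases). Your argument bypasses this casework entirely: the rank-$2$ hypothesis forces the two $2$-dimensional column spaces $\mathrm{col}\,\Phi_{-,1}(k_n)$ and $\mathrm{col}\,\Phi_{+,2}(k_n)$ to coincide, after which the always-valid off-diagonal zeros of \eref{e:185} immediately give $\mathrm{col}\,P(k_n)\perp\mathrm{col}\,\bar P(k_n^*)$ and hence $A(k_n)=0$. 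What your approach buys is brevity and a clearer conceptual picture (the result is really about the collapse of column spaces, not about fiddling with kernel coordinates); what the paper's approach makes more explicit is the mechanism at the level of individual scattering blocks $a(k_n)\eta_j=0$, which may be useful elsewhere. You also observe, more carefully than the paper does, that assuming only one of the two ranks equals $2$ already forces the other, so the ``$\equiv$'' in the statement is unambiguous.
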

\begin{proof}
Let us prove the first implication, i.e.,
\begin{gather*}
\mathrm{rank}P(x,t,k_n) \equiv  \mathrm{rank}\bar{P}(x,t,k_n^*) = 2 \Rightarrow a(k_n)\equiv \bar{a}(k_n^*) = 0_2.
\end{gather*}
Since $\mathrm{rank}P(x,t,k_n) = 2$, then there exist two linearly independent vectors $\pmb{e}_j = \left( \eta_j, -\xi_j\right)^T$ in $\mathbb{C}^4 / \{0\}$, such that $P(k_n)\pmb{e}_j = 0_{4 \times 1}$, for $j=1,2$. From the definition of $A(k_n)$, it follows that $A(k_n) \pmb{e}_j = \bar{P}^\dagger(k_n^*)P(k_n) \pmb{e}_j = 0_{4 \times 1}$,
which in turn gives
 \begin{gather}\label{eq:7}
a(k_n)\eta_j = 0_{2 \times 1}, \quad \bar{a}^\dagger (k_n^*) \xi_j = 0_{2 \times 1}.
\end{gather}
If the vectors $\eta_1$ and $\eta_2$ are both linearly independent, then the first of Eqs.~\eref{eq:7} implies that the dimension of the kernel of $a(k_n)$ is $2$, and therefore $a(k_n)=0_2$. If the vectors $\xi_1$ and $\xi_2$ are linearly independent, then the second of Eqs.~\eref{eq:7} yields that the dimension of the kernel of $\bar{a}(k_n^*)$ is $2$, and therefore $\bar{a}(k_n^*)=0_2$, which proves the desired result.

Let us now assume that $\eta_1$ and $\eta_2$ are linearly independent and $\xi_1$ and $\xi_2$ are linearly dependent. Then, there exists a complex valued constant $\beta$ such that $\xi_2 = \beta \xi_1$. From the definition of $P(k_n)$, we get
\begin{gather*}
\Phi_{-,1}(x,t,k_n)\eta_1 = \Phi_{+,2}(x,t,k_n)\xi_1,\quad \Phi_{-,1}(x,t,k_n)\eta_2 = \beta \Phi_{+,2}(x,t,k_n)\xi_1,
\end{gather*}
which implies
\begin{gather*}
\Phi_{-,1}(x,t,k_n)\eta_2 = \Phi_{-,1}(x,t,k_n)\left( \beta \eta_1 \right) \Leftrightarrow \Phi_{-,1}(x,t,k_n) \left( \eta_2 - \beta \eta_1 \right) = 0_{4\times 1},
\end{gather*}
and $\eta_2 = \beta \eta_1$, since the columns of $\Phi_{-,1}(x,t,k_n)$ are linearly independent. But the last one implies that the vectors $\eta_1$ and $\eta_2$ are linearly dependent, which contradicts our initial assumption. If the vectors $\eta_1$ and $\eta_2$ are linearly dependent, i.e. if there exists some $\alpha \in \mathbb{C}$ such that $\eta_2 = \alpha \eta_1$, then similarly as before, we have two possibilities for the vectors $\xi_1$ and $\xi_2$. If $\xi_1$ and $\xi_2$ are linearly independent, as before that will lead to a contradiction. Now, if both $\{ \eta_1, \eta_2\}$ and $\{ \xi_1, \xi_2\}$ are linearly dependent, then we can find two constants $\alpha, \beta \in \mathbb{C}$ such that $\eta_2 = \alpha \eta_1$ and $\xi_2 = \beta \xi_1$. In this case, we get
\begin{gather*}
\Phi_{-,1}(x,t,k_n)\eta_1 = \Phi_{+,2}(x,t,k_n)\xi_1,\quad \alpha \Phi_{-,1}(x,t,k_n)\eta_1 = \beta \Phi_{+,2}(x,t,k_n)\xi_1,
\end{gather*}
which gives
\begin{gather*}
\Phi_{+,2}(x,t,k_n)\left( \alpha \xi_1 \right) = \Phi_{+,2}(x,t,k_n)\left( \beta \xi_1\right) \Leftrightarrow \Phi_{+,2}(x,t,k_n)(\alpha - \beta) \xi_1 = 0_{4 \times 1},
\end{gather*}
which in turn implies that $\alpha = \beta$, since the columns of $\Phi_{+,2}(x,t,k_n)$ are linearly independent. But if $\alpha = \beta$, then the vectors $\pmb{e}_1 = \left( \eta_1, -\xi_1 \right)^{T}$ and $\pmb{e}_2 = \left( \alpha \eta_1, - \beta \xi_1 \right)^{T}$ are linearly dependent, which contradicts our initial assumption.

Let us now prove the other implication, i.e., let us assume that $a(k_n) \equiv \bar{a}^\dagger (k_n^*) = 0_2$ and prove that $\mathrm{rank}P(k_n) \equiv \mathrm{rank} \bar{P}(k_n^*) = 2$. If $a(k_n) \equiv \bar{a}^\dagger (k_n^*) = 0_2$, then both matrices have rank equal to zero, and therefore from \eref{e:1new} it follows $\mathrm{rank}A(k_n) = 0$. Using Sylvester's inequality, we obtain
\begin{gather}
0 = \mathrm{rank}A(k_n) \geq \mathrm{rank} P(x,t,k_n) + \mathrm{rank}\bar{P}^\dagger (x,t,k_n^*)- 4.
\end{gather}
Since $\mathrm{rank} P(x,t,k_n)$ and $\mathrm{rank}\bar{P}^\dagger (x,t,k_n^*)$ can be either $2$ or $3$, then for the last inequality to hold, necessarily
\begin{gather*}
\mathrm{rank} P(x,t,k_n) \equiv \mathrm{rank}\bar{P}^\dagger (x,t,k_n^*) = 2.
\end{gather*}
\qed
\end{proof}
Prop.~\ref{e:propA} establishes a one-to-one correspondence between $\mathrm{rank}P(k_n)$ and $\mathrm{rank}A(k_n)$, when $\mathrm{rank}A(k_n)=0$. Here, we discuss the correspondence between the ranks of $P(k_n)$ and $A(k_n)$, when $\mathrm{rank}A(k_n)$ is either $1$ or $2$, and as we will see, together with the previous results, this shows that the matrices $a(k_n)$ and $\bar{a}(k_n^*)$ need necessarily have equal ranks. From Sylvester's inequality, we get
\begin{gather*}
\mathrm{rank}A(k_n) = \mathrm{rank}\left(\bar{P}^\dagger(k_n^*)P(k_n) \right) \geq \mathrm{rank}(\bar{P}^\dagger(k_n^*)) + \mathrm{rank}(P(k_n)) - 4.
\end{gather*}
Let us assume $\mathrm{rank}A(k_n) = 2$, which implies that both matrices $a(k_n)$ and $\bar{a}(k_n^*)$ have rank equal to $1$. We then have the following cases:
\begin{enumerate}
\item[(1a)] $\mathrm{rank}P(k_n) \equiv \mathrm{rank} \bar{P}^\dagger(k_n^*) = 2$. In this case, from Sylvester's inequality we get $2 \geq 2+2-4=0$, which is admissible. However, from Prop.~\ref{e:propA} we have that $\mathrm{rank} P(k_n) = 2$ implies $\mathrm{rank}A(k_n) = 0$, which contradicts the initial assumption.
\item[(1b)] $\mathrm{rank}P(k_n) \equiv \mathrm{rank} \bar{P}^\dagger(k_n^*) = 3$. In this case, Sylvester's inequality yields $2 \geq 3+3-4=2$, which is admissible.
\item[(1c)] $\mathrm{rank}P(k_n)=2$ and $\mathrm{rank}\bar{P}^\dagger(k_n^*) = 3$. In this case, from Sylvester's inequality we get $2 \geq 3+2-4=1$, which is admissible. However, according to Prop.~\ref{e:propA}, $\mathrm{rank} P(k_n) = 2$ implies $\mathrm{rank}A(k_n) = 0$, which contradicts the initial assumption.
\end{enumerate}
Let us now assume $\mathrm{rank}A(k_n) = 1$, which implies that one of the matrices $a(k_n)$ and $\bar{a}(k_n^*)$ has rank equal to $0$ and the other one has rank equal to $1$. We then have the following cases:
\begin{enumerate}
\item[(2a)] $\mathrm{rank}P(k_n) \equiv \mathrm{rank}\bar{P}^\dagger(k_n^*) = 2$. In this case, Sylvester's inequality yields $1 \geq 2+2-4=0$, which is admissible. However, Prop.~\ref{e:propA} has $\mathrm{rank}P(k_n) = 2$ implying $\mathrm{rank}A(k_n) = 0$, which contradicts the initial assumption.
\item[(2b)] $\mathrm{rank}P(k_n) \equiv \mathrm{rank} \bar{P}^\dagger(k_n^*) = 3$. In this case, from Sylvester's inequality we get $1 \geq 3+3-4=2$, which is not admissible.
\item[(2c)] $\mathrm{rank}P(k_n)=2$ and $\mathrm{rank} \bar{P}^\dagger(k_n^*) = 3$ (or vice-versa). In either case, Sylvester's inequality gives $1 \geq 3+2-4=1$, which is admissible. However, Prop.~\ref{e:propA} gives $\mathrm{rank}P(k_n)= 2$ implies $\mathrm{rank}A(k_n) = 0$, which contradicts the initial assumption.
\end{enumerate}
In conclusion, there is no admissible correspondence between the ranks of the matrices $P(k_n)$ and $A(k_n)$ when $\mathrm{rank}A(k_n)=1$, i.e., when $a(k_n)$ and $\bar{a}(k_n^*)$ have unequal ranks. Therefore, the matrices $a(k_n)$ and $\bar{a}(k_n^*)$ need necessarily have equal ranks. Similar arguments would give the result for the other two discrete eigenvalues in each quartet.

\section{Derivation of one-soliton transmission coefficients}

In the following, we derive the expression of the transmission coefficients for 1-fundamental soliton, 1-fundamental breather, and 1-self-symmetric soliton.

\subsection{Transmission coefficients for a 1-fundamental soliton solution}

We start by considering a 1-fundamental soliton solution. Assuming that the potential decays rapidly enough so that Eqs.~\eref{e:28} and ~\eref{e:7new} can be extended slightly above or below the real $k$-axis, one can use these equations to reconstruct the left and right transmission coefficients as limits of suitable blocks of the modified eigenfunctions, namely
\begin{subequations}\label{e:7}
\begin{gather}
a(k) = \lim_{x \to + \infty} M_{-,1}^{\mathrm{up}}(x,t,k)\,, \qquad \bar{a}(k) = \lim_{x \to + \infty} M_{-,2}^{\mathrm{dn}}(x,t,k)\,,\label{e:7a}\\
\bar{c}(k) = \lim_{x \to - \infty} M_{+,1}^{\mathrm{up}}(x,t,k)\,, \qquad c(k) = \lim_{x \to - \infty} M_{+,2}^{\mathrm{dn}}(x,t,k)\,.\label{e:7b}
\end{gather}
\end{subequations}
From the definition of the sectionally meromorphic matrix function $\mu_{\pm}$ in \eref{e:jumpcondition}, one has
\begin{gather}
\label{e:13}
\bar{c}(k) = \lim_{x \to - \infty}\mu^{\mathrm{up}}_{-,1}(x,t,k), \qquad c(k) = \lim_{x \to - \infty}\mu^{\mathrm{dn}}_{+,2}(x,t,k),
\end{gather}
where superscripts $^{\mathrm{up}}$ and $^{\mathrm{dn}}$ denote the upper/lower $2 \times 2$ blocks of the corresponding matrices.
In Sec. 2, we also introduced the sectionally meromorphic matrix function
\begin{gather}
\label{e:15new}
\tilde{\mu}_{\pm}(x,t,k) = \mu_{\infty}^{-1}(x,t)\mu_{\pm}(x,t,k),
\end{gather}
where $\mu_{\infty} = \lim_{|k| \to \infty}\mu_{\pm}(x,t,k)$, such that $\tilde{\mu}_{\pm}$ is normalized, as $k \to \infty$, and using a similar notation for the upper/lower blocks of the functions $\mu_{\infty}$ and $\tilde{\mu}_{\pm}$, we obtain
\begin{subequations}\label{e:16new}
\begin{gather}
\mu_{-,1}^{\mathrm{up}}(x,t,k) = \mu_{\infty,1}^{\mathrm{up}}(x,t)\,\tilde{\mu}_{-,1}^{\mathrm{up}}(x,t,k) + \mu_{\infty,2}^{\mathrm{up}}(x,t)\,\tilde{\mu}_{-,1}^{\mathrm{dn}}(x,t,k)\,,\label{e:16newa}\\
\mu_{+,2}^{\mathrm{dn}}(x,t,k) = \mu_{\infty,1}^{\mathrm{dn}}(x,t)\,\tilde{\mu}_{+,2}^{\mathrm{up}}(x,t,k) + \mu_{\infty,2}^{\mathrm{dn}}(x,t)\,\tilde{\mu}_{+,2}^{\mathrm{dn}}(x,t,k),\label{e:16newb}
\end{gather}
\end{subequations}
which, together with \eref{e:13}, yield
\begin{subequations}\label{e:17}
\begin{gather}
\bar{c}(k) = \lim_{x \to - \infty} \left\{ \mu_{\infty,1}^{\mathrm{up}}(x,t)\,\tilde{\mu}_{-,1}^{\mathrm{up}}(x,t,k) + \mu_{\infty,2}^{\mathrm{up}}(x,t)\,\tilde{\mu}_{-,1}^{\mathrm{dn}}(x,t,k)\right\}
\,,\label{e:17a}\\
c(k) = \lim_{x \to - \infty} \left\{ \mu_{\infty,1}^{\mathrm{dn}}(x,t)\,\tilde{\mu}_{+,2}^{\mathrm{up}}(x,t,k) + \mu_{\infty,2}^{\mathrm{dn}}(x,t)\,\tilde{\mu}_{+,2}^{\mathrm{dn}}(x,t,k) \right\}\,,\label{e:17b}
\end{gather}
\end{subequations}
or equivalently
\begin{subequations}\label{e:18}
\begin{gather}
\bar{c}(k) = \lim_{\xi \to - \infty} \left\{ \hat{\mu}_{\infty,1}^{\mathrm{up}}(\xi,t)\,\breve{\mu}_{-,1}^{\mathrm{up}}(\xi,t,k) + \hat{\mu}_{\infty,2}^{\mathrm{up}}(\xi,t)\,\breve{\mu}_{-,1}^{\mathrm{dn}}(\xi,t,k)\right\}
\,,\label{e:18a}\\
c(k) = \lim_{\xi \to - \infty} \left\{ \hat{\mu}_{\infty,1}^{\mathrm{dn}}(\xi,t)\,\breve{\mu}_{+,2}^{\mathrm{up}}(\xi,t,k) + \hat{\mu}_{\infty,2}^{\mathrm{dn}}(\xi,t)\,\breve{\mu}_{+,2}^{\mathrm{dn}}(\xi,t,k) \right\}\,,\label{e:18b}
\end{gather}
\end{subequations}
where the $x$-dependence has been replaced by the $\xi$-dependence. The upper/lower blocks of $\breve{\mu}_{-,1}$ and $\breve{\mu}_{+,2}$ are obtained by solving the linear system (28) and are given by:
\begin{subequations}\label{e:B.7}
\begin{gather}
\breve{\mu}_{-,1}^{\mathrm{up}}(\xi,t,k) = I_2 + \frac{e^{2i \hat{\theta}_1}}{k-k_1} \breve{\mu}_{+,2}^{\mathrm{up}}(\xi,t,k_1) C_1 + \frac{e^{-2i \hat{\theta}_1^*}}{k + k_1^*} \breve{\mu}_{+,2}^{\mathrm{up}}(\xi,t,-k_1^*)\tilde{C}_1, \label{e:B.7a}\\
\breve{\mu}_{-,1}^{\mathrm{dn}}(\xi,t,k) = \frac{e^{2i \hat{\theta}_1}}{k-k_1} \breve{\mu}_{+,2}^{\mathrm{dn}}(\xi,t,k_1) C_1 + \frac{e^{-2i \hat{\theta}_1^*}}{k + k_1^*} \breve{\mu}_{+,2}^{\mathrm{dn}}(\xi,t,-k_1^*)\tilde{C}_1, \label{e:B.7b}\\
\breve{\mu}_{+,2}^{\mathrm{up}}(\xi,t,k) = \frac{e^{-2i \hat{\theta}_1^*}}{k-k_1^*} \breve{\mu}_{-,1}^{\mathrm{up}}(\xi,t,k_1^*) \bar{C}_1 + \frac{e^{2i \hat{\theta}_1}}{k + k_1} \breve{\mu}_{-,1}^{\mathrm{up}}(\xi,t,-k_1)\bar{\tilde{C}}_1,\label{e:B.7c}\\
\breve{\mu}_{+,2}^{\mathrm{dn}}(\xi,t,k) = I_2 + \frac{e^{-2i \hat{\theta}_1^*}}{k-k_1^*} \breve{\mu}_{-,1}^{\mathrm{dn}}(\xi,t,k_1^*) \bar{C}_1 + \frac{e^{2i \hat{\theta}_1}}{k + k_1} \breve{\mu}_{-,1}^{\mathrm{dn}}(\xi,t,-k_1)\bar{\tilde{C}}_1,\label{e:B.7d}
\end{gather}
\end{subequations}
where the eigenfunctions which appear in Eqs.~\eref{e:B.7} are given by the following expressions for a 1-fundamental soliton:
\begin{subequations}\label{e:44}
\begin{gather}
\breve{\mu}_{-,1}^{\mathrm{up}}(\xi,t,k_1^*) = \frac{1}{\Delta} \Bigg[ I_2 - \frac{(k_1 + k_1^*)e^{2i(\hat{\theta}_1 - \hat{\theta}_1^*)}}{2k_1^*(k_1-k_1^*)^2} ||\pmb{\gamma}_1||^2 \mathrm{diag} (0\,,1) \Bigg]\,,\label{e:44a}\\
\breve{\mu}_{-,1}^{\mathrm{up}}(\xi,t,-k_1) = \frac{1}{\Delta} \Bigg[ I_2 - \frac{(k_1 + k_1^*)e^{2i(\hat{\theta}_1 - \hat{\theta}_1^*)}}{2k_1(k_1-k_1^*)^2} ||\pmb{\gamma}_1||^2 \mathrm{diag} (1\,,0) \Bigg]\,,\label{e:44b}\\
\breve{\mu}_{-,1}^{\mathrm{dn}}(\xi,t,k_1^*) = \frac{1}{\Delta} \Bigg[ \frac{e^{2i \hat{\theta}_1}}{k_1^* - k_1} C_1 + \frac{e^{-2i \hat{\theta}_1^*}}{2k_1^*} \tilde{C}_1 \Bigg]\,,\label{e:44c}\\
\breve{\mu}_{-,1}^{\mathrm{dn}}(\xi,t,-k_1) = \frac{1}{\Delta} \Bigg[ \frac{e^{-2i \hat{\theta}_1^*}}{k_1^* - k_1} \tilde{C}_1 - \frac{e^{2i \hat{\theta}_1}}{2k_1} C_1 \Bigg]\,,\label{e:44d}
\end{gather}
\end{subequations}
and
\begin{subequations}\label{e:54}
\begin{gather}
\breve{\mu}_{+,2}^{\mathrm{up}}(\xi,t,k_1) = \frac{1}{\Delta} \Bigg[ \frac{e^{-2i \hat{\theta}_1^*}}{k_1-k_1^*}\bar{C}_1 + \frac{e^{2i \hat{\theta}_1}}{2k_1}\bar{\tilde{C}}_1 \Bigg]\,,\label{e:54a}\\
\breve{\mu}_{+,2}^{\mathrm{up}}(\xi,t,-k_1^*) = \frac{1}{\Delta} \Bigg[\frac{e^{2i \hat{\theta}_1}}{k_1-k_1^*}\bar{\tilde{C}}_1 - \frac{e^{-2i \hat{\theta}_1^*}}{2k_1^*}\bar{C}_1\Bigg]\,,\label{e:54b}\\
\breve{\mu}_{+,2}^{\mathrm{dn}}(\xi,t,k_1) = \frac{1}{\Delta} \Bigg[ I_2 + \frac{(k_1+k_1^*)e^{2i(\hat{\theta}_1 - \hat{\theta}_1^*)}}{2k_1(k_1-k_1^*)^2}\tilde{C}_1 \bar{\tilde{C}}_1 \Bigg]\,,\label{e:54c}\\
\breve{\mu}_{+,2}^{\mathrm{dn}}(\xi,t,-k_1^*) = \frac{1}{\Delta} \Bigg[ I_2 + \frac{(k_1+k_1^*)e^{2i(\hat{\theta}_1 - \hat{\theta}_1^*)}}{2k_1^*(k_1-k_1^*)^2}C_1 \bar{C}_1 \Bigg]\,,\label{e:54d}
\end{gather}
\end{subequations}
where
\begin{gather*}
\hat{\theta}_1 = \hat{\theta}(k_1), \quad \hat{\theta}_1^* = \hat{\theta}(k_1^*), \quad \hat{\theta}(k) = k \xi - t/4k, \quad \Delta = 1 - \frac{e^{2i(\hat{\theta}_1 - \hat{\theta}_1^*)}}{(k_1-k_1^*)^2}||\pmb{\gamma}_1||^2,
\end{gather*}
and the $(\xi,t)$ dependence in $\hat{\theta}$ has been omitted for brevity. Let us first focus in Eq.~\eref{e:18b}, and compute the limits of the upper/lower blocks of $\breve{\mu}_{+,2}(\xi,t,k)$ as $\xi \to - \infty$.
Inserting Eqs.~\eref{e:44} into \eref{e:B.7c}-\eref{e:B.7d} we obtain
\begin{subequations}\label{e:45}
\begin{gather}
\breve{\mu}^{\mathrm{up}}_{+,2}(\xi,t,k) = \frac{1}{\Delta}\left\{ \frac{e^{- 2i \hat{\theta}_1^*}}{k - k_1^*} \bar{C}_1 + \frac{e^{2i \hat{\theta}_1}}{k + k_1} \bar{\tilde{C}}_1 \right\}\,,\label{e:45a}\\
\breve{\mu}^{\mathrm{dn}}_{+,2}(\xi,t,k) = I_2 + \frac{1}{\Delta}\frac{e^{2i(\hat{\theta}_1 - \hat{\theta}_1^*)}}{(k_1^* - k_1)}\left\{ \frac{1}{k - k_1^*}C_1 \bar{C}_1 + \frac{1}{k + k_1}\tilde{C}_1 \bar{\tilde{C}}_1\right\}\,,\label{e:45b}
\end{gather}
\end{subequations}
where we have taken into account that
\begin{gather*}
\label{e:24new}
\mathrm{diag}(1\,,0) \bar{\tilde{C}}_1 = 0_2, \quad \mathrm{diag}(0\,,1)  \bar{C}_1 = 0_2, \quad C_1 \bar{\tilde{C}}_1 = 0_2, \quad \tilde{C}_1 \bar{C}_1 = 0_2,
\end{gather*}
on account of the form of $C_1$ in the 1-fundamental soliton case, and of the symmetries \eref{e:eq3}.
Since the scattering coefficients are time-independent, we can set $t = 0$ in Eqs.~\eref{e:45} and compute the limits of both sides as $\xi \to -\infty$, which yields
\begin{subequations}\label{e:limitsxi}
\begin{gather}
\lim_{\xi \to - \infty} \breve{\mu}^{\mathrm{up}}_{+,2}(\xi,0,k) = 0_2,\\
\lim_{\xi \to - \infty}\breve{\mu}^{\mathrm{dn}}_{+,2}(\xi,0,k) = I_2 + \frac{k_1 - k_1^*}{k-k_1^*} \frac{C_1 \bar{C}_1}{||\pmb{\gamma}_1||^2} + \frac{k_1 - k_1^*}{k+k_1} \frac{\tilde{C}_1 \bar{\tilde{C}}_1}{||\pmb{\gamma}_1||^2},
\end{gather}
\end{subequations}
assuming that $\nu_j >0$, for $j=1,2$. Combining Eqs. \eref{e:limitsxi} with \eref{e:18b}, we find
\begin{gather}
c(k) = \left( \lim_{\xi \to - \infty}\hat{\mu}_{\infty,2}^{\mathrm{dn}}(\xi,t)\right)\left( I_2 + \frac{k_1 - k_1^*}{k-k_1^*} \frac{C_1 \bar{C}_1}{||\pmb{\gamma}_1||^2} + \frac{k_1 - k_1^*}{k+k_1} \frac{\tilde{C}_1 \bar{\tilde{C}}_1}{||\pmb{\gamma}_1||^2}\right).
\end{gather}
As to the remaining blocks, i.e., relations \eref{e:54} and \eref{e:B.7a}-\eref{e:B.7b},
one can then check that
\begin{subequations}
\begin{gather}
\lim_{\xi \to - \infty} \breve{\mu}^{\mathrm{dn}}_{-,1}(\xi,0,k) = 0_2,\\
\lim_{\xi \to - \infty} \breve{\mu}^{\mathrm{up}}_{-,1}(\xi,0,k) = I_2 - \frac{k_1-k_1^*}{k-k_1} \frac{\bar{C}_1 C_1}{||\pmb{\gamma}_1||^2} - \frac{k_1-k_1^*}{k+k_1^*} \frac{\bar{\tilde{C}}_1 \tilde{C}_1}{||\pmb{\gamma}_1||^2},
\end{gather}
\end{subequations}
and therefore
\begin{gather}\label{e:99}
\bar{c}(k) = \left( \lim_{\xi \to - \infty}\hat{\mu}_{\infty,1}^{\mathrm{up}}(\xi,t)\right)\left( I_2 - \frac{k_1-k_1^*}{k-k_1} \frac{\bar{C}_1 C_1}{||\pmb{\gamma}_1||^2} - \frac{k_1-k_1^*}{k+k_1^*} \frac{\bar{\tilde{C}}_1 \tilde{C}_1}{||\pmb{\gamma}_1||^2}\right).
\end{gather}
One can prove that the left and right scattering coefficients are related by the following symmetries:
\begin{gather}
\label{e:14}
a(k) = \bar{c}^{\dagger}(k^*), \qquad \bar{a}(k) = c^{\dagger}(k^*),
\end{gather}
and therefore
\begin{subequations}\label{e:19}
\begin{gather}
a(k) = \left( I_2 + \frac{k_1-k_1^*}{k-k_1^*} \frac{\bar{C}_1 C_1}{||\pmb{\gamma}_1||^2} + \frac{k_1-k_1^*}{k+k_1}\frac{\bar{\tilde{C}}_1 \tilde{C}_1}{||\pmb{\gamma}_1||^2}\right)\left( \lim_{\xi \to - \infty} \hat{\mu}_{\infty,1}^{\mathrm{up}}(\xi,t)\right)^\dagger\,,\label{e:19a}\\
\bar{a}(k) = \left( I_2 - \frac{k_1-k_1^*}{k-k_1}\frac{C_1 \bar{C}_1}{||\pmb{\gamma}_1||^2} - \frac{k_1 - k_1^*}{k+k_1^*}\frac{\tilde{C}_1 \bar{\tilde{C}}_1}{||\pmb{\gamma}_1||^2}\right)\left( \lim_{\xi \to - \infty} \hat{\mu}_{\infty,2}^{\mathrm{dn}}(\xi,t)\right)^\dagger\,. \label{e:19b}
\end{gather}
\end{subequations}
Moreover, one can check that
$\bar{C}_1 C_1 = - ||\pmb{\gamma}_1||^2 \mathrm{diag}(1\,,0)$ and $\bar{\tilde{C}}_1 \tilde{C}_1 = - ||\pmb{\gamma}_1||^2 \mathrm{diag}(0\,,1)$,
and therefore Eqs.~\eref{e:19} can be simplified to
\begin{subequations}\label{e:4}
\begin{gather}
a(k) = \left( I_2 -
\mathrm{diag} \left( \frac{k_1-k_1^*}{k-k_1^*}\,, \frac{k_1-k_1^*}{k+k_1} \right)\right)\left( \lim_{\xi \to - \infty} \hat{\mu}_{\infty,1}^{\mathrm{up}}(\xi)\right)^\dagger\,,\label{e:4a}\\
\bar{c}(k) =  \left( \lim_{\xi \to - \infty}\hat{\mu}_{\infty,1}^{\mathrm{up}}(\xi)\right)\left( I_2 + \mathrm{diag} \left( \frac{k_1-k_1^*}{k-k_1}\,,\frac{k_1-k_1^*}{k+k_1^*}\right)\right).
\end{gather}
\end{subequations}
To complete the derivation of the transmission coefficients for a 1-fundamental soliton, it remains to compute the limits of the quantities $\hat{\mu}_{\infty,1}^{\mathrm{up}}(\xi)$ and $\hat{\mu}_{\infty,2}^{\mathrm{dn}}(\xi)$ as $\xi \to - \infty$; let us denote
\begin{gather}
B = \left(  \lim_{\xi \to - \infty} \hat{\mu}_{\infty,1}^{\mathrm{up}}(\xi) \right)^\dagger, \quad D = \left( \lim_{\xi \to - \infty} \hat{\mu}_{\infty,2}^{\mathrm{dn}}(\xi)\right)^\dagger.
\end{gather}
This can be done using the asymptotics of $a(k)$ and $\bar{a}(k)$, as $k \to 0$, which recall from \cite{ABBA}, are given by
\begin{subequations}\label{e:9}
\begin{gather}
a(k) = \left( 1 + i k \tilde{\gamma} \right) I_2+O(k^2), \quad \bar{a}(k) = \left( 1-ik\tilde{\gamma}\right) I_2+O(k^2),\label{e:9a}\\
\tilde{\gamma} = \bigints_{- \infty}^{\infty} \left( \sqrt{1 + ||u_x||^2} -1 \right) dx.\label{e:9b}
\end{gather}
\end{subequations}
Moreover, from \eref{e:4a} one can easily compute the asymptotic expansion of $a(k)$ as $k \to 0$:
\begin{gather}
\label{e:8}
a(k) = \left( I_2 + \mathrm{diag} \left( \frac{k_1-k_1^*}{k_1^*}\,, \frac{k_1^* - k_1}{k_1} \right)\right)B + k \, \mathrm{diag} \left( \frac{k_1 - k_1^*}{(k_1^*)^2}\,,\frac{k_1-k_1^*}{k_1^2} \right) B+O(k^2),
\end{gather}
and equating the same orders of $k$ in Eqs.~\eref{e:8} and \eref{e:9a}, we obtain
\begin{gather}\label{e:matrixB}
B = \mathrm{diag} \left( \frac{k_1^*}{k_1}\,,\frac{k_1}{k_1^*} \right),\quad
B = i \tilde{\gamma} \mathrm{diag} \left( \frac{(k_1^*)^2}{k_1-k_1^*}\,, \frac{k_1^2}{k_1 - k_1^*} \right),
\end{gather}
where the first part of this relation gives the matrix $B$ explicitly. Furthermore, equating the two expressions of $B$ in \eref{e:matrixB}, we get the following two relations entry-wise
\begin{gather}\label{e:gmtilde}
\frac{k_1-k_1^*}{(k_1^*)^2} = i \tilde{\gamma} \left( \frac{k_1}{k_1^*}\right), \quad \frac{k_1-k_1^*}{k_1^2} = i \tilde{\gamma} \left( \frac{k_1^*}{k_1}\right) \,,
\end{gather}
which we can both solve independently, and obtain the same result for the quantity $\tilde{\gamma}$
\begin{gather}
i \tilde{\gamma} = \frac{k_1-k_1^*}{|k_1|^2}\equiv\frac{2i\Im k_1}{|k_1|^2}\,.
\end{gather}
Finally, inserting the expression for $B$ into Eqs.~\eref{e:4}, we obtain the expressions for $a(k)$ and $\bar{c}(k)$ for a 1-fundamental soliton
\begin{subequations}
\label{e:1fs_acb}
\begin{gather}
a(k) = \mathrm{diag}\left( \frac{k_1^*}{k_1} \frac{k-k_1}{k-k_1^*}\,, \frac{k_1}{k_1^*}\frac{k+k_1^*}{k+k_1}\right),\\\bar{c}(k) = \mathrm{diag}\left( \frac{k_1}{k_1^*} \frac{k-k_1^*}{k-k_1}\,, \frac{k_1^*}{k_1}\frac{k+k_1}{k+k_1^*}\right),
\end{gather}
\end{subequations}
as functions of $k$ and of the discrete eigenvalue $k_1$, which are independent of the entries of the norming constant. Similarly, we use equation \eref{e:19b} to compute the asymptotic behavior of $\bar{a}(k)$ as $k \to 0$, which in turn we compare with its asymptotics given in \eref{e:9a}. This yields the explicit expression of matrix $D$
which we use together with \eref{e:19b} to obtain:
\begin{subequations}\label{e:24'}
\begin{gather}
\bar{a}(k) = \left( I_2 - \frac{k_1 - k_1^*}{k-k_1} \frac{C_1 \bar{C}_1}{||\pmb{\gamma}_1||^2} - \frac{k_1-k_1^*}{k+k_1^*}\frac{\tilde{C}_1 \bar{\tilde{C}}_1}{||\pmb{\gamma}_1||^2}\right)\times\\ \notag
\times \left( I_2 + \frac{k_1-k_1^*}{k_1} \frac{C_1 \bar{C}_1}{||\pmb{\gamma}_1||^2} - \frac{k_1 - k_1^*}{k_1^*}\frac{\tilde{C}_1 \bar{\tilde{C}}_1}{||\pmb{\gamma}_1||^2}\right)^{-1},\,\label{e:24a}\\
c(k) = \left( I_2 + \frac{k_1-k_1^*}{k_1} \frac{C_1 \bar{C}_1}{||\pmb{\gamma}_1||^2} - \frac{k_1-k_1^*}{k_1^*}\frac{\tilde{C}_1 \bar{\tilde{C}}_1}{||\pmb{\gamma}_1||^2} \right)^{-1} \times \\ \notag
\times \left( I_2 +  \frac{k_1-k_1^*}{k-k_1^*}\frac{C_1 \bar{C}_1}{||\pmb{\gamma}_1||^2} + \frac{k_1-k_1^*}{k+k_1} \frac{\tilde{C}_1 \bar{\tilde{C}}_1}{||\pmb{\gamma}_1||^2}\right).
\end{gather}
\end{subequations}
Notice that the expressions of $\bar{a}$ and $c$ are more complicated than those of $a$ and $\bar{c}$; the transmission coefficients $\bar{a}$ and $c$ depend both on the discrete eigenvalues and on the entries of the norming constants. Summarizing, for a 1-fundamental soliton solution with discrete eigenvalue $k_1$ and associated norming constant $C_1$, we find its transmission coefficients given by \eref{e:1fs_acb} and \eref{e:24'}.
Moreover, one can explicitly compute the inverse matrices which appear in Eqs.~\eref{e:24'} and simplify the expressions of $\bar{a}_j(k)$ and $c_j(k)$ into \eref{e:scdata}.

\subsection{Transmission coefficients for a 1-fundamental breather solution}
Below, we derive the transmission coefficients for a 1-fundamental breather solution. The procedure is similar, but in this case the norming constant $C_1$ takes the form $C_1=(\mu_1 \pmb{\gamma}_1\,, \kappa_1\, \pmb{\gamma}_1)$, for $\mu_1, \kappa_1 \in \mathbb{C}$, and the other ones are obtained from \eref{e:eq3}. Recall that the transmission coefficients $c$ and $\bar{c}$ are given by \eref{e:18}, and the upper/lower blocks of $\breve{\mu}_{-,1}$ and $\breve{\mu}_{+,2}$ are as in Eqs.~\eref{e:B.7}, where
the eigenfunctions which appear in Eqs.~\eref{e:B.7} are given by the following expressions for a 1-fundamental breather:
\begin{subequations}\label{e:161new}
\begin{gather}
\breve{\mu}_{-,1}^{\mathrm{up}}(\xi,t,k_1^*) = \frac{1}{\tilde{\Delta}} \Bigg[ I_2 - \frac{(k_1 + k_1^*)e^{2i(\hat{\theta}_1 - \hat{\theta}_1^*)}}{2k_1^* (k_1 - k_1^*)^2}||\pmb{\gamma}_1||^2 H_{2,2}^{1} \Bigg],\\
\breve{\mu}_{-,1}^{\mathrm{up}}(\xi,t,-k_1) = \frac{1}{\tilde{\Delta}} \Bigg[ I_2 - \frac{(k_1 + k_1^*)e^{2i(\hat{\theta}_1 - \hat{\theta}_1^*)}}{2k_1 (k_1 - k_1^*)^2}||\pmb{\gamma}_1||^2 H_{1,1}^{1} \Bigg],\\
\breve{\mu}_{-,1}^{\mathrm{dn}}(\xi,t,k_1^*) = \frac{1}{\tilde{\Delta}} \Bigg[ \frac{e^{2i \hat{\theta}_1}}{k_1^*-k_1} C_1 + \frac{e^{-2i \hat{\theta}_1^*}}{2 k_1^*} \tilde{C}_1 - \frac{e^{2i(\hat{\theta}_1 -2 \hat{\theta}_1^*)}}{2k_1^*(k_1 - k_1^*)^2}||\pmb{\gamma}_1||^2 \tilde{C}_1 H_{2,2}^{1}\\
\begin{aligned}
& \times \Bigg( I_2 - \frac{1}{\tilde{\Delta}}\Big(I_2 - \frac{e^{2i(\hat{\theta}_1 - \hat{\theta}_1^*)}}{(k_1 - k_1^*)^2}||\pmb{\gamma}_1||^2 H_{2,2}^{1}\Big)\Bigg)\Bigg]\notag
\end{aligned},\\
\breve{\mu}_{-,1}^{\mathrm{dn}}(\xi,t,-k_1) = \frac{1}{\tilde{\Delta}} \Bigg[ \frac{e^{-2i \hat{\theta}_1^*}}{k_1^*-k_1} \tilde{C}_1 - \frac{e^{2i \hat{\theta}_1}}{2 k_1} C_1 + \frac{e^{2i(2\hat{\theta}_1 - \hat{\theta}_1^*)}}{2k_1(k_1 - k_1^*)^2}||\pmb{\gamma}_1||^2 C_1 H_{1,1}^{1}\\
\begin{aligned}
& \times \Bigg( I_2 - \frac{1}{\tilde{\Delta}}\Big(I_2 - \frac{e^{2i(\hat{\theta}_1 - \hat{\theta}_1^*)}}{(k_1 - k_1^*)^2}||\pmb{\gamma}_1||^2 H_{1,1}^{1}\Big)\Bigg)\Bigg],\notag
\end{aligned}
\end{gather}
\end{subequations}
and
\begin{subequations}
\begin{gather}
\breve{\mu}_{+,2}^{\mathrm{up}}(\xi,t,k_1) = \frac{1}{\tilde{\Delta}} \left[ \frac{e^{-2i\hat{ \theta}_1^*}}{k_1-k_1^*}\bar{C}_1 + \frac{e^{2i \hat{\theta}_1}}{2k_1}\bar{\tilde{C}}_1\right],\\
\breve{\mu}_{+,2}^{\mathrm{up}}(\xi,t,-k_1^*) = \frac{1}{\tilde{\Delta}} \left[ \frac{e^{2i \hat{\theta}_1}}{k_1 - k_1^*} \bar{\tilde{C}}_1 - \frac{e^{-2i \hat{\theta}_1^*}}{2k_1^*} \bar{C}_1 \right],\\
\breve{\mu}_{+,2}^{\mathrm{dn}}(\xi,t,k_1) = I_2 -  \frac{e^{2i(\hat{\theta}_1 - \hat{\theta}_1^*)}}{\tilde{\Delta}(k_1-k_1^*)}\left[ \frac{1}{k_1 - k_1^*} C_1 \bar{C}_1 + \frac{1}{2k_1}\tilde{C}_1 \bar{\tilde{C}}_1 \right],\\
\breve{\mu}_{+,2}^{\mathrm{dn}}(\xi,t,-k_1^*) = I_2 + \frac{e^{2i(\hat{\theta}_1 - \hat{\theta}_1^*)}}{\tilde{\Delta}(k_1-k_1^*)}\left[ \frac{1}{2 k_1^*} C_1 \bar{C}_1 - \frac{1}{k_1 - k_1^*} \tilde{C}_1 \bar{\tilde{C}}_1 \right],
\end{gather}
\end{subequations}
where we have introduced the notations
\begin{subequations}
\begin{gather}
H_{1,1}^{1} =
\begin{pmatrix}
|\mu_1|^2 & \kappa_1 \mu_1^*\\[2pt]
\kappa_1^* \mu_1 & |\kappa_1|^2
\end{pmatrix},\quad H_{2,2}^{1} =
\begin{pmatrix}
|\kappa_1|^2 & -\kappa_1 \mu_1^* \\[2pt]
-\kappa_1^* \mu_1 & |\mu_1|^2
\end{pmatrix}, \\
\tilde{\Delta} = 1 - \frac{e^{2i(\hat{\theta}_1 - \hat{\theta}_1^*)}}{(k_1 - k_1^*)^2}(|\mu_1|^2 + |\kappa_1|^2)||\pmb{\gamma}_1||^2.
\end{gather}
\end{subequations}
Taking the appropriate limits as stated in Eqs.~\eref{e:18}, setting $t=0$ because the scattering coefficients are time-independent, and using the symmetry \eref{e:14},
we can determine the following expressions for the transmission coefficients
\begin{subequations}\label{e:143new}
\begin{gather}
a(k) = \left( I_2 + \frac{k_1-k_1^*}{k-k_1^*}\frac{\bar{C}_1 C_1}{\left( |\mu_1|^2 + |\kappa_1|^2\right) || \pmb{\gamma}_1||^2} + \frac{k_1-k_1^*}{k+k_1}\frac{\bar{\tilde{C}}_1 \tilde{C}_1}{\left( |\mu_1|^2 + |\kappa_1|^2\right) || \pmb{\gamma}_1||^2}\right)A^\dagger,\label{e:143newa}\\
\bar{c}(k) = A \left( I_2 + \frac{k_1^*-k_1}{k-k_1} \frac{\bar{C}_1 C_1}{\left( |\mu_1|^2 + |\kappa_1|^2\right) || \pmb{\gamma}_1||^2} + \frac{k_1^*-k_1}{k+k_1^*}\frac{\bar{\tilde{C}}_1 \tilde{C}_1}{\left( |\mu_1|^2 + |\kappa_1|^2\right) || \pmb{\gamma}_1||^2}\right),\label{e:143newb}\\
c(k) = D \left( I_2 + \frac{k_1-k_1^*}{k-k_1^*}\frac{C_1 \bar{C}_1}{\left( |\mu_1|^2 + |\kappa_1|^2\right) || \pmb{\gamma}_1||^2} + \frac{k_1-k_1^*}{k+k_1}\frac{\tilde{C}_1 \bar{\tilde{C}}_1}{\left( |\mu_1|^2 + |\kappa_1^2\right) ||\pmb{ \gamma}_1||^2}\right),\label{e:143newc}\\
\bar{a}(k) = \left( I_2 + \frac{k_1^*-k_1}{k-k_1}\frac{C_1 \bar{C}_1}{\left( |\mu_1|^2 + |\kappa_1|^2\right) || \pmb{\gamma}_1||^2} + \frac{k_1^*-k_1}{k+k_1^*}\frac{\tilde{C}_1 \bar{\tilde{C}}_1}{\left( |\mu_1|^2 + |\kappa_1|^2\right) || \pmb{\gamma}_1||^2} \right)D^\dagger,\label{e:143newd}
\end{gather}
\end{subequations}
where by $A$ and $D$ we denote the $2 \times 2$ constant matrices
\begin{gather}
A = \lim_{\xi \to - \infty} \breve{\mu}_{\infty,1}^{\mathrm{up}}(\xi), \quad D = \lim_{\xi \to - \infty} \breve{\mu}_{\infty,2}^{\mathrm{dn}}(\xi).
\end{gather}
Again, computing the asymptotics of $a(k)$ and $\bar{a}(k)$ as $k \to 0$ in the last equations and equating same orders of $k$ with the asymptotic behavior \eref{e:9}
one can compute the matrices $A$ and $D$ explicitly
\begin{subequations}
\begin{gather}
A = I_2 - \frac{k_1-k_1^*}{k_1}\frac{H_{1,1}^1}{|\mu_1|^2+|\kappa_1|^2} + \frac{k_1-k_1^*}{k_1^*}\frac{H_{2,2}^1}{|\mu_1|^2+|\kappa_1|^2},\\
D = I_2 + \frac{k_1^*-k_1}{k_1^*}\frac{C_1 \bar{C}_1}{\left(|\mu_1|^2+|\kappa_1|^2\right)||\pmb{\gamma}_1||^2} - \frac{k_1^*-k_1}{k_1}\frac{\tilde{C}_1 \bar{\tilde{C}}_1}{\left(|\mu_1|^2+|\kappa_1|^2\right)||\pmb{\gamma}_1||^2},
\end{gather}
\end{subequations}
and inserting these expressions into Eqs.~\eref{e:143new}, we obtain the transmission coefficients for a 1-fundamental breather solution:
\begin{subequations}
\label{e:trans_fb}
\begin{gather}
a_j(k) = I_2 + \frac{k}{k_j}\frac{k_j-k_j^*}{k-k_j^*} \frac{\bar{C}_j C_j}{\left(|\mu_j|^2+|\kappa_j|^2\right)||\pmb{\gamma}_j||^2} - \frac{k}{k_j^*}\frac{k_j-k_j^*}{k+k_j}\frac{\bar{\tilde{C}}_j \tilde{C}_j}{\left(|\mu_j|^2+|\kappa_j|^2\right)||\pmb{\gamma}_j||^2},\\
\bar{c}_j(k) = I_2 + \frac{k}{k_j^*}\frac{k_j^* - k_j}{k-k_j}\frac{\bar{C}_j C_j}{\left(|\mu_j|^2+|\kappa_j|^2\right)||\pmb{\gamma}_j||^2}-\frac{k}{k_j}\frac{k_j^* - k_j}{k+k_j^*}\frac{\bar{\tilde{C}}_j \tilde{C}_j}{\left(|\mu_j|^2+|\kappa_j|^2\right)||\pmb{\gamma}_j||^2},\\
c_j(k) = I_2 + \frac{k}{k_j}\frac{k_j-k_j^*}{k-k_j^*}\frac{C_j \bar{C}_j}{\left( |\mu_j|^2+|\kappa_j|^2\right) ||\pmb{\gamma}_j||^2} - \frac{k}{k_j^*}\frac{k_j-k_j^*}{k+k_j}\frac{\tilde{C}_j \bar{\tilde{C}}_j}{\left( |\mu_j|^2+|\kappa_j|^2\right) ||\pmb{\gamma}_j||^2},\\
\bar{a}_j(k) = I_2 + \frac{k}{k_j^*}\frac{k_j^* - k_j}{k - k_j}\frac{C_j \bar{C}_j}{\left( |\mu_j|^2+|\kappa_j|^2\right) ||\pmb{\gamma}_j||^2} - \frac{k}{k_j}\frac{k_j^* - k_j}{k+k_j^*}\frac{\tilde{C}_j \bar{\tilde{C}}_j}{\left(|\mu_j|^2+|\kappa_j|^2\right) ||\pmb{\gamma}_j||^2},
\end{gather}
\end{subequations}
as functions of $k$, in terms of the discrete eigenvalue and the associated norming constant.

\subsection{Transmission coefficients for a 1-self-symmetric soliton solution}

Below, we derive the transmission coefficients for a 1-self-symmetric soliton solution, i.e., a composite breather with self-symmetric (i.e., purely imaginary) discrete eigenvalues. Recall that the transmission coefficients $c$ and $\bar{c}$ are given by Eqs.~\eref{e:18}
and using Eqs.~\eref{e:80} for the eigenfunctions of a self-symmetric soliton, we then obtain:
\begin{subequations}
\begin{gather}
 \breve{\mu}_{+,2}^{\mathrm{up}}(\xi,t,k) =  \frac{e^{-2i \theta_1^*}}{k - k_1^*} \left( I_2 + \frac{e^{4i \theta_1}}{(k_1^* - k_1)^2} \bar{C}_1 C_1 \right)^{-1} \bar{C}_1,\\
 \breve{\mu}_{+,2}^{\mathrm{dn}}(\xi,t,k) = I_2 + \frac{e^{2i(\theta_1 - \theta_1^*)}}{(k - k_1^*)(k_1^* - k_1)} \left( I_2 + \frac{e^{4i \theta_1}}{(k_1^* - k_1)^2} C_1 \bar{C}_1 \right)^{-1} C_1 \bar{C}_1,
\end{gather}
\end{subequations}
where taking their limits as $\xi \to -\infty$, and setting $t=0$, we get
\begin{subequations}
\begin{gather}
\lim_{\xi \to - \infty} \breve{\mu}_{+,2}^{\mathrm{up}}(\xi,k) = 0_2, \quad \lim_{\xi \to - \infty} \breve{\mu}_{+,2}^{\mathrm{dn}}(\xi,k) = \left( \frac{k-k_1}{k-k_1^*} \right) I_2,
\end{gather}
\end{subequations}
assuming again that $\nu_j>0$, for $j=1,2$. Using Eqs.~\eref{e:18} and the symmetries
\eref{e:14}, we obtain the following expressions for the transmission coefficients:
\begin{subequations}\label{e:178}
\begin{gather}
c(k) = \frac{k-k_1}{k-k_1^*} \left( \lim_{\xi \to - \infty} \hat{\mu}_{\infty,2}^{\mathrm{dn}}(\xi)\right), \quad
\bar{c}(k) = \frac{k - k_1^*}{k - k_1} \left( \lim_{\xi \to - \infty} \hat{\mu}_{\infty,1}^{\mathrm{up}}(\xi) \right),\label{e:178b}\\
a(k) = \frac{k - k_1}{k - k_1^*} \left( \lim_{\xi \to - \infty} \hat{\mu}_{\infty,1}^{\mathrm{up}}(\xi) \right)^\dagger, \quad \bar{a}(k) = \frac{k - k_1^*}{k - k_1} \left(\lim_{\xi \to - \infty} \hat{\mu}_{\infty,2}^{\mathrm{dn}}(\xi) \right)^\dagger.\label{e:178d}
\end{gather}
\end{subequations}
In addition, one can use the asymptotic behavior of $a(k)$, $\bar{a}(k)$ as $k \to 0$, and as done in the previous cases obtain
\begin{gather}\label{e:177new}
\lim_{\xi \to - \infty} \hat{\mu}_{\infty,1}^{\mathrm{up}}(\xi) = -I_2, \quad \lim_{\xi \to - \infty} \hat{\mu}_{\infty,2}^{\mathrm{dn}}(\xi) = -I_2,
\end{gather}
by which Eqs.~\eref{e:178} yield:
\begin{gather}\label{e:176N}
c(k) = \frac{k_1 - k}{k_1 + k} I_2, \quad \bar{c}(k) = \left( \frac{k_1^* - k}{k_1^* + k} \right) I_2, \quad a(k) \equiv c(k), \quad \bar{a}(k) \equiv \bar{c}(k).
\end{gather}

\section{Proof of Lemma \ref{lemmasym}}
\textbf{Lemma \ref{lemmasym}.} \begin{proof}
		Spelling out the condition \eref{Lambdasym} gives
		\begin{gather}\label{e:C1}
					\left(I-\frac{\alpha_1-\alpha_1^*}{\lambda+\alpha_1}\Pi_1^*\right)
			\left(I-\frac{\alpha_2-\alpha_2^*}{\lambda+\alpha_2}\Pi_2^*\right)
			=\notag\\
\left(I-\frac{\alpha_1-\alpha_1^*}{\lambda-\alpha_1^*}\Lambda\Pi_1\Lambda^{-1}\right)
\left(I-\frac{\alpha_2-\alpha_2^*}{\lambda-\alpha_2^*}\Lambda\Pi_2\Lambda^{-1}\right).
		\end{gather}
		The poles on the LHS and RHS must coincide so we must have $\{-\alpha_1,-\alpha_2\}=\{\alpha_1^*,\alpha_2^*\}$. Suppose $-\alpha_1=\alpha_1^*$ (and thus $-\alpha_2=\alpha_2^*$); then we have from the last relation
		\begin{gather}
			\left(I-\frac{\alpha_1-\alpha_1^*}{\lambda-\alpha_1^*}\Pi_1^*\right)
			\left(I-\frac{\alpha_2-\alpha_2^*}{\lambda-\alpha_2^*}\Pi_2^*\right)=\notag\\
			\left(I-\frac{\alpha_1-\alpha_1^*}{\lambda-\alpha_1^*}\Lambda\Pi_1\Lambda^{-1}\right)
			\left(I-\frac{\alpha_2-\alpha_2^*}{\lambda-\alpha_2^*}\Lambda\Pi_2\Lambda^{-1}\right),
		\end{gather}	
		which is rearranged into
		\begin{gather}
\left(I+\frac{\alpha_1-\alpha_1^*}{\lambda-\alpha_1}\Lambda\Pi_1\Lambda^{-1}\right)
\left(I-\frac{\alpha_1-\alpha_1^*}{\lambda-\alpha_1^*}\Pi_1^*\right)=\notag\\
	\left(I-\frac{\alpha_2-\alpha_2^*}{\lambda-\alpha_2^*}\Lambda\Pi_2\Lambda^{-1}\right)
\left(I+\frac{\alpha_2-\alpha_2^*}{\lambda-\alpha_2}\Pi_2^*\right)\,.
		\end{gather}			
		The LHS has no pole at $\alpha_2$ and $\alpha_2^*$, while the RHS has no pole at $\alpha_1$ and $\alpha_1^*$. In view of our assumption on $\alpha_1$ and $\alpha_2$, this means that the common rational function of $\lambda$ represented by this equality is an entire function on $\mathbb{C}$ with limit $I_4$ at infinity. By Liouville theorem, the function is the identity matrix, and for this to hold we find			
			\begin{equation}
			\Pi_j^*=\Lambda\Pi_j\Lambda^{-1}\,,~~j=1,2\,,
		\end{equation}
		and each elementary factor falls into the realm of Lemma~\ref{selfsymmetric:lemma}. This is case 1.	
		
Suppose now that $\alpha_2=-\alpha_1^*$. Hence, the condition \eref{Lambdasym} now reads
		\begin{align}\label{condition2}
			\left(I-\frac{\alpha_1-\alpha_1^*}{\lambda+\alpha_1}\Pi_1^*\right)
			\left(I-\frac{\alpha_1-\alpha_1^*}{\lambda-\alpha_1^*}\Pi_2^*\right)=\notag\\
\left(I-\frac{\alpha_1-\alpha_1^*}{\lambda-\alpha_1^*}\Lambda\Pi_1\Lambda^{-1}\right)
\left(I-\frac{\alpha_1-\alpha_1^*}{\lambda+\alpha_1}\Lambda\Pi_2\Lambda^{-1}\right).
		\end{align}
		This takes the form of a special case of the refactorization Theorem~\ref{Threfac}, so we have
		\begin{eqnarray}
			\Lambda\,\Pi_1\,\Lambda^{-1}=\phi^{-1}\,\Pi_2^*\,\phi\,,~~\Lambda\,\Pi_2\,\Lambda^{-1}=\phi^{-1}\,\Pi_1^*\,\phi
		\end{eqnarray}
		with
		\begin{equation}
	\phi=2\alpha_1^*I+(\alpha_1-\alpha_1^*)(\Pi_1^*+\Pi_2^*)\,.
		\end{equation}
		This shows that $\Pi_1$ and $\Pi_2$ necessarily have the same rank and that $\Pi_1^*+\Pi_2^*$ commutes with $\phi$, and therefore
		\begin{eqnarray}
			\Pi_1^*+\Pi_2^*=\phi^{-1}\,(\Pi_1^*+\Pi_2^*)\,\phi=	\Lambda\,(\Pi_1+\Pi_2)\,\Lambda^{-1}\,.
		\end{eqnarray}		
		Let us set
		\begin{equation}\label{e:GAmma}
			\Gamma=\Pi_1^*-\Lambda\,\Pi_2\,\Lambda^{-1}=\Lambda\,\Pi_1\,\Lambda^{-1}-\Pi_2^*\,.
		\end{equation}
Then $\Gamma=\Gamma^\dagger$, and therefore $\Gamma$ is diagonalizable, with real eigenvalues, by a unitary matrix $M$:
$$\Gamma=M\,D\,M^\dagger\,,~~D={\rm diag}(d_1,d_2,d_3,d_4)\,,~~d_j\in \mathbb{R} \,,~~j=1,\dots,4\,\,.$$
Since ${\rm Tr}\,\Gamma={\rm rank}\,\Pi_1-{\rm rank}\,\Pi_2=0$, we get
$$d_1+d_2+d_3+d_4=0\,.$$
Note that $\Gamma=\Lambda\,\Gamma^*\,\Lambda^{-1}$, and since $M$ is unitary, we get 		
		\begin{eqnarray}
	M\,D\,M^\dagger=\Lambda\,M^*\,D\,M^T\,\Lambda^{-1}\Leftrightarrow M^\dagger\,\Lambda\,M^*\,D = D \,M^\dagger\,\Lambda\,M^*\,.
\end{eqnarray}
Let now $A=M^\dagger\,\Lambda\,M^*$. $A$ is an antisymmetric matrix which commutes with $D$. It is also unitary, so its determinant has modulus equal to $1$. This yields the conditions 	
\begin{eqnarray}
	\label{conditionsA}
	(d_i-d_j)A_{ij}=0\,,~~i<j\,,~~|A_{12}A_{34}-A_{13}A_{24}+A_{23}A_{14}|^2=1\,.
\end{eqnarray}
This means that at least two $d_j$ must be equal as otherwise $A=0$, which is a contradiction. Suppose all $d_j$ are equal so $\Gamma=d I_4$, $d\in \mathbb{R}$. Note that
\begin{equation}
	\Gamma=\Pi_1^*- I_4 + I_4 -\Lambda\,\Pi_2\,\Lambda^{-1}
\end{equation}
so
\begin{eqnarray}
	-\Gamma^2&=&\left( I_4 -\Pi_1^*-(I_4-\Lambda\,\Pi_2\,\Lambda^{-1})\right)(\Pi_1^*-\Lambda\,\Pi_2\,\Lambda^{-1})\nonumber\\
	&=&-(I_4-\Pi_1^*)\Lambda\,\Pi_2\,\Lambda^{-1}-(I_4-\Lambda\,\Pi_2\,\Lambda^{-1})\Pi_1^*\nonumber\\
	&=&-(I_4-\Lambda\,\Pi_2\,\Lambda^{-1}-\Gamma)\Lambda\,\Pi_2\,\Lambda^{-1}-(I_4-\Lambda\,\Pi_2\,\Lambda^{-1})(\Gamma+\Lambda\,\Pi_2\,\Lambda^{-1})\nonumber\\	\label{propertyGamma}
	&=&\Gamma\,\Lambda\,\Pi_2\,\Lambda^{-1}+\Lambda\,\Pi_2\,\Lambda^{-1}\,\Gamma-\Gamma\,.
\end{eqnarray}
If $\Gamma=d I_4$, since $\Pi_2$ is neither the identity nor $0$, we must have $d=0$, and therefore
\begin{equation}
	\label{relation-Pi}
	\Pi_1^*=\Lambda\,\Pi_2\,\Lambda^{-1}\,,~~\Pi_2^*=\Lambda\,\Pi_1\,\Lambda^{-1}\,.
\end{equation}

Suppose now without loss of generality that $d_1=d_2=d_3$ and $d_4\neq d_j$, $j=1,2,3$. This is in contradiction with \eref{conditionsA}, which requires $A_{14}=A_{24}=A_{34}=0$ but $|A_{12}A_{34}-A_{13}A_{24}+A_{23}A_{14}|^2=1$. Similarly, we cannot have wlog $d_1=d_2$ and $d_3 \neq d_j$, $j=1,2,4$ and $d_4 \neq d_j$, $j=1,2,3$. Hence, it remains the possibility $d_1=d_2$ and $d_3=d_4$. Because of the zero trace of $\Gamma$, this means that $\Gamma=dM\,\Sigma_3\,M^\dagger$, $d \in \mathbb{R}$. It also means that only $A_{12}$ and $A_{34}$ can be nonzero in the matrix $A$, hence $M$ must be a unitary matrix such that
$$M^\dagger\,\Lambda\,M^*=\begin{pmatrix}
	0 & A_{12} & 0 & 0\\
	-A_{12} & 0 & 0 & 0\\
	0 & 0 & 0 & A_{34}\\
	0 & 0 & -A_{34} & 0
\end{pmatrix}\,.$$
Denoting $\check{\Pi}_2=M^\dagger\,\Lambda\,\Pi_2\,\Lambda^{-1}\,M$, we can use \eref{propertyGamma} again to deduce
\begin{equation}\label{e:154}
	d(\Sigma_3-d I_4 -\Sigma_3\check{\Pi}_2-\check{\Pi}_2\Sigma_3)=0\,.
\end{equation}
If $d=0$, we are back to the case \eref{relation-Pi}. Otherwise $d\neq 0$ and $\Gamma=dM\Sigma_3M^\dagger$ is invertible. We show next that this leads to a contradiction so that this case is rejected.
We can write $T(\lambda)$ given in \eref{e:matrixT} as a sum of partial fractions
$$T(\lambda)=I+\frac{R_1}{\lambda-\alpha_1^*}+\frac{R_2}{\lambda-\alpha_2^*},$$
where the residues are given by
$$R_1=-(\alpha_1-\alpha_1^*)\Pi_1\left(I-\frac{\alpha_2-\alpha_2^*}{\alpha_1^*-\alpha_2^*}\Pi_2\right)\,,~~R_2=-(\alpha_2-\alpha_2^*)\left(I-\frac{\alpha_1-\alpha_1^*}{\alpha_2^*-\alpha_1^*}\Pi_1\right)\Pi_2\,.$$
Now recall that we are in the case $\alpha_2=-\alpha_1^*$ and that we are imposing the symmetry \eref{Lambdasym} on $T(\lambda)$. Hence we must have $R_2^*=-\Lambda\,R_1\,\Lambda^{-1}$ which yields explicitly
\begin{eqnarray}	\left(I-\frac{\alpha_1-\alpha_1^*}{\alpha_1+\alpha_1^*}\Pi_1^*\right)\Pi_2^*=\Lambda\,\Pi_1\,\Lambda^{-1}\Lambda\,\left(I-\frac{\alpha_1-\alpha_1^*}{\alpha_1+\alpha_1^*}\Pi_2\right)\,\Lambda^{-1}\,.
\end{eqnarray}
Note that $\alpha_1+\alpha_1^*\neq 0$ in the present case $\alpha_2=-\alpha_1^*$, since we assume in general that $\alpha_2\neq \alpha_1$.
Since we are considering the case where $\Pi_2^*=\Lambda\,\Pi_1\,\Lambda^{-1}-\Gamma$, recalling that $\Lambda\,\Gamma^*\,\Lambda^{-1}=\Gamma$ and noting that
\begin{equation}
	\label{commute1}
	\left[\Pi_1^*,\Lambda\,\Pi_1\,\Lambda^{-1}\right]=0\,,
\end{equation}
we obtain the condition
$$\frac{\alpha_1-\alpha_1^*	}{\alpha_1+\alpha_1^*}\left(\Pi_1^*-\Lambda\,\Pi_1\,\Lambda^{-1}\right)\Gamma=\Gamma\,.$$
Since $\Gamma$ is supposed to be invertible, this yields
$$\frac{\alpha_1-\alpha_1^*	}{\alpha_1+\alpha_1^*}\left(\Pi_1^*-\Lambda\,\Pi_1\,\Lambda^{-1}\right)=I\,,$$
which is a contradiction since the LHS has zero trace while the RHS obviously does not.
\end{proof}

\section{Hodograph transformation and exact two soliton solution of the ccSPE}

The ccSPE \eref{e:ccSP} admits an equivalent form written in conservation law:
\begin{gather}
\left( \rho^{-1}\right)_t - \frac{1}{2} \left( ||\pmb{u}||^2 \rho^{-1} \right)_x = 0, \quad \rho^{-1} = \sqrt{1 + ||\pmb{u}_x||^2},
\end{gather}
which is satisfied provided that the components $u_j$ and their conjugates satisfy Eq.~\eref{e:ccSP}. We can now define a hodograph transformation
\begin{gather}\label{e:4differential}
d \xi = \rho^{-1} dx + \frac{\rho^{-1}}{2}||\pmb{u}||^2 dt,
\end{gather}
which converts the initial variables $(x,t)$ to $(\xi,t)$. Recall that the travel-time parameter $\xi$ is defined by
\begin{gather}\label{e:5xi}
\xi = \bigint_{0}^{x} \sqrt{1 + ||\pmb{u}_y||^2}dy - \bigint_{0}^{\infty} \left( \sqrt{1 + ||\pmb{u}_y||^2} - 1 \right) dy,
\end{gather}
which implies
\begin{gather}
d \xi = \frac{\partial \xi}{\partial x} dx + \frac{\partial \xi}{\partial t}dt,
\end{gather}
where
\begin{subequations}
\begin{gather*}
\partial \xi/\partial x = \frac{\partial}{\partial x}\bigint_{0}^{x} \sqrt{1 + ||\pmb{u}_y||^2}dy = \sqrt{1 + ||\pmb{u}_x||^2 } \equiv \rho^{-1},\,\\
\partial \xi/\partial t = \frac{\partial}{\partial t}\bigint_{0}^{x} \sqrt{1 + ||\pmb{u}_y||^2}dy = \bigint_{0}^{x} \partial_t \rho^{-1} dy = \frac{1}{2}\bigint_{0}^{x} \partial_y \left( ||\pmb{u}||^2 \rho^{-1}\right) dy \equiv \frac{ \rho^{-1}}{2}||\pmb{u}||^2,
\end{gather*}
\end{subequations}
which coincides with \eref{e:4differential}. Moreover, the hodograph transformation converts the Lax pair \eref{e:1.3b} into the following one for $\Psi(\xi,t) = \Phi(x,t)$:
\begin{subequations}\label{e:195}
\begin{gather}
\Psi_{\xi} = R(\xi,t;k) \Psi, \quad R(\xi,t;k) = -i k \rho(\xi,t)\Sigma_3 + k \Sigma_3 V_{0,\xi},\\
\Psi_{t} = S(\xi,t;k) \Psi, \quad S(\xi,t;k) = \frac{i}{4k} \Sigma_3 - \frac{i}{2} V_0,
\end{gather}
\end{subequations}
where
\begin{gather*}
\Sigma_3 = \diag(1,1,-1,-1), \quad V_0 = \begin{pmatrix}
0_{2 \times 2} & U\\
V & 0_{2 \times 2}
\end{pmatrix}, \quad U = \begin{pmatrix}
-i u_1 & -i u_2\\
iu_2^* & -iu_1^*
\end{pmatrix}, \quad V = U^\dagger.
\end{gather*}
The compatibility condition $\Psi_{\xi t} = \Psi_{t \xi}$, gives a two-component coupled complex equation
\begin{gather} \label{e:196}
u_{j, \xi t} = \rho u_j, \quad \rho = 1 - \frac{1}{2} \int ||\textbf{u}||^2_{\xi} dt, \quad \text{for} \quad j=1,2.
\end{gather}\\
Conversely, one can define the inverse hodograph transformation
\begin{gather}
dx = \rho d \xi + \frac{\rho}{2} ||\textbf{u}||^2 dt,
\end{gather}
which converts the Lax pair \eref{e:195} and Eq.~\eref{e:196} into the Lax pair \eref{e:1.3b} and the ccSPE \eref{e:ccSP}, respectively. Therefore, the ccSPE equation \eref{e:ccSP} is equivalent to equation \eref{e:196} through the hodograph transformation \eref{e:4differential}, provided that $\rho > 0$.

Next, we provide a formula for exact 2-soliton solutions, which we used to verify numerically some of the results in this paper. The formula can be obtained as a special case of the ones derived in \cite{FL22} using B\"acklund-Darboux transformations, but has been adapted to the notations of this work and to some differences in the variables used in \cite{FL22} compared to this work. First of all, we introduce the eigenfunctions $\Phi_1$ and $\Phi_2$, given by
\begin{gather}\label{e:192}
\Phi_j = \begin{pmatrix}
\mu_j^* e^{-i \theta_j}\\
\kappa_j^* e^{-i \theta_j}\\
\alpha_j e^{i \theta_j}\\
\gamma_j e^{i \theta_j}
\end{pmatrix}, \quad \theta_j = \xi k_j-t/4k_j, \quad j=1,2,
\end{gather}
where $k_j$, are the two discrete eigenvalues, $(\alpha_j, \beta_j)^{T}$ is the non-zero column of the norming constant $C_j$, and $\mu_j, \kappa_j$ are the two multiplicative constants, such that the two columns of $C_j$ are proportional to each other. Specifically, the norming constant $C_j$ associated to the discrete eigenvalue $k_j$, for which $\Phi_j$ is given by \eref{e:192}, is assumed to be a rank-one matrix of the form $C_j = ( \mu_j \pmb{\gamma}_j, \kappa_j \pmb{\gamma}_j)$, where $\pmb{\gamma}_j = (\alpha_j, \beta_j)^{T}$, for $j=1,2$.
We then define the $4 \times 4$ matrix $Y$ as follows
\begin{gather}
Y = \left[ \Phi_1, \quad \Lambda \Phi_1^*, \quad \Phi_2, \quad  \Lambda \Phi_2^*\right] \equiv \begin{bmatrix}
Y1 \\
Y2
\end{bmatrix},
\end{gather}
where we introduced the matrix $\Lambda = \diag (i\sigma_2,i\sigma_2)$.
Next, we denote by $M_2$ the $4\times 4$ matrix
\begin{gather}
M_2 = \begin{pmatrix}
M_{11} & M_{12}\\
M_{21} & M_{22}
\end{pmatrix},
\end{gather}
where $M_{ij}$ are the $2 \times 2$ blocks defined as
\begin{subequations}
\begin{gather}
M_{11} = \left( \frac{|k_1|^2}{k_1^* - k_1} \Phi_1^\dagger \Phi_1 \right) I_2, \quad M_{22} = \left( \frac{|k_2|^2}{k_2^* - k_2} \Phi_2^\dagger \Phi_2 \right) I_2,\\[2pt]
M_{12} = \begin{pmatrix}
\frac{k_2 k_1^*}{k_1^* - k_2} \Phi_1^\dagger \Phi_2 & \frac{k_1^* k_2^*}{-k_1^* - k_2^*} \Phi_1^\dagger \Lambda \Phi_2^* \\
\frac{-k_1 k_2}{k_1 + k_2} \Phi_1^T \Lambda \Phi_2 & \frac{k_1 k_2^*}{-k_1 + k_2^*} \Phi_1^T \Phi_2^*
\end{pmatrix},\\[2pt]
M_{21} = \begin{pmatrix}
\frac{k_1 k_2^*}{k_2^* - k_1} \Phi_2^\dagger \Phi_1 & \frac{k_1^* k_2^*}{-k_1^* - k_2^*} \Phi_2^\dagger \Lambda \Phi_1^* \\
\frac{-k_1 k_2}{k_1 + k_2} \Phi_2^T \Lambda \Phi_1 & \frac{k_1^* k_2}{-k_2 + k_1^*} \Phi_2^T \Phi_1^*
\end{pmatrix}.
\end{gather}
\end{subequations}
The exact two soliton solution of the ccSPE is then given by
\begin{gather}
\label{e:198}
Q[2] = Y_2 M_2^{-1} Y_1^\dagger, \quad Q[2] = \begin{pmatrix}
q_1 & q_2\\
q_2^* & -q_1^*
\end{pmatrix}.
\end{gather}
Note here that the vector solution $\mathbf{u}$ of the ccSPE \eref{e:ccSP}, and the vector solution $\mathbf{q}$ of the ccSPE in \cite{FL22}, namely
\begin{gather}
\mathbf{q}_{xt} + \mathbf{q} - \frac{1}{2} (||\mathbf{q}||^2 \mathbf{q}_x)_x = 0, \quad \mathbf{q} = (q_1, q_2)^{T},
\end{gather}
are connected via the relation $u_1 = -i q_1^*$ and $u_2 = -i q_2$, for the two Lax pairs to match, provided that $k=1/\lambda$ holds for the discrete eigenvalues of the two systems. Therefore, one can recover the exact 2-soliton solution of \eref{e:ccSP} using equation \eref{e:198}, modulo the change of dependent and independent variables given above. Moreover, in the case of the 2-soliton solution the quantity $\rho$ takes the form
\begin{gather}
\rho = 1 - \ln_{\xi t} \left( \det M_2 \right),
\end{gather}
which after we equate with the expression of $\rho$ in relation \eref{e:196} yields
\begin{gather}
||u||^2 = 2 \ln_{tt}\left( \det M_2 \right).
\end{gather}

\section{Computation of the transmission coefficients from the Darboux matrices}

It is instructive to derive the scattering coefficients of the soliton solutions created from the zero solution by the dressing factor/Darboux matrix $T_{k_1,\Pi_1}(k)$. It is a standard result of the dressing method that the scattering matrix $S(k)$ defined in \eref{e:8new},
is given in this case by
\begin{gather}\label{newe:2}
	S(k) = \lim_{\xi \to +\infty} T_{k_1,\Pi_1}(k) \left(\lim_{\xi \to -\infty} T_{k_1,\Pi_1}(k)\right)^{-1},
\end{gather}
where the limits can be calculated at $t=0$ (recall that the $(\xi,t)$ dependence of $T_{k_1,\Pi_1}(k)$ enters through $\Pi_1(\xi,t)$.)
Thus, all we need to evaluate are the following limits
\begin{equation}
	\lim_{\xi \to \pm\infty} \Pi_1(\xi,0)\,.
\end{equation}
Let us focus on the case of a generic and a non-generic composite breather. Recall that in those cases,
\begin{equation}
\Pi_1(\xi,0)=\Phi_1(\xi)(\Phi_1^\dagger(\xi) \Phi_1(\xi))^{-1}\Phi_1^\dagger(\xi)\,,
\end{equation}
where
\begin{subequations}
	\begin{gather}
		\Phi_1(\xi)=e^{-ik_1 \xi \Sigma_3}(\phi_1,\phi_2)\,,\\
		\phi_1=\begin{pmatrix}
			{\pmb \delta}_1\\
			{\pmb \gamma}_1
		\end{pmatrix}\,,~~
		\phi_2=\begin{pmatrix}
			{\pmb \tau}_1\\
			{\pmb \omega}_1
		\end{pmatrix}\,,~~	{\pmb \delta}_1,{\pmb \gamma}_1,{\pmb \tau}_1,{\pmb \omega}_1\in \mathbb{C}^2\,,~~j=1,2\,,
	\end{gather}
\end{subequations}
and for a generic composite breather, we assume that $\det({\pmb \delta}_1\,, {\pmb \tau}_1) \neq 0$ and $\det({\pmb \gamma}_1\,, {\pmb \omega}_1) \neq 0$, while for a non generic composite breather, we can either assume $\det({\pmb \delta}_1\,, {\pmb \tau}_1) = 0$ and $\det({\pmb \gamma}_1\,, {\pmb \omega}_1) \neq 0$, or $\det({\pmb \delta}_1\,, {\pmb \tau}_1) \neq 0$ and $\det({\pmb \gamma}_1\,, {\pmb \omega}_1) = 0$.
Direct calculations give the following estimates for each case.
\begin{itemize}
	\item[1.] {\bf Case $\det({\pmb \delta}_1\,, {\pmb \tau}_1) \neq 0$ and $\det({\pmb \gamma}_1\,, {\pmb \omega}_1) \neq 0$} (generic composite breather):
	\begin{gather}
		\Pi_1 \sim \begin{pmatrix}
			I_2 & 0_2\\
			0_2 & 0_2
		\end{pmatrix}, \quad \xi \to + \infty, \quad \quad \Pi_1 \sim \begin{pmatrix}
			0_2 & 0_2\\
			0_2 & I_2
		\end{pmatrix}, \quad \xi \to - \infty,
	\end{gather}
	and therefore
	\begin{subequations}
		\begin{gather}
			T_{k_1,\Pi_1}(k) \sim \begin{pmatrix}
				\frac{k-k_1}{k-k_1^*} \frac{k+k_1^*}{k+k_1}I_2 & 0_2\\
				0_2 & I_2
			\end{pmatrix},\quad \xi \to + \infty,\\
			T_{k_1,\Pi_1}(k) \sim \begin{pmatrix}
				I_2 & 0_2\\
				0_2 & \frac{k-k_1}{k-k_1^*} \frac{k+k_1^*}{k+k_1}I_2
			\end{pmatrix},\quad \xi \to - \infty.
		\end{gather}
	\end{subequations}
	\item[2.] {\bf Case $\det({\pmb \delta}_1\,, {\pmb \tau}_1) = 0$ and $\det({\pmb \gamma}_1\,, {\pmb \omega}_1) \neq 0$} (non generic composite breather):
	\begin{subequations}
		\begin{gather}
			\Pi_1 \sim \begin{pmatrix}
				\pi_{{\pmb \delta}_1} & 0_2\\
				0_2 & \pi_{{\pmb \omega}_{\alpha}}
			\end{pmatrix}, \quad \xi \to + \infty,
			\quad \quad \Pi_1 \sim \begin{pmatrix}
				0_2 & 0_2\\
				0_2 & I_2
			\end{pmatrix}, \quad \xi \to - \infty,\\
			\pi_{{\pmb \delta}_1} = \frac{{\pmb \delta}_1 {\pmb \delta}_1^\dagger}{{\pmb \delta}_1^\dagger {\pmb \delta}_1}, \quad \pi_{{\pmb \omega}_{\alpha}} = \frac{{\pmb \omega}_{\alpha} {\pmb \omega}_{\alpha}^\dagger}{{\pmb \omega}_{\alpha}^\dagger {\pmb \omega}_{\alpha}}, \quad {\pmb \omega}_{\alpha} = {\pmb \omega}_1 - \alpha {\pmb \gamma}_1,
		\end{gather}
	\end{subequations}
	and therefore
	\begin{subequations}
		\begin{gather}
			T_{k_1,\Pi_1}(k) \sim\begin{pmatrix}
			T_{k_1,\Pi_1}^{\mathrm{up}}(k) & 0_2\\
				0_2 & T_{k_1,\Pi_1}^{\mathrm{dn}}(k)
			\end{pmatrix}, \quad \xi \to + \infty\\
			T_{k_1,\Pi_1}(k) \sim \begin{pmatrix}
				I_2 & 0_2\\
				0_2 & \frac{k-k_1}{k-k_1^*} \frac{k+k_1^*}{k+k_1}I_2
			\end{pmatrix}, \quad \xi \to - \infty,
		\end{gather}
	\end{subequations}
where we introduced the notations $^{\mathrm{up}}$ and $^{\mathrm{dn}}$ for the upper/lower diagonal blocks of the Darboux matrix, and
\begin{subequations}
\begin{gather}
T_{k_1,\Pi_1}^{\mathrm{up}}(k) = \left( I_2 - \frac{k}{k_1}\frac{k_1^*-k_1}{k_1^*-k}\pi_{{\pmb \delta}_1} \right) \left( I_2 - \frac{k}{k_1^*}\frac{k_1^*-k_1}{k_1+k} \sigma_2 \pi_{{\pmb \delta}_1}^* \sigma_2 \right),\\
T_{k_1,\Pi_1}^{\mathrm{dn}}(k) = 				\left( I_2 - \frac{k}{k_1}\frac{k_1^*-k_1}{k_1^*-k}\pi_{{\pmb \omega}_{\alpha}} \right) \left( I_2 - \frac{k}{k_1^*}\frac{k_1^*-k_1}{k_1+k} \sigma_2 \pi_{{\pmb \omega}_{\alpha}}^* \sigma_2 \right).
\end{gather}
\end{subequations}	
	\item[3.] {\bf Case $\det({\pmb \delta}_1\,, {\pmb \tau}_1) \neq 0$ and $\det({\pmb \gamma}_1\,, {\pmb \omega}_1) = 0$} (non-generic composite breather):
	\begin{subequations}
		\begin{gather}
			\Pi_1 \sim \begin{pmatrix}
				I_2 & 0_2\\
				0_2 & 0_2
			\end{pmatrix}, \quad \xi \to + \infty, \quad \quad \Pi_1 \sim \begin{pmatrix}
				\pi_{{\pmb \tau}_\alpha} & 0_2\\
				0_2 & \pi_{{\pmb \gamma}_1}
			\end{pmatrix}, \quad \xi \to - \infty,\\
			\pi_{{\pmb \tau}_\alpha} = \frac{{\pmb \tau}_\alpha {\pmb \tau}_\alpha^\dagger}{{\pmb \tau}_\alpha^\dagger {\pmb \tau}_\alpha}, \quad {\pmb \tau}_\alpha={\pmb \tau}_1-\alpha {\pmb \delta}_1   , \quad\pi_{{\pmb \gamma}_1} = \frac{{\pmb \gamma}_1 {\pmb \gamma}_1^\dagger}{{\pmb \gamma}_1^\dagger {\pmb \gamma}_1},
		\end{gather}
	\end{subequations}
	and therefore
	\begin{subequations}
		\begin{gather}
			T_{k_1,\Pi_1}(k) \sim \begin{pmatrix}
				\frac{k-k_1}{k-k_1^*}\frac{k+k_1^*}{k+k_1}I_2 & 0_2\\
				0_2 & I_2
			\end{pmatrix}, \quad \xi \to + \infty,\\			
			T_{k_1,\Pi_1}(k) \sim 			\begin{pmatrix}
				T_{k_1,\Pi_1}^{\mathrm{up}}(k) & 0_2\\
				0_2 & T_{k_1,\Pi_1}^{\mathrm{dn}}(k)
			\end{pmatrix}, \quad \xi \to - \infty,
		\end{gather}
	\end{subequations}
where here
\begin{subequations}
\begin{gather}
T_{k_1,\Pi_1}^{\mathrm{up}}(k) = \left( I_2 - \frac{k}{k_1}\frac{k_1^*-k_1}{k_1^*-k}\pi_{{\pmb \tau}_\alpha} \right) \left( I_2 - \frac{k}{k_1^*}\frac{k_1^*-k_1}{k_1+k} \sigma_2 \pi_{{\pmb \tau}_\alpha}^* \sigma_2 \right),\\
T_{k_1,\Pi_1}^{\mathrm{dn}}(k) =
\left( I_2 - \frac{k}{k_1}\frac{k_1^*-k_1}{k_1^*-k}\pi_{{\pmb \gamma}_1} \right) \left( I_2 - \frac{k}{k_1^*}\frac{k_1^*-k_1}{k_1+k} \sigma_2 \pi_{{\pmb \gamma}_1}^* \sigma_2 \right).
\end{gather}
\end{subequations}
\end{itemize}
Now, using the above estimates and Eq.~\eref{newe:2}, we can compute the scattering coefficients $a(k)$ and $\bar{a}(k)$ (note that consistently with pure soliton solutions, we find $b(k)=0=\bar{b}(k)$). To simplify the expressions, we use identities such as
\begin{eqnarray}
	\left( I_2 - \frac{k}{k_1}\frac{k_1^*-k_1}{k_1^*-k}\pi \right)^{-1}&=&\left( I_2 - \frac{k}{k_1^*}\frac{k_1^*-k_1}{k-k_1}\pi \right)\\&=&
	\frac{k_1}{k_1^*}\frac{k-k_1^*}{k-k_1}\left( I_2 - \frac{k}{k_1}\frac{k_1^*-k_1}{k_1^*-k}\sigma_2\pi^*\sigma_2 \right),
\end{eqnarray}
valid for any rank-$1$ two-dimensional projector $\pi$ (recall that $\sigma_2\pi^*\sigma_2=I_2 -\pi$).
\begin{itemize}
	\item[1.] {\bf Case $\det({\pmb \delta}_1\,, {\pmb \tau}_1) \neq 0$ and $\det({\pmb \gamma}_1\,, {\pmb \omega}_1) \neq 0$} (generic composite breather):
	\begin{gather}
		a(k) = \frac{k-k_1}{k-k_1^*} \frac{k+k_1^*}{k+k_1}I_2, \quad \bar{a}(k) = \frac{k-k_1^*}{k-k_1} \frac{k+k_1}{k+k_1^*}I_2,
	\end{gather}
which implies that the points $k=k_1\,, -k_1^*$, are double zeros for the determinant of $a(k)$ on the upper half plane, but still simple poles for the RHP, because $a(k)$ becomes the identical zero matrix when evaluated at these points (resp., similar statement holds for the points $k=-k_1\,, k_1^*$ and the transmission coefficient $\bar{a}(k)$).
	\item[2.] {\bf Case $\det({\pmb \delta}_1\,, {\pmb \tau}_1) = 0$ and $\det({\pmb \gamma}_1\,, {\pmb \omega}_1) \neq 0$} (non-generic composite breather):
	\begin{subequations}
		\begin{gather}
			a(k) = \left( I_2 - \frac{k}{k_1}\frac{k_1^*-k_1}{k_1^*-k}\pi_{{\pmb \delta}_1} \right) \left( I_2 - \frac{k}{k_1^*}\frac{k_1^*-k_1}{k_1+k} \sigma_2 \pi_{{\pmb \delta}_1}^* \sigma_2 \right),\\
\bar{a}(k) = \left( I_2 + \frac{k}{k_1}\frac{k_1^*-k_1}{k+k_1^*}\pi_{{\pmb \omega}_{\alpha}} \right) \left( I_2 - \frac{k}{k_1^*}\frac{k_1^*-k_1}{k-k_1} \sigma_2 \pi_{{\pmb \omega}_{\alpha}}^* \sigma_2 \right),					
		\end{gather}
	\end{subequations}
	and in this case one can show that
	\begin{gather}
		\det a(k) = \frac{k-k_1}{k-k_1^*}\frac{k+k_1^*}{k+k_1}, \quad \det \bar{a}(k) = \frac{k+k_1}{k+k_1^*}\frac{k-k_1^*}{k-k_1},
	\end{gather}
which implies that the determinant of $a(k)$ has simple zeros at the points $k_1\,, -k_1^*$ on the upper half plane, and that $a(k)$ is nonzero when evaluated at these points (resp., the determinant of $\bar{a}(k)$ has simple zeros at the points $-k_1\,, k_1^*$ on the lower half plane, and $\bar{a}(k)$ is nonzero when evaluated at these points).
	\item[3.] {\bf Case $\det({\pmb \delta}_1\,, {\pmb \tau}_1) \neq 0$ and $\det({\pmb \gamma}_1\,, {\pmb \omega}_1) = 0$} (non-generic composite breather):
	\begin{subequations}
		\begin{gather}
a(k) = \left( I_2 - \frac{k}{k_1^*}\frac{k_1^*-k_1}{k_1+k} \pi_{{\pmb \tau}_\alpha} \right)\left(I_2 - \frac{k}{k_1}\frac{k_1^*-k_1}{k_1^*-k} \sigma_2 \pi_{{\pmb \tau}_\alpha}^* \sigma_2\right),\\		
			\bar{a}(k) =\left( I_2 - \frac{k}{k_1^*}\frac{k_1^*-k_1}{k-k_1} \pi_{{\pmb \gamma}_1} \right)\left( I_2 + \frac{k}{k_1}\frac{k_1^*-k_1}{k+k_1^*}\sigma_2 \pi_{{\pmb \gamma}_1}^* \sigma_2 \right),
		\end{gather}
	\end{subequations}
	which again gives us
	\begin{gather}
		\det a(k) = \frac{k-k_1}{k-k_1^*}\frac{k+k_1^*}{k+k_1}, \quad \det \bar{a}(k) = \frac{k+k_1}{k+k_1^*}\frac{k-k_1^*}{k-k_1},
	\end{gather}
	and we reach the same conclusion as before for the zeros of $\det a(k)$ and $\det \bar{a}(k)$.
\end{itemize}
The above computations show that in both cases the transmission coefficients of a non-generic composite breather lead to rank-1 norming constants (see Sec.~2), and hence when either $\det({\pmb \delta}_1\,, {\pmb \tau}_1) = 0$ or $\det({\pmb \gamma}_1\,, {\pmb \omega}_1) = 0$, a non generic composite breather reduces to a fundamental breather. The above calculations show this spectrally and one can also see this at the level of the solution. For instance, when ${\pmb \tau_1}= \alpha {\pmb \delta}_1$, \eref{e:171compbr} reduces to $(141)$ with the role of ${\pmb \gamma}_1$ for the fundamental breather now played by the combination ${\pmb \omega}_1-\alpha{\pmb \gamma}_1$ where ${\pmb \omega}_1, {\pmb \gamma}_1$ are the vectors characterizing the (non generic) composite breather.

\end{document}